\documentclass[11pt,a4paper]{article}
\usepackage{amsmath,amsfonts,amssymb,mathtools,amsthm,newlfont,graphicx,amscd,bbm,enumerate,galois,mathrsfs,xypic,geometry,bbm,xypic}
\usepackage{simplewick}
\usepackage{booktabs}
\usepackage{multirow}
\usepackage{bigstrut}
\usepackage{tabularx,extarrows}
\usepackage{tikz}
\usetikzlibrary{cd}

\usepackage{hyperref}

\usepackage{color}

\newcommand*{\llim}{\varprojlim}                
\newcommand*{\fps}[1]{[\![#1]\!]}        
\newcommand*{\fls}[1]{(\!(#1)\!)}        

\newcommand{\lmd}{\lambda}

\newcommand{\Lmd}{\Lambda}

\newcommand{\p}{\partial}

\newcommand{\afa}{\alpha}

\newcommand{\veps}{\varepsilon}
\newcommand{\fai}{\varphi}

\newcommand{\rme}{{\mathrm e}}

\newcommand*{\mfkg}{\mathfrak{g}}

\newcommand{\mcalA}{\mathcal{A}}

\newcommand{\mcalF}{\mathcal{F}}

\newcommand{\mcalI}{\mathcal{I}}

\newcommand{\mcalL}{\mathcal{L}}
\newcommand{\mcalM}{\mathcal{M}}
\newcommand{\mcalP}{\mathcal{P}}

\newcommand{\mcalU}{\mathcal{U}}

\newcommand{\bbC}{\mathbb{C}}

\newcommand{\bbR}{\mathbb{R}}
\newcommand{\bbZ}{\mathbb{Z}}
\newcommand{\bft}{\mathbf{t}}

\newcommand{\bsv}{\boldsymbol{v}}

\newcommand{\bsg}{\boldsymbol{g}}
\newcommand{\bsu}{\boldsymbol{u}}
\newcommand{\bsz}{\boldsymbol{z}}
\newcommand{\bsI}{\boldsymbol{I}}
\newcommand{\bseta}{\boldsymbol{\eta}}

\newcommand*{\pp}[1]
  {\frac{\partial   }
        {\partial #1}
  }

\newcommand*{\pfrac}[2]
  {\frac{\partial #1}
        {\partial #2}
  }

  \newcommand*{\Bigset}[2]
  {
   \left\{ #1 \middle| #2 \right\}
  }

\newcommand{\beq}{\begin{equation}}
\newcommand{\eeq}{\end{equation}}

\DeclareMathOperator{\res}{Res}
\DeclareMathOperator{\diag}{diag}

\DeclareMathOperator{\Der}{Der}
\DeclareMathOperator{\grad}{grad}
\DeclareMathOperator*{\Res}{Res}       

\DeclareMathOperator{\td}{d\!}

\newtheorem{thm}{Theorem}[section]

\newtheorem{rmk}[thm]{Remark}

\newtheorem{lem}[thm]{Lemma}

\newtheorem{prop}[thm]{Proposition}
\newtheorem{defn}[thm]{Definition}
\newtheorem{ex}[thm]{Example}

\numberwithin{equation}{section}

\allowdisplaybreaks[4]

\begin{document}

\title{{Bihamiltonian structure of the $(n,1)$-type rational reductions of the 2D-Toda hierarchy}
	\footnotetext{{\footnotesize$^*$Corresponding author }
		}}
	\author{{
		Haonan Qu$^{1}$,~ Qiulan Zhao$^{*2}$} \\[1em]
		\footnotesize $^{1}$ School of Mathematics, Sun Yat-Sen University,
		 Guangzhou 510275, China; \\
		\footnotesize $^{2}$ College of Mathematics and Systems Science, \\
\footnotesize
        Shandong University of Science and Technology, Qingdao 266590, China\\[10pt]
\footnotesize
Email: quhn@mail.sysu.edu.cn,\, qlzhao@sdust.edu.cn
	}
    \date{}
	\maketitle
	\begin{abstract}
  We derive a local bihamiltonian structure
  for the rational reduction of the 2D-Toda hierarchy (RR2T) of $(n,1)$-type by direct computations,
  and construct an $(n+1)$-dimensional semisimple generalized Frobenius manifold with non-flat unity
  whose Principal Hierarchy contains its dispersionless flows.
	\end{abstract}
	\textbf{Keywords} 2D-Toda hierarchy, Ablowitz--Ladik hierarchy, Rational reduction,
                      Bihamiltonian structure, Generalized Frobenius manifold. \\[6pt]
    \textbf{MSC(2020)} 37K10, 37K25, 53D45


\tableofcontents


\section{Introduction}
The 2D-Toda hierarchy \cite{2dToda, Takasaki 2Dtoda, Ueno-Takasaki}
is a fundamental integrable hierarchy.
A special class of its reductions,
called the \textit{rational reduction of the 2D-Toda hierarchy} (abbreviated as RR2T)
and introduced by Brini and his collaborators \cite{rational-reduction},
plays an important role in the study of integrable systems and mathematical physics,
and is closely related to equivariant orbifold cohomology and equivariant mirror symmetry.
The Lax operator of the $(n,n')$-type RR2T,
where $(n,n')\in \mathbb{Z}_+^2$, takes the form
\begin{equation}\label{Lnn'}
  L_{(n,n')} =
  \left(
    1+W_1\Lmd^{-1}+\cdots + W_{n'}\Lmd^{-n'}
  \right)^{-1}
  \left(
    \Lmd^n + U_1\Lmd^{n-1} + \cdots + U_n
  \right),
\end{equation}
where $\Lmd:=\rme^{\veps\p_x}$ is the shift operator, and $W_i, U_j$ are unknown functions.
Many well-known integrable hierarchies,
such as the bi-graded Toda hierarchy \cite{extend bi Toda}
and the $q$-deformed Gelfand--Dickey hierarchy \cite{q-deformed GD},
can be obtained as further reductions of RR2T.
In the special case of $(n,n')=(1,1)$,
it reduces to the Ablowitz--Ladik hierarchy \cite{Ablowitz},
also known as the relativistic Toda hierarchy \cite{Kupershmidt, rtoda}.
The Ablowitz--Ladik hierarchy admits a local tri-Hamiltonian structure
at the full-dispersive level \cite{AL-triham},
and its connections with the Gromov--Witten invariants of local $\mathbb{CP}^1$
and with the theory of generalized Frobenius manifolds with non-flat unity have been further investigated in
\cite{local CP1, Brini 2012, GFM, LiuQuZhang, extend AL, LiuResolvedConifold}.

The main object of this paper is the bihamiltonian structure of the $(n,1)$-type RR2T.
It is known that the 2D-Toda hierarchy admits a compatible tri-Hamiltonian structure
constructed via the $R$-matrix formalism \cite{trh-toda},
and the second Hamiltonian structure can be consistently reduced to every $(n,n')$-type RR2T \cite{rational-reduction}.
However, the reduction procedure for the first and third Hamiltonian structures
does not appear to extend in a straightforward way to the general $(n,n')$-case.
At the dispersionless level,
the dispersionless limit of RR2T admits a bihamiltonian structure for several subcases,
and even admit a tri-Hamiltonian structure in the special cases $n=n'$ \cite{rational-reduction}.
While these results suggest the existence of rich Hamiltonian structures for RR2T,
the corresponding theory at the full-dispersive level is less well understood.
In this paper, we show that the $(n,1)$-type RR2T admits a local bihamiltonian structure at the full-dispersive level.
Unlike previous approaches based on reductions of the 2D-Toda Hamiltonian structures,
we construct the bihamiltonian structure of the $(n,1)$-type RR2T directly from its defining equations.
More precisely, we first derive Hamiltonian formulations of the flows by explicit computation,
and then use the super-variable formalism \cite{Liu lecture notes, Jacobi structure}
to verify the Jacobi identities and compatibility of the associated Poisson brackets.

Another motivation for the present work comes from the theory of Frobenius manifolds.
Introduced by Dubrovin \cite{Dubrovin1996} in the 1990s,
Frobenius manifolds provide a geometric framework for two-dimensional topological field theory \cite{Witten-1, Witten-2}
and have deep connections with Gromov--Witten theory, integrable systems, and related areas.
Associated to every Frobenius manifold is an integrable hierarchy of hydrodynamic type,
called the Principal Hierarchy, which carries a natural bihamiltonian structure \cite{Normal forms}.
In recent years,
it has become necessary to consider a class of generalized Frobenius manifolds with non-flat unity
\cite{Brini 2012, double hurwitz}.
This class of structures has been further studied in connection with integrable systems,
and the Dubrovin--Zhang theory of topological deformations of Principal Hierarchies
has been extended to this setting \cite{Liu2024, GFM, LiuQuZhang}.
This development has led to applications to integrable hierarchies
such as the $q$-deformed KdV hierarchy and the Ablowitz--Ladik hierarchy \cite{Liu2024, extend AL}.
These developments naturally raise the question of finding further examples
relating generalized Frobenius manifolds and integrable systems.
In this paper,
we construct an $(n+1)$-dimensional semisimple generalized Frobenius manifold $\mcalM_{(n,1)}$ with non-flat unity
via the Landau--Ginzburg superpotential formalism,
and show that the dispersionless limit of the $(n,1)$-type RR2T belongs to its Principal Hierarchy.

The paper is organized as follows.
In Sect.\,2, we introduce the Lax formulation of the $(n,1)$-type RR2T
together with two Hamiltonian operators $\mcalP_1$ and $\mcalP_2$,
and derive Hamiltonian formulations for its flows.
In Sect.\,3, we prove that the pair $(\mcalP_1,\mcalP_2)$
forms a bihamiltonian structure by means of the super-variable formalism.
In Sect.\,4, we construct a generalized Frobenius manifold $\mcalM_{(n,1)}$
with non-flat unity and show that the dispersionless limit of the $(n,1)$-type RR2T
belongs to the Principal Hierarchy of $\mcalM_{(n,1)}$.
Finally, in Sect.\,5, we summarize the main results and discuss several directions for future research.

\section{Basic properties of the $(n,1)$-type RR2T}

In this section, we briefly review the completed differential polynomial ring and the Lie algebra of pseudo-difference operators,
and then present the Lax and Hamiltonian formulations of the $(n,1)$-type RR2T.

\subsection{The Lie algebra of pseudo-difference operators}

Although the hierarchy under consideration is a discrete integrable system involving difference operators,
we begin by recalling the notion of differential operators.

Let $\mcalM$ be an $m$-dimensional manifold,
and let $\mcalU \subseteq \mcalM$ be an open subset equipped with local coordinates $\bsv := (v^\afa)_{1 \leq \afa \leq m}$.
Let us introduce the polynomial ring
\[
  R := C^\infty(\mcalU)[v^{\afa,s} \,|\, 1\leq\afa\leq m,\, s\in\bbZ_{+}]
\]
generated by the family of independent formal variables $\{v^{\afa,s}\}$, which are referred to as \textit{jet variables}.
We also set $v^{\afa,0}:= v^\afa$. The ring $R$ admits the \textit{differential grading} given by
\[
  \deg_\p f = 0,\quad \deg_\p v^{\afa,s}=s
\]
for all $f\in C^\infty(\mcalU)$, $1\leq\afa\leq m$ and $s\geq 0$.
Let $I_n$ be the ideal of $R$ generated by monomials whose differential degree is at least $n$,
then the filtration of ideals $I_1\supset I_2\supset I_3\supset\cdots$
gives rise to the \textit{completed differential polynomial ring}
\begin{equation}\label{completion}
  \mcalA:=\llim_{n}R/I_n.
\end{equation}
An element $f\in\mcalA$ is usually written as a formal series
\[
  f = \sum_{s\geq 0} \veps^sf_s,\quad \text{where $f_s\in R$ such that $\deg_\p f_s = s$}.
\]
Note that $\veps$ is just a formal parameter used to track the differential degree, and can be omitted.
Introduce the differential operator
\begin{equation}\label{operator px}
  \p_x:=\sum_{\afa=1}^{m}\sum_{s\geq 0}v^{\afa,s+1}\pp{v^{\afa,s}} \in \Der(\mcalA),
\end{equation}
and then it is clear that $v^{\afa,s}=\p_x^s v^\afa$.
Let $\mcalF:=\mcalA/\p_x\mcalA$ be the space of \textit{local functionals}, and
  \begin{align*}
    \int\td x\colon\, \mcalA &\to\, \mcalF, \quad
    f \mapsto\, \int f\td x
  \end{align*}
 be the quotient map.
Note that each operator $\mcalP\in\mcalA\fps{\veps\p_x}$ acts on $\mcalA$ naturally,
and can be regarded as a linear endomorphism of $\mcalA$. For each $\mcalP\in\mcalA\fps{\veps\p_x}$ of the form
\[
  \mcalP = \sum_{s\geq 0}\mcalP_s (\veps\p_x)^s,\quad \mcalP_s\in\mcalA,
\]
introduce the adjoint operator
$
  \mcalP^\dag := \sum_{s\geq 0} (-\veps\p_x)^s\circ \mcalP_s,
$
then it is well known that the following ``integration by parts formula'' holds true
\begin{equation}\label{by parts-260329}
  \int f(\mcalP g)\td x = \int (\mcalP^\dag f)g\td x,\quad \forall\, f,g\in\mcalA.
\end{equation}
The following maps are called \textit{variational derivatives}:
\[
  \frac{\delta}{\delta v^\afa}\colon \mcalF\to\mcalA,\quad
  \int f\td x\mapsto \sum_{s\geq 0}(-\p_x)^s\pfrac{f}{v^{\afa,s}},\quad 1\leq\afa\leq m.
\]
They are well-defined, and are uniquely characterized by the relation
\begin{equation}\label{delta relation-260329}
  \delta F = \int \delta v^\afa\cdot\frac{\delta F}{\delta v^\afa}\td x,\quad \forall\, F\in\mcalF.
\end{equation}

To study discrete integrable hierarchies, we introduce the \textit{shift operator}
\begin{equation}
  \Lmd := \rme^{\veps\p_x} = \sum_{k\geq 0}\frac{(\veps\p_x)^k}{k!} \in \mcalA\fps{\veps\p_x},
\end{equation}
which is also referred to as a difference operator. For each $s\in\bbC$ and $f\in\mcalA$, we set
\begin{equation}
  f^{[s]} := \Lmd^s(f) = \rme^{\veps s\p_x}f \in \mcalA.
\end{equation}
It is clear that $\Lmd^\dag = \Lmd^{-1}$,
and $(fg)^{[s]}=f^{[s]}g^{[s]}$ for all $f,g\in\mcalA$ and $s\in\bbC$. In particular,
$\Lmd$ is a ring automorphism of $\mcalA$.

Introduce the spaces of formal Laurent series with respect to $\Lmd^{\mp}$ as follows
\begin{align}
  \mfkg^- &:=\, \mcalA\fls{\Lmd^{-1}}
=
  \Bigset{\sum_{k\leq k_0}a_k\Lmd^k \,}{\, k_0\in\bbZ,\, a_k\in\mcalA},
\\
  \mfkg^+ &:=\, \mcalA\fls{\Lmd}
=
  \Bigset{\sum_{k\geq k_0}a_k\Lmd^k \,}{\, k_0\in\bbZ,\, a_k\in\mcalA}.
\end{align}
It is known that each $\mfkg^{\mp}$ is an associative algebra whose product is defined by
\[
  \left(
    \sum_k a_k\Lmd^k
  \right)\cdot
  \left(
    \sum_\ell b_\ell\Lmd^\ell
  \right)
=
  \sum_{k,\ell}a_k b_\ell^{[k]}\Lmd^{k+\ell}.
\]
Moreover, each $\mfkg^{\mp}$ is a Lie algebra with respect to the commutator $[X,Y]:=XY-YX$.
Elements in $\mfkg^-$ or $\mfkg^+$ are called \textit{pseudo-difference operators} of the first type or of the second type, respectively.
For each $X=\sum_{k\in\bbZ}a_k\Lmd^k\in\mfkg^-$ or $\mfkg^+$, we denote
\begin{equation}
  X_+:=\sum_{k\geq 0} a_k\Lmd^k,\quad
  X_-:=\sum_{k<0} a_k\Lmd^k,\quad
  \Res X := a_0.
\end{equation}
It is well-known that for each $X,Y\in\mfkg^-$ or $\mfkg^+$,
\begin{equation}\label{res commutator-260329}
  \int \Res [X,Y] \td x = 0.
\end{equation}

\subsection{Lax formulation}

Now let us introduce the Lax formulation of the $(n,1)$-type RR2T.
Fix an integer $n \geq 1$, and let $\bsv:=(v^1, \cdots, v^{n+1})$ be a family of local coordinates on an $(n+1)$-dimensional manifold.
We relabel these local coordinates as
\begin{equation}
  U_\afa := v^\afa, \quad W := v^{n+1}, \quad 1 \leq \afa \leq n.
\end{equation}
For convenience, we introduce the following additional notation:
\begin{equation}\label{short notation}
  U_0:= 1,\quad U_k := 0 \,\,\text{if $k>n$ or $k<0$}.
\end{equation}
In particular, we emphasize that $U_{n+1}=0$, which is NOT equal to $W$.

We introduce the difference operators $A, B \in \mfkg^-\cap\mfkg^+$ as follows
\begin{align}
 A:=\Lmd^n + U_1\Lmd^{n-1}+ U_2\Lmd^{n-2}+\cdots+U_n,\qquad
 B:= 1- W\Lmd^{-1},
\end{align}
then there exist unique operators $A^{-1}\in\mfkg^+$ and $B^{-1}\in\mfkg^-$ such that
$AA^{-1}=A^{-1}A=1$ and $BB^{-1}=B^{-1}B=1$.
Then we introduce the operators $L, \tilde L\in\mfkg^-$ and $M, \tilde M\in\mfkg^+$ as follows
\begin{equation} \label{Lax L}
  L:= B^{-1}A,\qquad \tilde L:=AB^{-1},
\end{equation}
\begin{equation}
  M:= A^{-1}B,\qquad \tilde M:=BA^{-1}.
\end{equation}
Note that
the Lax operator $L$ in \eqref{Lax L} is precisely the special case of \eqref{Lnn'} with $n'=1$ and $W_1\mapsto -W$.
The operators $L$ and $\tilde L$ have the form
\[
  L = \Lmd^n + \sum_{s<n}a_s^1\Lmd^s,\quad
  \tilde L = \Lmd^n + \sum_{s<n}\tilde a^1_s\Lmd^s
\]
for certain $a_s^1, \tilde a^1_s\in\mcalA$. It can be verified that there exist unique operators
$L^{\frac 1n}, \tilde L^{\frac 1n}\in\mfkg^-$ such that
$(L^{\frac 1n})^n = L$ and $(\tilde L^{\frac 1n})^n = \tilde L$.

\begin{defn}
The rational reduction of the 2D-Toda hierarchy (RR2T) of $(n,1)$-type consists of the system of
evolutionary PDEs given by
\begin{align}
  \frac\veps{c_{i,k}}\pfrac A{t^{i,k}} &=\, \left(\tilde L^{k+\frac{i-1}{n}}\right)_+A - A\left(L^{k+\frac{i-1}{n}}\right)_+, \label{n1-1}\\
  \frac\veps{c_{i,k}}\pfrac B{t^{i,k}} &=\, \left(\tilde L^{k+\frac{i-1}{n}}\right)_+B - B\left(L^{k+\frac{i-1}{n}}\right)_+, \label{n1-2}\\
  \frac\veps{c_{0,-k-1}}\pfrac A{t^{0,-k-1}} &=\, \left(\tilde M^{k+1}\right)_-A - A\left(M^{k+1}\right)_-,\label{n1-3}\\
  \frac\veps{c_{0,-k-1}}\pfrac B{t^{0,-k-1}} &=\, \left(\tilde M^{k+1}\right)_-B - B\left(M^{k+1}\right)_-, \label{n1-4}
\end{align}
for all $i\in\{2,3,\dots, n+1\}$ and $k\in\bbZ_{\geq 0}$,
where the constants $c_{i,k}$ and $c_{0,-k-1}$ are given by
\begin{equation}\label{const coef cik}
  c_{i,k}:=
  \begin{cases}
    \frac{\sqrt{n}}{i-1}
    \frac{\Gamma\left(1+\frac{i-1}{n}\right)}{\Gamma\left(k+1+\frac{i-1}{n}\right)},& 2\leq i\leq n \\
    \frac{1}{(k+1)!},& i=n+1
  \end{cases},
  \quad
  c_{0,-k-1} := (-1)^k k!.
\end{equation}
\end{defn}

In the above systems, $\{U_\afa\}_{1\leq\afa\leq n}$ and $W$ are regarded as unknown functions of spatial variable $x$ and time variables
\[\bft := \Bigset{t^{i,k}, t^{0,-k-1}}{2\leq i\leq n+1,\, k\geq 0}.\]
These equations are compatible, and can be rewritten as the following Lax equations
\begin{align}
  \veps\pfrac L{t^{i,k}} &=\, c_{i,k}\left[\left(L^{k+\frac{i-1}{n}}\right)_+, L\right], \label{positive-Lax}\\
  \veps\pfrac L{t^{0,-k-1}} &=\, c_{0,-k-1}\left[\left(M^{k+1}\right)_-, L\right] \label{negative-Lax}
\end{align}
for all $i\in\{2,\dots,n+1\}$ and $k\in\bbZ_{\geq 0}$.
The flows $\pp{t^{i,k}}$ and $\pp{t^{0,-k-1}}$ are called \textit{positive flows} and
\textit{negative flows} of the $(n,1)$-type RR2T, respectively.

  We deliberately choose the indices of the above time variables $\bft$ and the constants $c_{i,k}$,
  in order to ensure that
  the dispersionless limit of this hierarchy belongs to the Principal Hierarchy of a certain generalized Frobenius manifold;
  see details in Sect.\,\ref{subsection:PH}.
  The symbols $\{\pp{t^{1,k}}, \pp{t^{0,k}}\}_{k \geq 0}$,
  which have not been introduced previously,
  are introduced as formal notations for two sequences of possible extended flows of this hierarchy;
  see Remark \ref{rmk: extended n1 RR2T} for further discussion and related conjectures.

  The RR2T of $(n,1)$-type can be further reduced to the $q$-deformed Gelfand--Dickey hierarchy
  \cite{Zhonglun Cao, double hurwitz}
  associated with the Lax operator
  \begin{equation}\label{Lax qGD}
     L_{q\mathrm{GD}} := \Lmd^n + U_1\Lmd^{n-1}+\cdots + U_{n-1}\Lmd
  \end{equation}
  by setting $U_n=W=0$ in $L$ \eqref{Lax L},
  see also in Example 2.8 of \cite{rational-reduction}.
  In the special case of $n=2$,
  the reduced hierarchy coincides with the
  \textit{$q$-deformed KdV hierarchy}, whose Lax operator is $L_{q\mathrm{KdV}}=\Lmd^2 + U \Lmd$.
  It was proved in \cite{Liu2024} that a certain extension of the $q$-deformed KdV hierarchy is Miura-equivalent
  to the topological deformation of the Principal Hierarchy of a one-dimensional generalized Frobenius manifold with non-flat unity.

\begin{rmk}
  Consider the spectral problem
  $
    L\psi = \lmd\psi
  $ 
  of the Lax equations \eqref{positive-Lax}--\eqref{negative-Lax},
  where $\psi$ and $\lmd$ are the wave function and the spectral parameter respectively.
  Let us apply a gauge transformation $\psi = \rho\phi$ to this spectral problem,
  where the function $\rho$ is determined by the relation $U_n\rho=-W\rho^{[-1]}$.
  Then the spectral problem is transformed to
  $
    \widetilde{L}\phi = \frac{1}{\lmd}\phi
  $, where
  \[
    \widetilde{L}:=
    \left(
      1 + \widetilde{U}_1\widetilde{\Lmd}^{-1} + \cdots +
      \widetilde{U}_n\widetilde{\Lmd}^{-n}
    \right)^{-1}
    \left(
      \widetilde{\Lmd} + \widetilde{W}
    \right),\quad
    \widetilde{\Lmd}:=\Lmd^{-1}
  \]
  is the Lax operator of $(1,n)$-type RR2T \eqref{Lnn'}, and
  \begin{equation}\label{n1 and 1n}
    \widetilde{W}=\frac{1}{U_n},\quad
    \widetilde{U}_k = (-1)^k\frac{U_{n-k}}{U_n}
    \prod_{\ell=1}^{k}
    \left(\frac{W}{U_n}\right)^{[\ell]},\quad 1\leq k\leq n.
  \end{equation}
Therefore the RR2T of types $(n,1)$ and $(1,n)$ are related via the transformation \eqref{n1 and 1n}
together with spatial reflection $\veps\mapsto -\veps$,
generalizing the result for $n=1$ in Sect.\,6 of \cite{AL-triham}.
\end{rmk}

\begin{ex}
The first negative flow $\pp{t^{0,-1}}$ has the form
\begin{align}
  \veps\pfrac{U_\afa}{t^{0,-1}}
&=\,
  U_{\afa-1}\left(\frac{W}{U_n}\right)^{[n+1-\afa]} - W\left(\frac{U_{\afa-1}}{U_n}\right)^{[-1]},
\quad 1\leq\afa\leq n,
  \\
  \veps\pfrac{W}{t^{0,-1}}
&=\,
  \frac{W}{U_n^{[-1]}} - \frac{W}{U_n}.
\end{align}
\end{ex}

\begin{ex}
  Note that the operators $L^k$, $\tilde{L}^k \in\mfkg^-$
  have the form
  \begin{equation}\label{coef a^k_s}
    L^k = \sum_{s\leq nk}a_s^k\Lmd^s,\quad
    \tilde L^k = \sum_{s\leq nk}\tilde a^k_s\Lmd^s
  \end{equation}
for each $k\geq 0$, where $a^k_s, \tilde a^k_s \in\mcalA$ are certain coefficients.
Under this notation, the flows $\pp{t^{n+1,k}}$ in  \eqref{n1-1}--\eqref{n1-2} can be rewritten as
\begin{align}
  \veps\pfrac{U_\afa}{t^{n+1,k}} &=\,
  \frac{1}{(k+1)!}
    \sum_{s=0}^{n-\afa}
      \left(
        \tilde a^{k+1}_s U_{\afa+s}^{[s]}
       -a^{k+1,[n-\afa-s]}_s U_{\afa+s}
      \right), \label{pUptn+1}\\
  \veps\pfrac{W}{t^{n+1,k}} &=\,
  \frac{1}{(k+1)!}
  \left(\tilde a^{k+1}_0 W - a^{k+1,[-1]}_0W \right) \label{pWptn+1}
\end{align}
for all $1\leq\afa\leq n$ and $k\geq 0$.
\end{ex}

\subsection{The first Hamiltonian operator $\mcalP_1$}

Let us study the bihamiltonian structure of the RR2T of type $(n,1)$.
We shall introduce two skew-symmetric operator-valued matrices
$\mcalP_a\in\mcalA\fps{\veps\p_x}\otimes\bbC^{(n+1)\times(n+1)}$
for $a=1,2$, such that the flows of the $(n,1)$-type RR2T
can be written in Hamiltonian form with respect to them.
It will be proved in Sect.\,\ref{section:check jacobi} that the associated brackets
\begin{align*}
  \{,\}_a\colon \mcalF\times\mcalF \to\, \mcalF, \quad 
  (F,G) \mapsto\,
  \int \sum_{i,j=1}^{n+1}
    \frac{\delta F}{\delta v^i}
    \left(
      \mcalP_a^{ij}\frac{\delta G}{\delta v^j}
    \right)\td x,\quad a=1,2
\end{align*}
are compatible Poisson brackets; in other words, $(\mcalP_1,\mcalP_2)$ forms a bihamiltonian structure.

Now we introduce the operator $\mcalP_1$.

\begin{thm}\label{thm: HamP1}
  The RR2T of $(n,1)$-type \eqref{n1-1}--\eqref{n1-4} can be represented by the following Hamiltonian systems:
  \begin{equation}\label{HamP1-eqn}
    \veps\pp{t^{i,k}}
    \begin{pmatrix}
      U_1 \\
      \vdots \\
      U_n \\
      W
    \end{pmatrix}
  =
    \mcalP_1
    \begin{pmatrix}
      \delta H_{i,k}/\delta U_1 \\
      \vdots \\
      \delta H_{i,k}/\delta U_n \\
      \delta H_{i,k}/\delta W
    \end{pmatrix}
  \end{equation}
for all $i\in\{2,3,\dots,n+1\}$, $k\geq 0$ or $i=0$, $k\leq -1$,
where the entries of $\mcalP_1$ are given by
\begin{align}
  \mcalP_1^{\afa\beta}&=\,
    \left(
     U_{\afa+\beta-n}\Lmd^{n-\afa} - \Lmd^{\beta-n}U_{\afa+\beta-n}
    \right)\notag \\
  &\qquad
    +\left(
      W\Lmd^{\beta-n-1}U_{\afa+\beta-n-1}-U_{\afa+\beta-n-1}\Lmd^{n+1-\afa}W
    \right), \label{250812-1309-1}
\\
  \mcalP_1^{\afa,n} &=\, W\Lmd^{-1}U_{\afa-1}-U_{\afa-1}\Lmd^{n+1-\afa}W, \notag
\\
  \mcalP_1^{n,n} &=\, W\Lmd^{-1}U_{n-1} - U_{n-1}\Lmd W,
\qquad
  \mcalP_1^{n,n+1} = (\Lmd-1)W,
  \notag
\\
  \mcalP_1^{\afa,n+1} &=\, \mcalP_1^{n+1,\afa}=\mcalP_1^{n+1,n+1}=0,\notag
\\
  \mcalP_1^{n,\afa}&=\, -\left(\mcalP_1^{\afa,n}\right)^\dag,\quad
  \mcalP_1^{n+1,n} = -\left(\mcalP_1^{n,n+1}\right)^\dag  \label{250812-1309-2}
\end{align}
for all $1\leq \afa,\beta\leq n-1$, and the Hamiltonians have the forms
\begin{align}
  H_{i,k}&=\,
    c_{i,k+1}
    \int\Res\left(L^{k+1+\frac{i-1}{n}}\right)\td x,\quad 2\leq i\leq n+1,\, k\geq -1, \label{Ham Hip-1}\\
  H_{0,-k-1}&=\,
    -c_{0,-k}\int\Res\left(M^k\right)\td x,\qquad k\geq 1, \label{Ham Hip-2}\\
  H_{0,-1} &=\, \int\log\frac{W}{U_n}\td x. \label{Ham Hip-3}
\end{align}
\end{thm}

  We shall prove only the case $i=n+1$ in \eqref{HamP1-eqn},
  the other cases of the above theorem are similar and are left to the reader.
  Let us first compute the variational derivatives $\frac{\delta H_{n+1,k}}{\delta U_\afa}$ and
  $\frac{\delta H_{n+1,k}}{\delta W}$
  by using \eqref{by parts-260329} and \eqref{delta relation-260329}.
  Notice that the operators $L^{k}B^{-1}$ have the forms
  \begin{equation}\label{coef b^k_s}
    L^k B^{-1}=\sum_{s\leq nk}b^k_s\Lmd^s,\quad k\geq 0,
  \end{equation}
  where $b^k_s\in\mcalA$ are certain coefficients.
  Using the above notation and \eqref{res commutator-260329}, we obtain
  \begin{align*}
  &\,
    \frac{1}{c_{n+1,k}}\delta H_{n+1,k} =
    \int\res\left(\delta L\cdot L^{k+1}\right) \td x\\
  =&\,
    \int\res\left(
      -B^{-1}\cdot\delta B\cdot L^{k+2}
      +B^{-1}\cdot\delta A\cdot L^{k+1}
    \right)\td x\\
  =&\,
    \int\res\left(
      -\delta B\cdot L^{k+2}B^{-1} + \delta A\cdot L^{k+1}B^{-1}
    \right)\td x\\
  =&\,
    \int\left(
      \delta W\cdot b_1^{k+2,[-1]}
     +\sum_{\afa=1}^{n}
       \delta U_\afa\cdot b_{\afa-n}^{k+1,[n-\afa]}
    \right)\td x,
  \end{align*}
here $b_s^{k,[\ell]}:=\Lmd^\ell(b_s^k)$,
therefore for $1\leq\afa\leq n$, we have
\begin{equation}\label{var-derivative}
  \frac{1}{c_{n+1,k}}
  \frac{\delta H_{n+1,k}}{\delta U_\afa} = b_{\afa-n}^{k+1,[n-\afa]},\quad
  \frac{1}{c_{n+1,k}}
  \frac{\delta H_{n+1,k}}{\delta W} = b_1^{k+2,[-1]}.
\end{equation}

To prove Theorem \ref{thm: HamP1}, we need the following two lemmas.

\begin{lem}
  The coefficients $a^k_s$, $\tilde a^k_s$ and $b^k_s$ defined in \eqref{coef a^k_s}, \eqref{coef b^k_s}
  satisfy the following equations for all $k\geq 0$, $1\leq\afa\leq n$ and $s\in\bbZ$:
  \begin{align}
  &
    \sum_{\ell=0}^{n-1}
      \left(
        \tilde a^k_{n-\afa-\ell}U_{n-\ell}^{[n-\afa-\ell]}
       -a^{k,[\ell]}_{n-\afa-\ell}U_{n-\ell}
      \right)
    =
      a^{k,[n]}_{-\afa}-\tilde a^k_{-\afa}, \label{coef lem 1}
\\&
    \tilde a^k_s - W^{[s+1]}\tilde a^k_{s+1}
  =
    a^k_s - Wa^{k,[-1]}_{s+1},   \label{coef lem 2}
  \end{align}
\begin{equation}\label{coef lem 3}
  \begin{cases}
    a^k_s  =  b^k_s - W^{[s+1]}b^k_{s+1}, \\
    \tilde a^k_s  =  b^k_s - W b^{k,[-1]}_{s+1},
  \end{cases}
\quad
  \begin{cases}
    a^{k+1}_s = b^k_{s-n} + \sum_{\ell=0}^{n-1}U^{[s-\ell]}_{n-\ell}b^k_{s-\ell}, \\
    \tilde a^{k+1}_s = b^{k,[n]}_{s-n} + \sum_{\ell=0}^{n-1}U_{n-\ell}b_{s-\ell}^{k,[\ell]}.
  \end{cases}
\end{equation}
Here we use the conventions that $a^k_s=\tilde a^k_s=b^k_s=0$ for $s>nk$.
\end{lem}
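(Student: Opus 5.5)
The plan is to derive all identities from a handful of operator identities obtained from the definitions $L=B^{-1}A$, $\tilde L=AB^{-1}$. The basic relations are
\[
  AL^k=\tilde L^kA,\qquad L^kB^{-1}=B^{-1}\tilde L^k,\qquad L^k = (L^kB^{-1})B,\qquad \tilde L^k = B(L^kB^{-1}),
\]
together with
\[
  L^{k+1}=(L^kB^{-1})A,\qquad \tilde L^{k+1}=A(L^kB^{-1}).
\]
The first two hold because $AB^{-1}A\cdots = \tilde L A$, so $A(B^{-1}A)^k=(AB^{-1})^kA$, and similarly $(B^{-1}A)^kB^{-1}=B^{-1}(AB^{-1})^k$. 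The remaining relations follow from $\tilde L^k = B^{-1}$-conjugation: $\tilde L^k=B L^k B^{-1}=B(L^kB^{-1})$, and $L^{k+1}=L^kB^{-1}A$, $\tilde L^{k+1}=(AB^{-1})^{k+1}=A(B^{-1}A)^kB^{-1}=A L^kB^{-1}$. All products live in $\mfkg^-$, since $A,B\in\mfkg^-$, so coefficient comparison is legitimate.

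Each identity in the lemma is then a comparison of coefficients of $\Lmd^s$, using the multiplication rule $a\Lmd^k\cdot b\Lmd^\ell=ab^{[k]}\Lmd^{k+\ell}$.

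For \eqref{coef lem 3}, first compute $L^k=\sum_s b^k_s\Lmd^s(1-W\Lmd^{-1})$. The coefficient of $\Lmd^s$ is $b^k_s-b^k_{s+1}W^{[s+1]}$, which is the first formula. Next, $\tilde L^k=(1-W\Lmd^{-1})\sum b^k_s\Lmd^s$ gives $b^k_s-Wb^{k,[-1]}_{s+1}$. Then $L^{k+1}=\sum_s b^k_s\Lmd^s\sum_{j=0}^n U_{n-j}\Lmd^{j}$, with $U_0=1$, gives $\sum_j b^k_{s-j}U^{[s-j]}_{n-j}$. Splitting off $j=n$ yields $b^k_{s-n}+\sum_{\ell=0}^{n-1}U^{[s-\ell]}_{n-\ell}b^k_{s-\ell}$. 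Analogously, $\tilde L^{k+1}=\sum_j U_{n-j}\Lmd^j\sum_s b^k_s\Lmd^s$ gives $\sum_j U_{n-j}b^{k,[j]}_{s-j}$.

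Identity \eqref{coef lem 2} follows by substituting the first block of \eqref{coef lem 3}. The left side is $\tilde a^k_s-W^{[s+1]}\tilde a^k_{s+1}=b^k_s-Wb^{k,[-1]}_{s+1}-W^{[s+1]}b^k_{s+1}+W^{[s+1]}Wb^{k,[-1]}_{s+2}$. The right side $a^k_s-Wa^{k,[-1]}_{s+1}$ expands to the same four terms. Alternatively, this is the $\Lmd^s$ coefficient of $\tilde L^kB=BL^k$.

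Identity \eqref{coef lem 1} is the $\Lmd^{n-\afa}$ coefficient of $AL^k=\tilde L^kA$, i.e. $AL^k-\tilde L^kA=0$. On the left side, the coefficient of $\Lmd^{m}$ in $\sum_j U_{n-j}\Lmd^j\sum_s a^k_s\Lmd^s$ is $\sum_{j}U_{n-j}a^{k,[j]}_{m-j}$. On the right side, it is $\sum_j\tilde a^k_{m-j}U^{[m-j]}_{n-j}$. Setting $m=n-\afa$ and separating the $j=n$ terms ($a^{k,[n]}_{-\afa}$ and $\tilde a^k_{-\afa}$) from $j=\ell\in\{0,\dots,n-1\}$ gives exactly \eqref{coef lem 1} after moving terms across.

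The main obstacle is purely bookkeeping: keeping the shift superscripts straight. In particular, one must write $\tilde a^k_{m-j}\Lmd^{m-j}\cdot U_{n-j}\Lmd^j$ as $\tilde a^k_{m-j}U_{n-j}^{[m-j]}\Lmd^m$, and check that the summation ranges are consistent with the convention $a^k_s=b^k_s=0$ for $s>nk$ and with $U_j=0$ outside $0\le j\le n$. The underlying operator identities are immediate.
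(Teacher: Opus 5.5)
Your proposal is correct and follows the paper's proof: the same operator identities ($AL^k=\tilde L^kA$, $\tilde L^kB=BL^k$, and the four factorizations $L^k=(L^kB^{-1})B$, $\tilde L^k=B(L^kB^{-1})$, $L^{k+1}=(L^kB^{-1})A$, $\tilde L^{k+1}=A(L^kB^{-1})$), followed by comparing coefficients of $\Lmd^s$, and your index and shift bookkeeping for each identity checks out. The only cosmetic difference is that you derive \eqref{coef lem 2} mainly by substituting \eqref{coef lem 3}; this is equivalent to the paper's use of $\tilde L^kB=BL^k$, which you also note as an alternative.
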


\begin{proof}
  The relations $\tilde L^kA=AL^k$ and $\tilde L^kB=BL^k$ imply \eqref{coef lem 1} and \eqref{coef lem 2} respectively,
  while \eqref{coef lem 3} are followed from
  \begin{align*}
     L^k=(L^kB^{-1})B,\quad
     \tilde L^k=B(L^kB^{-1}),
  \quad
     L^{k+1}=(L^kB^{-1})A,\quad
     \tilde L^{k+1}=A(L^kB^{-1}).
  \end{align*}
  The details are omitted.
\end{proof}

\begin{lem}\label{lem:b-n}
The following identity holds true for all $k\geq 0$:
\begin{equation}\label{eq:b-n}
  b^{k+1}_{-n}
=
  \frac{\Lmd-1}{\Lmd^n-1}\left(
    W\frac{\delta H_{n+1,k}}{\delta W}\right)
   +\sum_{\ell=1}^{n-1}
     \frac{\Lmd^{\ell-n}-1}{\Lmd^n-1}
     \left(
       U_\ell\frac{\delta H_{n+1,k}}{\delta U_\ell}
     \right).
\end{equation}
\end{lem}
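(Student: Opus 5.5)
The plan is to work at the level of the coefficients $a^k_s,\tilde a^k_s,b^k_s$ and to produce the identity multiplied through by $\Lmd^n-1$. Namely, I will show that
\begin{equation*}
  (\Lmd^n-1)\,b^{k+1}_{-n}
  = (\Lmd-1)\bigl(W b^{k+2,[-1]}_1\bigr)
   +\sum_{\ell=1}^{n}(\Lmd^{\ell-n}-1)\bigl(U_\ell\, b^{k+1,[n-\ell]}_{\ell-n}\bigr).
\end{equation*}
Then I will substitute the variational derivatives \eqref{var-derivative}. Here $b^{k+2,[-1]}_1$ and $b^{k+1,[n-\ell]}_{\ell-n}$ are, up to the common normalizing constant $c_{n+1,k}$, exactly $\delta H_{n+1,k}/\delta W$ and $\delta H_{n+1,k}/\delta U_\ell$. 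I read the statement of Lemma \ref{lem:b-n} with this normalization, i.e.\ with $H_{n+1,k}/c_{n+1,k}$. The term $\ell=n$ drops out since $\Lmd^0-1=0$, which explains why the sum in \eqref{eq:b-n} stops at $n-1$.

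To get the displayed identity, I will compute the single quantity $\tilde a^{k+2}_0-a^{k+2}_0$ in two ways using \eqref{coef lem 3}.

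\emph{First way}, via the $B$-relations with $k\mapsto k+2$ and $s=0$:
\begin{equation*}
  \tilde a^{k+2}_0-a^{k+2}_0 = W^{[1]}b^{k+2}_1 - W b^{k+2,[-1]}_1 = (\Lmd-1)\bigl(W b^{k+2,[-1]}_1\bigr).
\end{equation*}

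\emph{Second way}, via the $A$-relations with $k\mapsto k+1$ and $s=0$:
\begin{equation*}
  \tilde a^{k+2}_0-a^{k+2}_0 = (\Lmd^n-1)b^{k+1}_{-n}
  +\sum_{j=0}^{n-1}\Bigl(U_{n-j}\,b^{k+1,[j]}_{-j}-U^{[-j]}_{n-j}\,b^{k+1}_{-j}\Bigr).
\end{equation*}
Each summand equals $(1-\Lmd^{-j})\bigl(U_{n-j}b^{k+1,[j]}_{-j}\bigr)$. Relabelling $\ell=n-j$, this becomes $-(\Lmd^{\ell-n}-1)\bigl(U_\ell\, b^{k+1,[n-\ell]}_{\ell-n}\bigr)$. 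Equating the two expressions gives the displayed identity.

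The remaining step is to divide by $\Lmd^n-1$, and this is where the only subtlety lies. Neither $\Lmd^n-1$ nor $\Lmd-1$ is invertible in $\mcalA\fps{\veps\p_x}$, so the operators $\frac{\Lmd-1}{\Lmd^n-1}$ and $\frac{\Lmd^{\ell-n}-1}{\Lmd^n-1}$ must be given meaning directly. Write
\begin{equation*}
  \frac{\Lmd-1}{\Lmd^n-1}=(1+\Lmd+\cdots+\Lmd^{n-1})^{-1},
  \qquad
  \frac{\Lmd^{\ell-n}-1}{\Lmd^n-1}=-\Lmd^{\ell-n}\frac{1+\cdots+\Lmd^{n-\ell-1}}{1+\cdots+\Lmd^{n-1}}.
\end{equation*}
Both are well-defined power series in $\veps\p_x$, since $1+\Lmd+\cdots+\Lmd^{n-1}$ has constant term $n\neq0$.

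I then factor $\Lmd-1$ out of both sides of the displayed identity. Call the operator $(\Lmd-1)$ applied to $Y$ on the right side $(\Lmd-1)Y$, where $Y$ collects the terms after factoring. This yields $(\Lmd-1)\bigl[(1+\cdots+\Lmd^{n-1})b^{k+1}_{-n}-Y\bigr]=0$.

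Because $\Lmd-1=\veps\p_x(1+O(\veps\p_x))$, the bracket has vanishing $\p_x$. The main obstacle is the integration constant: $\ker\p_x$ in $\mcalA$ consists of constants. I expect to kill it by a degree argument. For $k\ge0$, both $b^{k+1}_{-n}$ and the right-hand side, when set to their dispersionless $\veps^0$ part and evaluated at $U_\ell=0$, $W=0$ (where $L=\Lmd^n$ and $B=1$), vanish. Hence the constant is zero and \eqref{eq:b-n} follows.
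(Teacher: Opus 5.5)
Your proposal is correct and follows the paper's argument: computing $\tilde a^{k+2}_0-a^{k+2}_0$ in two ways via \eqref{coef lem 3} is exactly how the paper obtains its expression for $(\Lmd^n-1)b^{k+1}_{-n}$, after which it substitutes \eqref{var-derivative}. The paper leaves two points implicit --- the missing factor $1/c_{n+1,k}$ in the statement, and excluding a constant in $\ker(\Lmd^n-1)$ when dividing by $\Lmd^n-1$ --- and you handle both soundly.
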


We remark that the operator $\frac{\Lmd-1}{\Lmd^n-1}$ should be interpreted as
\[
\frac{\rme^{\veps\p_x}-1}{\rme^{n\veps\p_x}-1}
=
  \frac 1n + \left(-\frac 12 + \frac{1}{2n}\right)\veps\p_x
  +\left(\frac{n}{12} - \frac 14 + \frac{1}{6n}\right)(\veps\p_x)^2 + O(\veps^3) \in\mcalA\fps{\veps\p_x},
\]
and $\frac{\Lmd^{\ell-n}-1}{\Lmd^n-1}$ is defined similarly.

\begin{proof}
  Following from \eqref{coef lem 3}, we obtain
  \begin{align*}
  &\,
    (\Lmd^n-1)b^{k+1}_{-n} = b^{k+1,[n]}_{-n}-b^{k+1}_{-n}\\
  =&\,
    \left(
      \tilde a_0^{k+2}-\sum_{\ell=0}^{n-1}U_{n-\ell}b_{-\ell}^{k+1,[\ell]}
    \right)
   -\left(
     a^{k+2}_0-\sum_{\ell=0}^{n-1}U_{n-\ell}^{[-\ell]}b_{-\ell}^{k+1}
   \right) \\
  =&\,
    \left(
      b^{k+2}_0 - Wb_1^{k+2,[-1]}
    \right)
   -\left(
     b^{k+2}_0-W^{[1]}b_1^{k+2}
    \right) 
    +\sum_{\ell=0}^{n-1}
    \left(
      U_{n-\ell}^{[-\ell]}b^{k+1}_{-\ell}
     -U_{n-\ell}b_{-\ell}^{k+1,[\ell]}
    \right)\\
  =&\,
    \left(
      W^{[1]}-W\Lmd^{-1}
    \right)b^{k+2}_1
   +\sum_{\ell=0}^{n-1}
     \left(
       U_{n-\ell}^{[-\ell]}
      -U_{n-\ell}\Lmd^\ell
     \right) b_{-\ell}^{k+1}.
  \end{align*}
Then, applying \eqref{var-derivative}, the lemma is proved.
\end{proof}

We now proceed to prove the Theorem \ref{thm: HamP1}.

\begin{proof}[Proof of the Theorem \ref{thm: HamP1}] For each $1\leq\afa\leq n$,
from \eqref{pUptn+1}, \eqref{coef lem 1} and \eqref{coef lem 3}, we obtain
\begin{align*}
&\,
  \frac{\veps}{c_{n+1,k}}\pfrac{U_\afa}{t^{n+1,k}}
=
  \left(
  \sum_{\ell=0}^{n-1}
  -\sum_{\ell=n-\afa+1}^{n-1}
  \right)
    \left(
      U_{n-\ell}^{[n-\afa-\ell]}
      \tilde a^{k+1}_{n-\afa-\ell}
     -U_{n-\ell}\,a^{k+1,[\ell]}_{n-\afa-\ell}
    \right)\\
=&\,
  \left(
    a_{-\afa}^{k+1,[n]}
   -\tilde a_{-\afa}^{k+1}
  \right)
 -\sum_{s=1-\afa}^{-1}
   \left(
     U_{\afa+s}^{[s]}\,\tilde a^{k+1}_s
    -U_{\afa+s}\,a_s^{k+1,[n-\afa-s]}
   \right) \\
=&\,
  \left(
    b_{-\afa}^{k+1}
   -b_{-\afa+1}^{k+1}W^{[1-\afa]}
  \right)^{[n]}
 -\left(
    b^{k+1}_{-\afa} - b^{k+1,[-1]}_{1-\afa}W
  \right) \\
 &\quad
   -\sum_{s=1-\afa}^{-1}
   \left(
     U_{\afa+s}^{[s]}
     \left(
       b^{k+1}_s - b^{k+1,[-1]}_{s+1}W
     \right)
    -U_{\afa+s}\left(
      b_s^{k+1} - b^{k+1}_{s+1}W^{[s+1]}
     \right)^{[n-\afa-s]}
   \right)\\
=&\,
  (\Lmd^n-1)b_{-\afa}^{k+1}
 +\left(
    W\Lmd^{-1}-\Lmd^{n+1-\afa}\circ W \circ \Lmd^{\afa-1}
  \right) b_{1-\afa}^{k+1}\\
&\quad
  +\sum_{s=1-\afa}^{-1}
    \left(
      U_{\afa+s}\Lmd^{n-\afa-s} - U_{\afa+s}^{[s]}
    \right)b_s^{k+1}\\
&\quad
  +\sum_{s=1-\afa}^{-1}
    \left(
      WU_{\afa+s}^{[s]}\Lmd^{-1}
     -W^{[n+1-\afa]}U_{\afa+s}\Lmd^{n-\afa-s}
    \right)b^{k+1}_{s+1}.
\end{align*}
Then using \eqref{var-derivative}, \eqref{eq:b-n} and notation \eqref{short notation},
by straightforward calculation we have
\begin{align}
  \veps\pfrac{U_\afa}{t^{n+1,k}}
&=\,
  \sum_{\beta=n-\afa}^{n-1}
    \left(
      U_{\afa+\beta-n}\Lmd^{n-\afa}
     -\Lmd^{\beta-n}U_{\afa+\beta-n}
    \right)
    \frac{\delta H_{n+1,k}}{\delta U_{\beta}}\notag \\
&\quad
  +\sum_{\beta=n+1-\afa}^{n}
    \left(
      W\Lmd^{\beta-n-1}U_{\afa+\beta-n-1}
     -U_{\afa+\beta-n-1}\Lmd^{n+1-\afa}W
    \right)\frac{\delta H_{n+1,k}}{\delta U_\beta}  \label{250806-1710-1}
\end{align}
for $1\leq \afa\leq n-1$, and
\begin{equation}
  \veps\pfrac{U_n}{t^{n+1,k}} = \Big((\Lmd-1)W\Big)\frac{\delta H_{n+1,k}}{\delta W}
  +\sum_{\beta=1}^{n}
    \left(
      W\Lmd^{\beta-n-1}U_{\beta-1}-U_{\beta-1}\Lmd W
    \right)
    \frac{\delta H_{n+1,k}}{\delta U_\beta}.  \label{250806-1710-2}
\end{equation}

Finally, by using \eqref{pWptn+1}, \eqref{coef lem 2}--\eqref{coef lem 3} and \eqref{var-derivative}, we obtain
\begin{align}
&\,
  \frac{\veps}{c_{n+1,k}}\pfrac{W}{t^{n+1,k}} =
    \tilde a_0^{k+1}W - a_0^{k+1,[-1]}W  = \tilde a^{k+1}_{-1}-a^{k+1}_{-1}  \notag\\
=&\,
  \left(
    b^{k+1}_{-1}-Wb_0^{k+1,[-1]}
  \right)
 -\left(
    b^{k+1}_{-1}-Wb^{k+1}_0
  \right) \notag\\
=&\,
  W(1-\Lmd^{-1})b^{k+1}_0
=
  W(1-\Lmd^{-1})
  \left(
  \frac{1}{c_{n+1,k}}
  \frac{\delta H_{n+1,k}}{\delta U_n}
  \right). \label{250806-1710-3}
\end{align}
Equations \eqref{250806-1710-1}--\eqref{250806-1710-3} imply the case $i=n+1$ of \eqref{HamP1-eqn}.
The proof of other cases are similar, and we omit its details. The theorem is proved.
\end{proof}

\subsection{The second Hamiltonian operator $\mcalP_2$}

Let us introduce the operator $\mcalP_2$.

\begin{thm} \label{thm: operator P2}
  The RR2T of $(n,1)$-type \eqref{n1-1}--\eqref{n1-4} can be represented by the following Hamiltonian systems:
  \begin{equation}\label{HamP2-eqn}
    \veps
    \left(k+\mu_i+\frac12\right)
    \pp{t^{i,k}}
    \begin{pmatrix}
      U_1 \\
      \vdots \\
      U_n \\
      W
    \end{pmatrix}
  =
    \mcalP_2
    \begin{pmatrix}
      \delta H_{i,k-1}/\delta U_1 \\
      \vdots \\
      \delta H_{i,k-1}/\delta U_n \\
      \delta H_{i,k-1}/\delta W
    \end{pmatrix}
  \end{equation}
for all $(i,k)\in\Big(\{2,3,\dots,n+1\}\times\bbZ_{\geq 0}\Big)\cup\Big(\{0\}\times\bbZ_{\leq -1}\Big)$,
where the constants
\begin{equation}\label{P2 mu}
\mu_0 := -\frac12,\quad 
\mu_i := \frac{i-1}{n}-\frac12,\quad 2\leq i\leq n+1,
\end{equation}
the Hamiltonians $H_{i,p}$ are given as in \eqref{Ham Hip-1}--\eqref{Ham Hip-3}, and the
entries of $\mcalP_2$ are given by
\begin{equation}\label{HamP2Op1}
  \mcalP_2^{\afa\beta}=
  \begin{cases}
    U_\afa
    \frac{(\Lmd^{n-\afa}-1)(1-\Lmd^\beta)}{\Lmd^n-1}U_\beta
   +\sum\limits_{s=1}^{\beta}
     \left(
       U_{\beta-s}\Lmd^s U_{\afa+s}
      -U_{\afa+s}\Lmd^{\beta-\afa-s}U_{\beta-s}
     \right),
   \quad \text{if\, $\afa\geq\beta$},
  \\
   U_\afa
    \frac{(1-\Lmd^{-\afa})(\Lmd^{\beta-n}-1)}{1-\Lmd^{-n}}U_\beta
   +\sum\limits_{s=1}^{\afa}
    \left( U_{\afa-s}\Lmd^{\beta-\afa+s}U_{\beta+s}-
      U_{\beta+s}\Lmd^{-s}U_{\afa-s}\right),\quad \text{if\, $\afa<\beta$}
  \end{cases}
\end{equation}
for all $1\leq \afa,\beta\leq n-1$, and
\begin{align}
  \mcalP_2^{\afa,n}&=\,\mcalP_2^{n,\afa}=\mcalP_2^{n,n} = 0, \notag
\\
  \mcalP_2^{\afa,n+1} &=\,
    U_\afa\frac{(\Lmd^n-\Lmd^{n-\afa})(\Lmd-1)}{\Lmd^n-1}W, \notag
\\
  \mcalP_2^{n,n+1} &=\,
    U_n(\Lmd-1)W,\notag
\\
  \mcalP^{n+1,n+1}_2 &=\,  W\frac{(\Lmd^n-\Lmd^{-1})(\Lmd-1)}{\Lmd^n-1}W, \notag
\\
  \mcalP_2^{n+1,\afa} &=\, -\left(\mcalP_2^{\afa,n+1}\right)^\dag,
  \quad
  \mcalP_2^{n+1,n} = -\left(\mcalP_2^{n,n+1}\right)^\dag. \label{HamP2Op2}
\end{align}
\end{thm}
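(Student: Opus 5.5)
Following the proof of Theorem \ref{thm: HamP1}, I will treat the representative case $i=n+1$ in detail; the other positive flows and the negative flows go the same way with $L^{k+\frac{i-1}{n}}$ or $M^{k}$ in place of $L^{k+1}$. The inputs are the variational derivatives \eqref{var-derivative}: with $k$ replaced by $k-1$, the entries $\frac{1}{c_{n+1,k-1}}\delta H_{n+1,k-1}/\delta U_\beta = b^{k,[n-\beta]}_{\beta-n}$ and $\frac{1}{c_{n+1,k-1}}\delta H_{n+1,k-1}/\delta W = b^{k+1,[-1]}_1$. The left-hand side $\veps\,\partial_{t^{n+1,k}}$ is given by \eqref{pUptn+1}--\eqref{pWptn+1} in terms of the coefficients of $L^{k+1}$ and $\tilde L^{k+1}$. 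Note also that $c_{n+1,k-1}/c_{n+1,k}=k+1=k+\mu_{n+1}+\frac12$, which is exactly the prefactor in \eqref{HamP2-eqn}. So the goal is to express $\tilde a^{k+1}_s,a^{k+1}_s$ through $b^k_s$ and then rewrite the flow as $\mcalP_2$ applied to $(b^k_{\beta-n})^{[n-\beta]}$ and $b^{k+1,[-1]}_1$.

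The key extra input is a ``second-type'' relation. For $\mcalP_1$ the flow was written using $L^{k+1}=(L^kB^{-1})A$, which is linear in the $U$'s. The quadratic operator $\mcalP_2$ instead reflects the Adler--Gelfand--Dickey form $\veps\partial_t L = (L^{k+1}\cdot)_+L - L(\cdot L)_+$. I would therefore write $\tilde L^{k+1}$ and $L^{k+1}$ as $A(L^kB^{-1})$ and $(L^kB^{-1})A$, and substitute these into $(\tilde L^{k+1})_+A-A(L^{k+1})_+$. I would then use $(XA)_+A$-type splittings, $(XA)_+=X_+A-(X_-A)_-$, so that the result is $A\,(\text{finite part of }b^k)\,A$ minus corrections. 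The sums $\sum_s(U_{\beta-s}\Lmd^sU_{\afa+s}-U_{\afa+s}\Lmd^{\beta-\afa-s}U_{\beta-s})$ in \eqref{HamP2Op1} should come directly from expanding $A\,b\,A$ componentwise. I would prove the needed coefficient identities in a lemma analogous to \eqref{coef lem 1}--\eqref{coef lem 3}, for instance $\tilde L^{k+1}B=BL^{k+1}$ together with the product formulas above, expanded by powers of $\Lmd$.

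Next I need to handle the nonlocal-looking terms $\frac{(\Lmd^{n-\afa}-1)(1-\Lmd^\beta)}{\Lmd^n-1}$ and the $W$-entries with denominators $\Lmd^n-1$. These come from the coefficient $b^k_{-n}$, which does not occur directly among the variational derivatives of $H_{n+1,k-1}$, and from the related $b^{k+1}$ residue terms. I would eliminate them with Lemma \ref{lem:b-n} (used with $k-1$), which writes $b^k_{-n}$ as $\frac{\Lmd-1}{\Lmd^n-1}(W\,\delta H/\delta W)+\sum_\ell\frac{\Lmd^{\ell-n}-1}{\Lmd^n-1}(U_\ell\,\delta H/\delta U_\ell)$. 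Substituting this produces precisely the rank-one pieces $U_\afa(\cdots)U_\beta$ and $U_\afa(\cdots)W$, $W(\cdots)W$. The $W$ component should be easiest: $\tilde a^{k+1}_{-1}-a^{k+1}_{-1}$ is rewritten via \eqref{coef lem 3} in terms of $b^k$ and then via Lemma \ref{lem:b-n}, giving $\mcalP_2^{n+1,\beta}$ and $\mcalP_2^{n+1,n+1}$. The $U_n$ row should collapse to $U_n(\Lmd-1)W$ times $\delta H/\delta W$, using $\veps\partial U_n = \tilde a^{k+1}_0U_n-a^{k+1,[n]}_0U_n$ (from \eqref{pUptn+1} with $\afa=n$) and the relation $\tilde a_0 - W^{[1]}\tilde a_1 = a_0 - W a_1^{[-1]}$.

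The main obstacle I expect is the bookkeeping in the $U_\afa$ rows for $1\le\afa\le n-1$. There I must show that, after substituting Lemma \ref{lem:b-n}, all terms involving $\delta H/\delta U_n$ cancel (so that $\mcalP_2^{\afa,n}=0$), and that the two cases $\afa\ge\beta$ and $\afa<\beta$ reassemble into \eqref{HamP2Op1}. For this I would first check $n=1,2$ explicitly (the case $n=1$ should reproduce the known Ablowitz--Ladik second structure) to fix signs and shifts, and then do a careful reindexing of double sums in general. For the negative flows, the analogous identities come from $M^{k}$ and $A^{-1}$ expansions, with $H_{0,-1}$ handled separately via $\delta H_{0,-1}=(-1/U_n,\,1/W)$.
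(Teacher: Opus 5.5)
Your overall route is the paper's: treat $i=n+1$, rewrite $\tilde a^{k+1}_s$ and $a^{k+1}_s$ through $b^k$ via $\tilde L^{k+1}=A(L^kB^{-1})$ and $L^{k+1}=(L^kB^{-1})A$ (the second block of \eqref{coef lem 3}), substitute into \eqref{pUptn+1}--\eqref{pWptn+1}, read off the variational derivatives from \eqref{var-derivative} at $k-1$, and remove $b^k_{-n}$ with Lemma \ref{lem:b-n} at $k-1$.

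\textbf{The $W$ row.} This is the row you call the easiest, but it does not close the way you describe it. You start from $\tilde a^{k+1}_{-1}-a^{k+1}_{-1}$, the form that was convenient for $\mcalP_1$ because there one used the first block of \eqref{coef lem 3} with $b^{k+1}$. Expanding it by the second block gives $(\Lmd^n-1)b^k_{-n-1}+\cdots$. The coefficient $b^k_{-n-1}$ is neither a variational derivative of $H_{n+1,k-1}$ nor covered by Lemma \ref{lem:b-n}. You must stay with (or return to, via \eqref{coef lem 2} at $s=-1$) the form in \eqref{pWptn+1}, namely $W\bigl(\tilde a^{k+1}_0-a^{k+1,[-1]}_0\bigr)$. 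Then only $b^k_{-n},\dots,b^k_0$ occur, and Lemma \ref{lem:b-n} yields $\mcalP_2^{n+1,\beta}$ and $\mcalP_2^{n+1,n+1}$ as in \eqref{250807-1026-2}.

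\textbf{The $U_n$ row.} The second term is $a^{k+1}_0U_n$, unshifted, not $a^{k+1,[n]}_0U_n$. With that correction, the first block of \eqref{coef lem 3} at level $k+1$ gives $\tilde a^{k+1}_0-a^{k+1}_0=(\Lmd-1)\bigl(Wb^{k+1,[-1]}_1\bigr)$. This is the entry $U_n(\Lmd-1)W$ at once, a slightly shorter route than specializing the general row formula to $\afa=n$ as the paper does.

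\textbf{The rows $1\le\afa\le n-1$.} The cancellation you anticipate is larger than you state. Besides $b^k_0=\frac{1}{c_{n+1,k-1}}\delta H_{n+1,k-1}/\delta U_n$, the substitution also produces $b^k_j$ with $j\ge 1$. These are not variational derivatives at all, so they must cancel. In the paper all of them sit in the double sum over $0\le\ell\le s\le n-\afa$ in \eqref{250807-0834}. That sum vanishes identically, because the involution $(s,\ell)\mapsto(n-\afa-\ell,\,n-\afa-s)$ exchanges its two terms. Only after discarding it do Lemma \ref{lem:b-n} and reindexing give \eqref{HamP2Op1}.

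\textbf{Smaller corrections.} The splitting identity is $(XA)_+=X_+A+(X_-A)_+=XA-(X_-A)_-$, not $X_+A-(X_-A)_-$. Also, $H_{0,-1}$ never enters \eqref{HamP2-eqn}: for $k\le -1$ only $H_{0,k-1}$ with $k-1\le -2$ appears. In fact $\mcalP_2\,\delta H_{0,-1}=0$, so no separate treatment is needed.
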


\begin{proof}
  We also only proof the case $i=n+1$ in \eqref{HamP2-eqn},
  the proof of other cases are similar and we omit its details.
  For each $1\leq\afa\leq n$,
  from \eqref{pUptn+1} and \eqref{coef lem 3} we have
  \begin{align}
  &\,
    \frac{\veps}{c_{n+1,k}}
    \pfrac{U_\afa}{t^{n+1,k}}
  =
    \sum_{s=0}^{n-\afa}
      \left(
        \tilde a^{k+1}_s U_{\afa+s}^{[s]}
       -a_s^{k+1,[n-\afa-s]}U_{\afa+s}
      \right) \notag \\
  =&\,
    \sum_{s=0}^{n-\afa}
      \left(
        U^{[s]}_{\afa+s}
          \left(
            b^{k,[n]}_{s-n}
           +\sum_{\ell=0}^{n-1}U_{n-\ell}b^{k,[\ell]}_{s-\ell}
          \right)
       -U_{\afa+s}
          \left(
            b^k_{s-n}
           +\sum_{\ell=0}^{n-1}U_{n-\ell}^{[s-\ell]}b^k_{s-\ell}
          \right)^{[n-\afa-s]}
      \right)\notag\\
  =&\,
    U_\afa\left(
      b_{-n}^{k,[n]} - b_{-n}^{k,[n-\afa]}
    \right)
   +\sum_{s=1}^{n-\afa}
     \left(
       U_{\afa+s}^{[s]}b_{s-n}^{k,[n]}
      -U_{\afa+s}b^{k,[n-\afa-s]}_{s-n}
     \right) \notag\\
  &\quad
    +\sum_{s=0}^{n-\afa}
       \sum_{\ell=0}^{n-1}
         \left(
           U^{[s]}_{\afa+s}U_{n-\ell}b^{k,[\ell]}_{s-\ell}
          -U_{\afa+s}U_{n-\ell}^{[n-\afa-\ell]}b^{k,[n-\afa-s]}_{s-\ell}
         \right) \notag\\
  =&\,
    U_\afa(\Lmd^n-\Lmd^{n-\afa})b^k_{-n}
   +\sum_{\beta=1}^{n-\afa}
     \left(
       U_{\afa+\beta}^{[\beta]}\Lmd^n
      -U_{\afa+\beta}\Lmd^{n-\afa-\beta}
     \right)b_{\beta-n}^k\notag\\
   &\quad
   +\left(
       \sum_{s=0}^{n-\afa}
       \sum_{\ell=s+1}^{n-1}
      +\sum_{\ell=0}^{n-\afa}
       \sum_{s=\ell}^{n-\afa}
    \right)
         \left(
           U^{[s]}_{\afa+s}U_{n-\ell}b^{k,[\ell]}_{s-\ell}
          -U_{\afa+s}U_{n-\ell}^{[n-\afa-\ell]}b^{k,[n-\afa-s]}_{s-\ell}
         \right). \label{250807-0834}
  \end{align}
Observe that the double sum $\sum\limits_{\ell=0}^{n-\afa}
       \sum\limits_{s=\ell}^{n-\afa}(\cdots)$ appearing in \eqref{250807-0834} vanishes,
       then applying \eqref{var-derivative} and \eqref{eq:b-n}, using notation \eqref{short notation},
       and by straightforward calculation, we obtain
\begin{align}
  \veps\frac{c_{n+1,k-1}}{c_{n+1,k}}
  \pfrac{U_\afa}{t^{n+1,k}}
&=\,
  \sum_{\beta=0}^{n-1}
  \left(
    U_\afa
    \frac{(\Lmd^n-\Lmd^{n-\afa})(\Lmd^{\beta-n}-1)}{\Lmd^n-1}
    U_\beta
  \right)\frac{\delta H_{n+1,k-1}}{\delta U_\beta} \notag \\
&\quad
  +\sum_{s=0}^{n-\afa}
   \sum_{\beta=s}^{n-1}
     \left(
       U_{\beta-s}\Lmd^s U_{\afa+s}
      -U_{\afa+s}\Lmd^{\beta-\afa-s}U_{\beta-s}
     \right)
     \frac{\delta H_{n+1,k-1}}{\delta U_\beta}\notag \\
&\quad
  +\left(
      U_\afa \frac{(\Lmd^n-\Lmd^{n-\afa})(\Lmd-1)}{\Lmd^n-1}W
   \right)
   \frac{\delta H_{n+1,k-1}}{\delta W} \label{250807-1026-1}
\end{align}
for all $1\leq\afa\leq n$,
and notice that $\frac{c_{n+1,k}}{c_{n,k}}=k+1=k+\mu_{n+1}+\frac12$.

In particular, when $\afa=n$, we have
\begin{equation}
  \veps
  \left(k+\mu_{n+1}+\frac12\right)
  \pfrac{U_n}{t^{n+1,k}}
=
  \Big(
    U_n(\Lmd-1)W
  \Big)\frac{\delta H_{n+1,k-1}}{\delta W},
\end{equation}
and the entries $\mcalP_2^{\afa\beta}$  for all $1\leq \afa,\beta\leq n-1$ have the form
\begin{align}
  \mcalP_2^{\afa\beta}
&=\, U_\afa\frac{(\Lmd^n-\Lmd^{n-\afa})(\Lmd^{\beta-n}-1)}{\Lmd^n-1}U_\beta
  +\sum_{s=0}^{\beta}
  \left(
    U_{\beta-s}\Lmd^sU_{\afa+s} - U_{\afa+s}\Lmd^{\beta-\afa-s}U_{\beta-s}
  \right) \notag\\
&=\,
  \left(
  U_\afa\frac{(\Lmd^n-\Lmd^{n-\afa})(\Lmd^{\beta-n}-1)}{\Lmd^n-1}U_\beta
    +U_\afa(1-\Lmd^{\beta-\afa})U_\beta
  \right) \notag\\
&\quad
  +\sum_{s=1}^{\beta}
  \left(
    U_{\beta-s}\Lmd^sU_{\afa+s} - U_{\afa+s}\Lmd^{\beta-\afa-s}U_{\beta-s}
  \right) \notag\\
&=\,
  U_\afa\frac{(\Lmd^{n-\afa}-1)(1-\Lmd^\beta)}{\Lmd^n-1}U_\beta
  +\sum_{s=1}^{\beta}
  \left(
    U_{\beta-s}\Lmd^sU_{\afa+s} - U_{\afa+s}\Lmd^{\beta-\afa-s}U_{\beta-s}
  \right).  \label{250815-1524}
\end{align}
  We notice that when $1\leq \afa<\beta\leq n-1$, the entries $\mcalP^{\afa\beta}_2$ can also be expressed as
  \begin{align}
    \mcalP^{\afa\beta}_2
  &=\,
  U_\afa\frac{(\Lmd^n-\Lmd^{n-\afa})(\Lmd^{\beta-n}-1)}{\Lmd^n-1}U_\beta
  +\sum_{s=0}^{\beta}
    U_{\beta-s}\Lmd^sU_{\afa+s}
  -\sum_{s=-\afa}^{\beta-\afa}U_{\beta-s}\Lmd^s U_{\afa+s}\notag\\
  &=\,
    U_\afa
    \frac{(1-\Lmd^{-\afa})(\Lmd^{\beta-n}-1)}{1-\Lmd^{-n}}U_\beta
   +\left(
     \sum_{s=\beta-\afa+1}^{\beta}
    -\sum_{s=-\afa}^{-1}
    \right)U_{\beta-s}\Lmd^s U_{\afa+s}\notag\\
  &=\,
    U_\afa
    \frac{(1-\Lmd^{-\afa})(\Lmd^{\beta-n}-1)}{1-\Lmd^{-n}}U_\beta
  +
    \sum_{s=1}^{\afa}
      \left(
        U_{\afa-s}\Lmd^{\beta-\afa+s}U_{\beta+s}
       -U_{\beta+s}\Lmd^{-s}U_{\afa-s}
      \right),
  \end{align}
therefore \eqref{HamP2Op1} holds true.

Finally, following from \eqref{pWptn+1}, \eqref{coef lem 3}, \eqref{var-derivative} and \eqref{eq:b-n},
we obtain
\begin{align}
&\,
  \frac{\veps}{c_{n+1,k}}
  \pfrac{W}{t^{n+1,k}}
=
  \tilde a_0^{k+1}W-a_0^{k+1,[-1]}W \notag\\
=&\,
  W\left(
    b_{-n}^{k,[n]}+\sum_{\ell=0}^{n-1}U_{n-\ell}b_{-\ell}^{k,[\ell]}
   -b_{-n}^{k,[-1]}
   -\sum_{\ell=0}^{n-1}U_{n-\ell}^{[-\ell-1]}b_{-\ell}^{k,[-1]}
  \right) \notag \\
=&\,
  W(\Lmd^n-\Lmd^{-1})b_{-n}^k
 +W\sum_{\ell=0}^{n-1}
   \left(
     U_{n-\ell}\Lmd^{\ell}
    -U_{n-\ell}^{[-\ell-1]}\Lmd^{-1}
   \right)b_{-\ell}^k \notag \\
=&\,
  \left(
    W\frac{(\Lmd^n-\Lmd^{-1})(\Lmd-1)}{\Lmd^n-1}W
  \right)
  \left(
    \frac{1}{c_{n+1,k-1}}
    \frac{\delta H_{n+1,k-1}}{\delta W}
  \right) \notag \\
&\quad
+
  \sum_{\beta=1}^{n}
    \left(
      W\frac{(1-\Lmd^{-1})(\Lmd^\beta-1)}{\Lmd^n-1}U_\beta
    \right)
    \left(
      \frac{1}{c_{n+1,k-1}}
      \frac{\delta H_{n+1,k-1}}{\delta U_\beta}
    \right). \label{250807-1026-2}
\end{align}
Equations \eqref{250807-1026-1}--\eqref{250807-1026-2} imply the case $i=n+1$ of \eqref{HamP2-eqn}.
The proof of other cases are similar, and we omit its details. The theorem is proved.
\end{proof}

\subsection{Examples}

For convenience we provide the explicit expressions of the operators $(\mcalP_1, \mcalP_2)$
given by \eqref{250812-1309-1}--\eqref{250812-1309-2}, \eqref{HamP2Op1}--\eqref{HamP2Op2}
for $n=1,2,3$.

\begin{ex}
If $n=1$, then the hierarchy \eqref{positive-Lax}--\eqref{negative-Lax} coincides with the Ablowitz--Ladik hierarchy,
see \cite{AL-triham, extend AL} and references therein. The operators $(\mcalP_1,\mcalP_2)$ have the form
\[
  \mcalP_1 =
  \begin{pmatrix}
    W\Lmd^{-1} - \Lmd W & (\Lmd-1)W \\[12pt]
    W(1-\Lmd^{-1}) & 0
  \end{pmatrix},
\quad
  \mcalP_2 =
  \begin{pmatrix}
    0 & U_1(\Lmd-1)W \\[12pt]
    W(1-\Lmd^{-1})U_1 & W(\Lmd-\Lmd^{-1})W
  \end{pmatrix},
\]
which coincides with (3.2), (3.9) of \cite{AL-triham}.
\end{ex}

\begin{ex} The case $n=2$ has been studied in \cite{2-1 RR2T},
and $(\mcalP_1, \mcalP_2)$ have the form
\begin{align*}
  \mcalP_1 &=\,
    \begin{pmatrix}
      \Lmd-\Lmd^{-1} & W\Lmd^{-1}-\Lmd^2 W & 0 \\[12pt]
      W\Lmd^{-2}-\Lmd W & W\Lmd^{-1} U_1 - U_1\Lmd W & (\Lmd-1)W \\[12pt]
      0 & W(1-\Lmd^{-1}) & 0
    \end{pmatrix},
\\[12pt]
  \mcalP_2 &=\,
    \begin{pmatrix}
      \Lmd U_2 - U_2\Lmd^{-1} + U_1\frac{1-\Lmd}{1+\Lmd}U_1 & 0 & U_1\frac{\Lmd-1}{1+\Lmd^{-1}}W \\[12pt]
      0 & 0 & U_2(\Lmd-1)W \\[12pt]
      W\frac{1-\Lmd^{-1}}{\Lmd+1}U_1 & W(1-\Lmd^{-1})U_2 & W\frac{\Lmd^2-\Lmd^{-1}}{\Lmd+1}W
    \end{pmatrix}.
\end{align*}
\end{ex}

\begin{ex} If $n=3$, then
\begin{align*}
  \mcalP_1 &=\,
    \begin{pmatrix}
      0 & \Lmd^2-\Lmd^{-1} & W\Lmd^{-1}-\Lmd^3W & 0 \\[12pt]
      \Lmd-\Lmd^{-2} & U_1\Lmd-\Lmd^{-1}U_1 + W\Lmd^{-2}-\Lmd^2 W & W\Lmd^{-1}U_1-U_1\Lmd^2 W & 0 \\[12pt]
      W\Lmd^{-3}-\Lmd W & W\Lmd^{-2}U_1-U_1\Lmd W & W\Lmd^{-1}U_2 - U_2\Lmd W & (\Lmd-1)W \\[12pt]
      0 & 0 & W(1-\Lmd^{-1}) & 0
    \end{pmatrix},
\\[12pt]
  \mcalP_2 &=\,
    \left(
      \begin{array}{c|c|}
        U_1 \frac{\Lmd^{-1}-\Lmd}{\Lmd+1+\Lmd^{-1}} U_1 + \Lmd U_2 - U_2\Lmd^{-1}
      & U_1 \frac{1-\Lmd}{\Lmd+1+\Lmd^{-1}} U_2 + \Lmd^2 U_3 - U_3\Lmd^{-1}
    \\[12pt]
        U_2 \frac{\Lmd^{-1}-1}{\Lmd+1+\Lmd^{-1}} U_1 + \Lmd U_3 - U_3 \Lmd^{-2}
      &
        U_2 \frac{\Lmd^{-1}-\Lmd}{\Lmd+1+\Lmd^{-1}} U_2 + U_1\Lmd U_3 - U_3\Lmd^{-1}U_1
    \\[12pt]
       0&0
    \\[12pt]
       W \frac{\Lmd^{-1}-\Lmd^{-2}}{\Lmd+1+\Lmd^{-1}} U_1
      &W \frac{1-\Lmd^{-2}}{\Lmd+1+\Lmd^{-1}} U_2
      \end{array}
    \right.
\\[15pt]
&\qquad \quad
  \left.
    \begin{array}{c|c}
    0&\, U_1 \frac{\Lmd^2-\Lmd}{\Lmd+1+\Lmd^{-1}} W \,\\[12pt]
    0&\, U_2 \frac{\Lmd^2-1}{\Lmd+1+\Lmd^{-1}} W \,\\[12pt]
    0&\, U_3(\Lmd-1)W \,\\[12pt]
    \, W(1-\Lmd^{-1})U_3 \, &\, W \frac{\Lmd^2-\Lmd^{-2}}{\Lmd+1+\Lmd^{-1}}W \,
    \end{array}
  \right).
\end{align*}
\end{ex}

\section{Bihamiltonian structure}
\label{section:check jacobi}

To verify that $(\mcalP_1, \mcalP_2)$ given in \eqref{HamP1-eqn} and \eqref{HamP2-eqn} form a bihamiltonian structure,
we use the theory of the Schouten bracket on the space of local functionals over the supermanifold $\hat{\mcalM}$,
where $\hat{\mcalM}=\Pi(T^*\mcalM)$ denotes the cotangent bundle of an $m$-dimensional smooth manifold $\mcalM$ with reversed fiber parity;
see \cite{Liu lecture notes, Jacobi structure} for details.

\subsection{The super-variable formalism and the Schouten bracket}

We first recall the basic notions of the super-variable formalism and the Schouten bracket
\cite{Liu lecture notes, Jacobi structure}.
Fix a local trivialization $\hat \mcalU := \mcalU\times \mathbb{R}^{0|m}$ of $\hat{\mcalM}$,
where $\mcalU$ is endowed with local coordinates $\bsv=(v^1,\dots, v^m)$,
and denote $(\theta_{1},\dots,\theta_{m})$ the family of super-variables on $\bbR^{0|m}$ dual to them.
Then
$
  \{v^{\alpha,s},\theta_{\alpha}^s|1\leq\afa\leq m,\,s\geq 0\}
$
forms a family of local coordinates on the infinite jet bundle $J^{\infty}(\hat{\mcalM})$,
where $\theta_\afa^0:=\theta_\afa$.
Note that the super-variables $\theta_{\afa}^s$ satisfy the anti-commutation relations
$\theta_{\afa}^s\theta_\beta^t + \theta_\beta^t\theta_\afa^s = 0$.

A smooth function $f\in C^{\infty}(J^{\infty }(\hat \mcalU))$ is called a differential polynomial
if it depends polynomially on the jet variables $\{v^{\alpha,s+1},\theta^s_\alpha\}_{1\leq\afa\leq m, s\geq 0}$.
The ring of differential polynomials is equipped with a natural grading $\deg_\p$,
called the \textit{differential grading}, generated by
\[
  \deg_\p f=0,\quad \deg_\p v^{\afa,s}=\deg_\p \theta_\afa^s = s
\]
for all $f\in C^\infty(\mcalU),\, 1\leq\afa\leq m$ and $s\geq 0$.
We denote $\hat\mcalA_d$ the space of homogeneous differential polynomials $f$ satisfying $\deg_\p f=d$ for $d\geq 0$.
In analogy with \eqref{completion},
one defines the completed differential polynomial ring $\hat\mcalA$,
whose elements can be written as the formal series of the form
$
   \sum_{d=0}^{\infty}\veps^df_d
$ such that $f_d\in\hat\mcalA_d$.
There is another grading $\deg_\theta$ on $\hat\mcalA$, called the \textit{super grading}, generated by
\[
  \deg_\theta f=\deg_\theta v^{\afa,s} = 0,\quad
  \deg_\theta\theta_\afa^s = 1
\]
for all $f\in C^\infty(\mcalU)$, $1\leq\afa\leq m$ and $s\geq 0$,
and then
denote $\hat\mcalA^d$ the space of homogeneous differential polynomials $f$ satisfying $\deg_\theta f=d$ for $d\geq 0$.

The operator $\p_x$ in \eqref{operator px} extends to an element of $\Der(\hat\mcalA)$ via the formula
	\[
	\partial_x=
\sum_{\afa=1}^{m}
\sum_{s\geq0}
\left(v^{\alpha,s+1}\frac{\partial}{\partial v^{\alpha,s}}+\theta_{\alpha}^{s+1}\frac{\partial}{\partial \theta_{\alpha}^s}\right).
	\]
Each element $f+\p_x\hat\mcalA$ of the quotient space
$\hat \mcalF := \hat\mcalA/\p_x\hat\mcalA$ is written as $\int f \td x$.
The integration by parts formula analogous to \eqref{by parts-260329} also holds true in $\hat\mcalF$.
We also denote the homogeneous component $\hat\mcalF^p:=\int\hat\mcalA^p\td x\subseteq\hat\mcalF$ for all $p\geq 0$.
For $F=\int f\td x\in\hat\mcalF$, its \textit{variational derivatives}
$\frac{\delta F}{\delta v^\afa}$ and  $\frac{\delta F}{\delta\theta_\afa}$
are defined by
\begin{align}
		\frac{\delta F}{\delta v^\alpha} = \sum_{s \geq 0} (-\partial_x)^s \frac{\partial f}{\partial v^{\alpha,s}},
		\quad
		\frac{\delta F}{\delta \theta_\alpha} = \sum_{s \geq 0} (-\partial_x)^s \frac{\partial f}{\partial \theta^s_\alpha},
\end{align}
which are uniquely determined by the relation
\begin{equation}\label{prop of VD}
  \delta F = \int\sum_{\afa=1}^{m}
    \left(
      \delta v^\afa\cdot\frac{\delta F}{\delta v^\afa}
     +\delta\theta_\afa\cdot\frac{\delta F}{\delta\theta_\afa}
    \right)\td x.
\end{equation}

There is a bilinear map, called the \textit{Schouten bracket} on $\hat\mcalF$, which is defined by
\begin{align*}
  [,]\colon \hat\mcalF^p\times\hat\mcalF^q \to\, \hat\mcalF^{p+q-1},\quad 
  [F, G] :=\,
    \int\sum_{\afa=1}^{m}
      \left(
        \frac{\delta F}{\delta\theta_\afa}
        \frac{\delta G}{\delta v^\afa}
       +(-1)^p\frac{\delta F}{\delta v^\afa}
       \frac{\delta G}{\delta\theta_\afa}
      \right)\td x
\end{align*}
for all $ F\in\hat\mcalF^p$ and $ G\in\hat\mcalF^q$.
This bracket satisfies the following (modified) graded commutative law and graded Jacobi identity
\begin{align}
  [P, Q] &=\, (-1)^{pq}[Q,P], \label{graded commutative law}\\
  (-1)^{p-1}[[P,Q],R] &=\, [P, [Q,R]] - (-1)^{(p-1)(q-1)}[Q,[P,R]] \label{graded Jacobi identity}
\end{align}
for all $P\in\hat\mcalF^p$,  $Q\in\hat\mcalF^q$ and $R\in\hat\mcalF$.

For each skew-symmetric matrix-valued operator
$\mcalP\in\mcalA\fps{\veps\p_x}\otimes\bbC^{m\times m}$,
one introduce the corresponding local functional $\iota(\mcalP)\in\hat\mcalF^2$ via the formula
	\begin{align}\label{250815-1422}
		\iota(\mcalP) :=
    \frac{1}{2} \int
      \sum_{\afa,\beta=1}^{m}
    \theta_\alpha \left(\mcalP^{\afa\beta}\theta_\beta \right) \td x.
	\end{align}
It has been proved that (see \cite{Liu lecture notes} for example) the bracket
\[
\{,\}\colon \mcalF\times\mcalF\to\mcalF,\quad
(F,G)\mapsto \int \sum_{\afa,\beta=1}^{m}
  \frac{\delta F}{\delta v^\afa}
  \left(
    \mcalP^{\afa\beta}\frac{\delta G}{\delta v^\beta}
  \right)\td x,\quad \forall\, F,G\in\mcalF
\]
satisfies the Jacobi identity, i.e. $\mcalP$ is a Hamiltonian structure, if and only if
the following Schouten bracket vanishes:
$
  [\iota(\mcalP),\iota(\mcalP)]=0\in\hat\mcalF^3
$.

Now, in our case of $(n,1)$-type RR2T, we set $m:=n+1$.
The super-variables dual to the local coordinates
$(U_1,\dots,U_n; W)$
are denoted by $(\theta_1,\dots,\theta_n; \fai)$.
We emphasize that \eqref{short notation} will be used consistently,
and in particular, $U_{n+1}:=0$. For each $f\in\hat\mcalA$ and $s\in\bbZ$, denote
\[
   f^{[s]}:=\Lmd^s f = \rme^{\veps s\p_x} f = \sum_{k\geq 0}
  \frac{(\veps s)^k}{k!}\p_x^k f \in \hat\mcalA.
\]
In particular, we have discrete jet variables
$\{U_\afa^{[s]},\,W^{[s]},\,\theta_\afa^{[s]},\,\fai^{[s]}\}_{1\leq\afa\leq n,\,s\in\bbZ}$.
We now state the main theorem of this section.
\begin{thm}
The operators $\mcalP_1,\mcalP_2$ introduced in Theorem \ref{thm: HamP1} and Theorem \ref{thm: operator P2}
form a bihamiltonian structure of the RR2T of type $(n,1)$.
\end{thm}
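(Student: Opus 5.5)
By the criterion recalled just before the statement, it suffices to prove three identities in $\hat\mcalF^3$:
\[
  [\iota(\mcalP_1),\iota(\mcalP_1)]=0,\qquad [\iota(\mcalP_2),\iota(\mcalP_2)]=0,\qquad [\iota(\mcalP_1),\iota(\mcalP_2)]=0 .
\]
The last one is compatibility, since by \eqref{graded commutative law} and bilinearity $[\iota(\mcalP_1+\lmd\mcalP_2),\iota(\mcalP_1+\lmd\mcalP_2)]$ expands into these three pieces. Skew-symmetry of $\mcalP_1,\mcalP_2$ is built into \eqref{250812-1309-2} and \eqref{HamP2-eqn}, and we check it directly on the diagonal blocks. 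Together with Theorems \ref{thm: HamP1} and \ref{thm: operator P2}, these identities give the bihamiltonian structure of the hierarchy.

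First I would write $\iota(\mcalP_a)$ explicitly in the discrete jet variables $U_\afa^{[s]},W^{[s]},\theta_\afa^{[s]},\fai^{[s]}$. Every entry of $\mcalP_1$ is a finite combination of $f\Lmd^s g$, so $\iota(\mcalP_1)$ is a finite sum of terms $\int f\,\theta_\afa\,g^{[s]}\theta_\beta^{[s]}\td x$. Shifting with $\Lmd$ under the integral lets us normalize these terms. For $\mcalP_2$, the nonlocal-looking factors $\frac{(\Lmd^{n-\afa}-1)(1-\Lmd^\beta)}{\Lmd^n-1}$ are problematic. I would remove them by introducing auxiliary variables $\sigma:=\frac{\Lmd-1}{\Lmd^n-1}\fai$-type, namely
\[
  \rho := \frac{1}{\Lmd^n-1}\Big( (\Lmd-1)W\fai+\sum_\ell(\Lmd^\ell-1)U_\ell\theta_\ell\Big),
\]
or equivalently by working with $\Theta:=\sum_\ell U_\ell\theta_\ell$-type combinations. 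Lemma \ref{lem:b-n} suggests that exactly this combination appears. In these terms $\iota(\mcalP_2)$ is a quadratic expression with polynomial shifts in $\Lmd$, and the only nonlocality is one rational factor acting on a single odd element $\rho$, with $(\Lmd^n-1)\rho$ local.

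Next I would use the standard formula: for $P=\iota(\mcalP)$, the bracket $[P,Q]$ equals $2\int\sum_\afa \frac{\delta Q}{\delta v^\afa}\,\frac{\delta P}{\delta\theta_\afa}$ up to sign. Here $\frac{\delta P}{\delta\theta_\afa}=\sum_\beta\mcalP^{\afa\beta}\theta_\beta$, and $\frac{\delta Q}{\delta v^\afa}$ is read off by varying only the coefficients $U,W$. In other words, the bracket is obtained by applying the evolutionary vector field $D_P$ with $D_P U_\afa=(\mcalP\theta)^\afa$, $D_P W=(\mcalP\theta)^{n+1}$ to the densities of $Q$. For $\mcalP_1$, whose coefficients are linear in $W$ and in $U$ (with $U_0=1$), $D_{P_1}$ applied to $\iota(\mcalP_1)$ yields a cubic expression in $\theta,\fai$ with finitely many shifts. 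I would organize it by the pattern of shifts and cancel terms pairwise using $\int\Lmd(\cdot)=\int(\cdot)$ and the anticommutativity of the $\theta$'s. The natural grouping is the $U$–$U$ block, the $U$–$W$ terms, and the $W$–$W$ terms, which are governed by the relations $U_{k}=0$ outside $[0,n]$. The case $n=1$ reproduces the known Ablowitz--Ladik computation \cite{AL-triham}, which serves as a check.

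The main obstacle is $[\iota(\mcalP_2),\iota(\mcalP_2)]=0$ and the compatibility bracket. The quadratic dependence on $U,W$ and the rational factors $\frac{1}{\Lmd^n-1}$ produce sums over $s$ and $\ell$ whose cancellation is not termwise. My first attempt would be to avoid the direct computation. I would note that $\mcalP_2$ is the reduction of the second 2D-Toda structure, known to be Poisson \cite{rational-reduction}, so Jacobi for $\mcalP_2$ could be quoted, leaving only compatibility to compute. If a self-contained argument is preferred, I would express $D_{P_2}U_\afa$ through $\rho$. We have $D_{P_2}U_n=U_n(\Lmd-1)(W\fai)$ and $D_{P_2}W=W(\Lmd^n-\Lmd^{-1})\rho$ plus local terms. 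I would then apply $\Lmd^n-1$ to clear denominators and reduce every term to a finite shift polynomial in $\rho,\theta,\fai$, after which the cancellation becomes a bookkeeping of index ranges like that in the proof of Theorem \ref{thm: operator P2}. Compatibility is handled in the same way, but with only one nonlocal factor.
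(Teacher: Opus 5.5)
Your skeleton matches the paper's: reduce to $[I_1,I_1]=[I_2,I_2]=[I_1,I_2]=0$, and check $[I_1,I_1]=0$ by collecting cubic terms and shifting under $\int$. But the proposal stops where the substance of the proof lies, and neither of your routes for $\mcalP_2$ closes the gap.

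\textbf{The citation route.} Quoting \cite{rational-reduction} requires an entry-by-entry identification of the reduced second 2D-Toda bracket, rewritten in $(U_1,\dots,U_n;W)$, with \eqref{HamP2Op1}--\eqref{HamP2Op2}. The paper derives $\mcalP_2$ independently and never claims this identification. Producing the same flows from the same Hamiltonians does not by itself make two operators equal.

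\textbf{The self-contained route.} As described, it does not work. The factors $\frac{\Lmd^\ell-1}{\Lmd^n-1}$ lie in $\mcalA\fps{\veps\p_x}$, so $\rho$ is a genuine element of $\hat\mcalA^1$ (an infinite series in $\veps$), not a free odd variable. You also cannot ``apply $\Lmd^n-1$'' to an integrand, since $\int(\Lmd^n-1)G\,\td x=0$. After substitution, the terms linear in $\rho$ do not cancel by index bookkeeping. They vanish only because their coefficient, after integration by parts, lies in the image of $\Lmd^{-n}-1$; it then recombines with $\rho$ into local terms, which must be cancelled separately.

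\textbf{What the paper supplies for $[I_2,I_2]=0$.} It shows that $\frac{\delta I_2}{\delta\theta_\gamma}=U_\gamma\xi_\gamma+A_\gamma$ and $\frac{\delta I_2}{\delta U_\gamma}=\theta_\gamma\xi_\gamma+B_\gamma$, with the same odd $\xi_\gamma=(1-\Lmd^{-\gamma})\Psi+(1-\Lmd^{n-\gamma})\Theta$. Likewise $\frac{\delta I_2}{\delta\fai}$ and $\frac{\delta I_2}{\delta W}$ share an odd factor. Hence the quadratic terms die by $\xi_\gamma^2=0$. The identities \eqref{UBTA-1}--\eqref{UBTA-2} give exactly the required divisibility,
$\sum_{\gamma}(1-\Lmd^{\gamma-n})(U_\gamma B_\gamma-\theta_\gamma A_\gamma)=(\Lmd^{-n}-1)\sum_{\gamma,\afa=1}^{n-1}\theta_\gamma(U_{\gamma+\afa}\theta_\afa)^{[\afa]}$,
so the $\Theta$-terms become local. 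The $\Psi$-terms cancel against $\frac{\delta I_2}{\delta\theta_n}\frac{\delta I_2}{\delta U_n}$. The leftover local cubic terms cancel only via Lemma \ref{CZL main lemma}, a nontrivial reindexing proved in the Appendix.

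\textbf{Compatibility.} Your formula $[P,Q]=2\int\frac{\delta Q}{\delta v}\frac{\delta P}{\delta\theta}\,\td x$ holds only when $Q=P$. For $[I_1,I_2]$, both $\int\frac{\delta I_1}{\delta\theta}\frac{\delta I_2}{\delta v}\,\td x$ and $\int\frac{\delta I_1}{\delta v}\frac{\delta I_2}{\delta\theta}\,\td x$ contribute. The paper avoids most of this computation with $J=\int\theta_n\,\td x$. Since $[J,I_2]=I_1+\tilde I_1$, the graded Jacobi identity and $[I_2,I_2]=0$ give $[I_1+\tilde I_1,I_2]=0$, so only $[\tilde I_1,I_2]=0$ remains to be checked.
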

\begin{proof}
  Denote $I_a:=\iota(\mcalP_a)\in\hat\mcalF^2$ the corresponding local functionals for $a=1,2$.
  It is sufficient to show that
$
  [I_1,I_1]=[I_2,I_2]=[I_1,I_2]=0
$,
which will be verified in Sect.\,\ref{subsection P1 ham}, \ref{subsection P2 ham} and \ref{subsection P1P2 compatible}, in turn.
Hence the theorem is proved.
\end{proof}

\subsection{$\mcalP_1$ is a Hamiltonian structure}
\label{subsection P1 ham}

Now we proceed to verify that $[I_1,I_1]=0$.
By straightforward calculation, we have
  \begin{align}
    I_1
  &=\,
    \frac12\int\left(
      \sum_{\afa,\beta=1}^{n}
        \theta_\afa(\mcalP_1^{\afa\beta}\theta_\beta)
       +\sum_{\afa=1}^{n}
         \left(
           \theta_\afa(\mcalP_1^{\afa,n+1}\fai)
          +\fai(\mcalP_1^{n+1,\afa}\theta_\afa)
         \right)
       +\fai(\mcalP_1^{n+1,n+1}\fai)
    \right)\td x\notag \\
  &=\,
    \sum_{\afa,\beta=1}^{n-1}
      \int U_{\afa+\beta-n}\theta_\afa\theta_\beta^{[n-\afa]}\td x
  -\sum_{\afa,\beta=1}^{n}\int
      U_{\afa+\beta-n-1}\theta_\afa\tilde\theta_\beta^{[n+1-\afa]}\td x \notag\\
&\quad
  +\int\theta_n\cdot (\Lmd-1)\tilde\fai\td x, \label{I1}
  \end{align}
here we introduce the notation
  \begin{equation}\label{tilde theta fai}
    \tilde\theta_\afa := W\theta_\afa,\quad
    \tilde\fai := W\fai
  \end{equation}
for all $1\leq\afa\leq n$,
and its variational derivatives have the form
\begin{equation}
  \frac{\delta I_1}{\delta\fai} = W(1-\Lmd^{-1})\theta_n,\quad
  \frac{\delta I_1}{\delta W} =
    \sum_{\afa,\beta=1}^{n}\theta_\afa(U_{\afa+\beta-n-1}\theta_\beta)^{[\beta-n-1]}
   +\fai(1-\Lmd^{-1})\theta_n,
\end{equation}
\begin{equation}
\frac{\delta I_1}{\delta\theta_n}
=\sum_{\afa=1}^{n}
  \left(
    W(U_{\afa-1}\theta_\afa)^{[\afa-n-1]}
   -U_{\afa-1}\tilde\theta_\afa^{[1]}
  \right) + (\Lmd-1)\tilde\fai, \quad
\frac{\delta I_1}{\delta U_n}=0,
\end{equation}
\begin{align}
\frac{\delta I_1}{\delta\theta_\gamma}
&=\,
  \sum_{\afa=1}^{n-1}
  \left(
    U_{\gamma+\afa-n}\theta_\afa^{[n-\gamma]}
    -
   \big(U_{\gamma+\afa-n}\theta_\afa\big)^{[\afa-n]}
  \right)
\notag\\
&\quad
  +\sum_{\afa=1}^{n}
  \left(
     W\big(U_{\gamma+\afa-n-1}\theta_\afa\big)^{[\afa-n-1]}
    -
     U_{\gamma+\afa-n-1}\tilde\theta_\afa^{[n+1-\gamma]}
  \right),
\\
  \frac{\delta I_1}{\delta U_\gamma}
&=\,
  \sum_{\beta=\gamma+1}^{n-1}
    \theta_\beta\theta_{\gamma+n-\beta}^{[n-\beta]}
 -\sum_{\beta=\gamma+1}^{n}
    \theta_\beta\tilde\theta_{\gamma+n+1-\beta}^{[n+1-\beta]}
\end{align}
for each $\gamma\in\{1,\dots,n-1\}$.
It is clear that $\int \frac{\delta I_1}{\delta\theta_n}
  \frac{\delta I_1}{\delta U_n}\td x=0$, and
\begin{align}
&\,
  \int\frac{\delta I_1}{\delta\fai}\frac{\delta I_1}{\delta W}\td x =
  \int W(1-\Lmd^{-1})\theta_n\cdot
    \sum_{\afa,\beta=1}^{n}
      \theta_\afa\big(U_{\afa+\beta-n-1}\theta_\beta\big)^{[\beta-n-1]}\td x
\notag\\
=&\,
  \int\sum_{\afa,\beta=1}^{n}
    U_{\afa+\beta-n-1}\theta_\beta
    \left(
       (1-\Lmd^{-1})\theta_n^{[n+1-\beta]}
    \right)
    \tilde\theta_\afa^{[n+1-\beta]}\td x
\notag\\
=&\,
  \int\sum_{k=0}^{n-2}
  \sum_{\afa=k+1}^{n}
    U_k\theta_\afa
    \left(
       (1-\Lmd^{-1})\theta_n^{[n+1-\afa]}
    \right)
    \tilde\theta_{k+n+1-\afa}^{[n+1-\afa]}\td x. \label{250812-1548-1}
\end{align}
Moreover, we also have 
$
  \int\sum_{\gamma=1}^{n-1}
    \frac{\delta I_1}{\delta\theta_\gamma}
    \frac{\delta I_1}{\delta U_\gamma}\td x
=
  X_0 + X_1 + X_2
$, 
where
\begin{align*}
  X_0&:=\,
    \int
      \sum_{\gamma=1}^{n-1}
      \sum_{\afa=1}^{n-1}
      \sum_{\beta=\gamma+1}^{n-1}
        \left(
           U_{\gamma+\afa-n}\theta_\afa^{[n-\gamma]}
             -
           \big(U_{\gamma+\afa-n}\theta_\afa\big)^{[\afa-n]}
        \right)
        \theta_\beta\theta_{\gamma+n-\beta}^{[n-\beta]}\td x,
\\
  X_1&:=\,
    \int
      \sum_{\gamma=1}^{n-1}
      \sum_{\afa=1}^{n}
      \sum_{\beta=\gamma+1}^{n-1}
        \left(
           W\big(U_{\gamma+\afa-n-1}\theta_\afa\big)^{[\afa-n-1]}
             -
           U_{\gamma+\afa-n-1}\tilde\theta_\afa^{[n+1-\gamma]}
        \right)
        \theta_\beta\theta_{\gamma+n-\beta}^{[n-\beta]}\td x
\\
&\qquad
  -\int
      \sum_{\gamma=1}^{n-1}
      \sum_{\afa=1}^{n-1}
      \sum_{\beta=\gamma+1}^{n}
        \left(
           U_{\gamma+\afa-n}\theta_\afa^{[n-\gamma]}
             -
           \big(U_{\gamma+\afa-n}\theta_\afa\big)^{[\afa-n]}
        \right)
        \theta_\beta\tilde\theta_{\gamma+n+1-\beta}^{[n+1-\beta]}\td x,
\\
X_2&:=\,
  -\int
    \sum_{\gamma=1}^{n-1}
    \sum_{\afa=1}^{n}
    \sum_{\beta=\gamma+1}^{n}
      \left(
           W\big(U_{\gamma+\afa-n-1}\theta_\afa\big)^{[\afa-n-1]}
             -
           U_{\gamma+\afa-n-1}\tilde\theta_\afa^{[n+1-\gamma]}
      \right)
      \theta_\beta\tilde\theta_{\gamma+n+1-\beta}^{[n+1-\beta]}\td x.
\end{align*}
Therefore by straightforward calculation and using the following changes of indices
\[
  \begin{cases}
    \gamma'=k+\beta-\gamma, \\
    \beta' =n+k-\gamma,
  \end{cases}
\quad
  \begin{cases}
    \gamma''=k+\beta-\gamma, \\
    \beta'' =n+1+k-\gamma,
  \end{cases}
\]
we obtain
\begin{align}
&\, X_0+X_2\notag\\
=&\,
  \int\sum_{\gamma=1}^{n-1}
      \sum_{\afa=1}^{n-1}
      \sum_{\beta=\gamma+1}^{n-1}
      U_{\gamma+\afa-n}
      \left(
        \theta_\afa^{[n-\gamma]}
        \theta_\beta
        \theta_{\gamma+n-\beta}^{[n-\beta]}
       -\theta_\afa
        \theta_\beta^{[n-\afa]}
        \theta_{\gamma+n-\beta}^{[2n-\afa-\beta]}
      \right)\td x
\notag\\
&\quad+
  \int\sum_{\gamma=1}^{n-1}
      \sum_{\afa=1}^{n}
      \sum_{\beta=\gamma+1}^{n}
      U_{\gamma+\afa-n-1}
      \left(
        \theta_\beta
        \tilde\theta_{n+1+\gamma-\beta}^{[n+1-\beta]}
        \tilde\theta_{\afa}^{[n+1-\gamma]}
       -\theta_\afa
        \tilde\theta_\beta^{[n+1-\afa]}
        \tilde\theta_{n+1+\gamma-\beta}^{[2n+2-\afa-\beta]}
      \right)\td x
\notag\\
=&\,
  \sum_{k=0}^{n-2}\int U_k
  \left(
    \sum_{\gamma=k+1}^{n-2}
    \sum_{\beta=\gamma+1}^{n-1}
        \theta_\beta
        \theta_{n+\gamma-\beta}^{[n-\beta]}
        \theta_{n+k-\gamma}^{[n-\gamma]}
    -
    \sum_{\gamma'=k+1}^{n-2}
    \sum_{\beta'=\gamma'+1}^{n-1}
        \theta_{\beta'}
        \theta_{n+\gamma'-\beta'}^{[n-\beta']}
        \theta_{n+k-\gamma'}^{[n-\gamma']}
    \right)\td x
\notag\\
&\quad+
  \sum_{k=0}^{n-2}\int U_k
  \left(
    \sum_{\gamma=k+1}^{n-1}
    \sum_{\beta=\gamma+1}^{n}
        \theta_\beta
        \tilde\theta_{\gamma+n+1-\beta}^{[n+1-\beta]}
        \tilde\theta_{n+1+k-\gamma}^{[n+1-\gamma]}
  \right.
\notag\\
&\quad\quad\quad -
  \left.
    \sum_{\gamma''=k+1}^{n-1}
    \sum_{\beta''=\gamma''+1}^{n}
        \theta_{\beta''}
        \tilde\theta_{\gamma''+n+1-\beta''}^{[n+1-\beta'']}
        \tilde\theta_{n+1+k-\gamma''}^{[n+1-\gamma'']}
  \right)\td x
\notag\\
=&\,
  0+0=0,
\end{align}
and
\begin{align}
  X_1
&=\,
  -\sum_{\gamma=1}^{n-1}\sum_{\afa=1}^{n}\sum_{\beta=\gamma+1}^{n-1}
     \int U_{\gamma+\afa-n-1}\theta_\afa
       \theta_{n+\gamma-\beta}^{[2n+1-\afa-\beta]}
       \tilde\theta_\beta^{[n+1-\afa]}
     \td x
\notag\\
&\qquad
  -\sum_{\gamma=1}^{n-1}\sum_{\afa=\gamma+1}^{n-1}\sum_{\beta=1}^{n}
     \int U_{\gamma+\beta-n-1}\theta_\afa
       \theta_{n+\gamma-\afa}^{[n-\afa]}
       \tilde\theta_\beta^{[n+1-\gamma]}
     \td x
\notag\\
&\qquad
  +\sum_{\gamma=1}^{n-1}\sum_{\afa=1}^{n-1}\sum_{\beta=\gamma+1}^{n}
     \int U_{\gamma+\afa-n}\theta_\afa
       \theta_{\beta}^{[n-\afa]}
       \tilde\theta_{n+1+\gamma-\beta}^{[2n+1-\afa-\beta]}
     \td x
\notag\\
&\qquad
  +\sum_{\gamma=1}^{n-1}\sum_{\afa=\gamma+1}^{n}\sum_{\beta=1}^{n-1}
     \int U_{\gamma+\beta-n}\theta_\afa
       \theta_{\beta}^{[n-\gamma]}
       \tilde\theta_{n+1+\gamma-\afa}^{[n+1-\afa]}
     \td x
\notag\\
&=\,
  -\sum_{k=0}^{n-2}\sum_{\afa=k+2}^{n}
    \int U_k\theta_\afa
      \sum_{\gamma=k+1}^{\afa-2}
        \theta_{n+k-\gamma}^{[n-\gamma]}
        \tilde\theta_{n+1+\gamma-\afa}^{[n+1-\afa]}\td x
\notag\\&\qquad
  -\sum_{k=0}^{n-2}\sum_{\afa=k+2}^{n-1}
    \int U_k\theta_\afa
      \sum_{\beta=n+1+k-\afa}^{n-1}
        \theta_{\beta}^{[n-\afa]}
        \tilde\theta_{2n+1+k-\afa-\beta}^{[2n+1-\afa-\beta]}\td x
\notag\\&\qquad
  +\sum_{k=0}^{n-2}\sum_{\afa=k+1}^{n-1}
    \int U_k\theta_\afa
      \sum_{\beta=n+1+k-\afa}^{n}
        \theta_{\beta}^{[n-\afa]}
        \tilde\theta_{2n+1+k-\afa-\beta}^{[2n+1-\afa-\beta]}\td x
\notag\\&\qquad
  +\sum_{k=0}^{n-2}\sum_{\afa=k+2}^{n}
    \int U_k\theta_\afa
      \sum_{\gamma=k+1}^{\afa-1}
        \theta_{n+k-\gamma}^{[n-\gamma]}
        \tilde\theta_{n+1+\gamma-\afa}^{[n+1-\afa]}\td x
\notag\\
&=\,
  \int
    \sum_{k=0}^{n-2}\sum_{\afa=k+1}^{n}
      U_k\theta_\afa
      \left(
        (\Lmd^{-1}-1)\theta_n^{[n+1-\afa]}
      \right)
      \tilde\theta_{n+1+k-\afa}^{[n+1-\afa]} \td x.  \label{250812-1548-2}
\end{align}
Therefore, from \eqref{250812-1548-1}--\eqref{250812-1548-2} we obtain 
\begin{equation}
  [I_1,I_1] =
2\int\left(
  \frac{\delta I_1}{\delta\fai}
  \frac{\delta I_1}{\delta W}
 +\frac{\delta I_1}{\delta\theta_n}
  \frac{\delta I_1}{\delta U_n}
 +\sum_{\gamma=1}^{n-1}
  \frac{\delta I_1}{\delta\theta_\gamma}
  \frac{\delta I_1}{\delta U_\gamma}
\right)\td x = 0.
\end{equation}

\subsection{$\mcalP_2$ is a Hamiltonian structure}
\label{subsection P2 ham}

To compute $[I_1, I_2]$ and $[I_2, I_2]$, 
we need to deal with operators of the form $\mathsf R(\Lmd)=\mathsf R(\rme^{\veps\p_x})$,
where $\mathsf R\in\bbR(x)$ is a rational function,
and then find the variational derivatives of $I_2$.
Notice that the adjoint of $\mathsf R(\Lmd)$ for $\mathsf R\in\bbR(x)$ has the form
  \[
    \mathsf R(\Lmd)^\dag = \mathsf R(\rme^{\veps\p_x})^\dag = \mathsf R(\rme^{-\veps\p_x}) = \mathsf R(\Lmd^{-1}).
  \]
  Then using \eqref{250815-1524} and the integration by parts formula analogous to \eqref{by parts-260329},
  we have
  \begin{equation}
    I_2 = I_{2,1} + I_{2,2},
  \end{equation}
where
  \begin{align}
    I_{2,1}&:=\,
      \frac12\int\sum_{\afa,\beta=1}^{n-1}
        \theta_\afa U_\afa
        \frac{(1-\Lmd^{n-\afa})(1-\Lmd^\beta)}{1-\Lmd^n}(U_\beta\theta_\beta)\td x \notag\\
    &\quad
      +\frac12
        \int\sum_{\afa,\beta=1}^{n-1}
        \sum_{s=1}^{\beta}
          \theta_\afa
            \left(
              U_{\beta-s}\Lmd^sU_{\afa+s}
             -U_{\afa+s}\Lmd^{\beta-\afa-s}U_{\beta-s}
            \right)\theta_\beta\td x,  \notag
  \\
    I_{2,2}&:=\,
      \int\sum_{\afa=1}^{n}
        \theta_\afa U_\afa
        \frac{(1-\Lmd)(\Lmd^n-\Lmd^{n-\afa})}{1-\Lmd^n}(W\fai)\td x
    +
      \int W\fai\frac{1-\Lmd^{-1}}{1-\Lmd^n}(W\fai)\td x. \notag
  \end{align}
Then the variational derivatives of $I_2=I_{2,1}+I_{2,2}$
can be computed by using \eqref{prop of VD}.
By straightforward calculation, we obtain
\begin{align}
  \frac{\delta I_2}{\delta\theta_n}
&=\,
  U_n \left(\Lmd -1 \right) \left( W\fai\right) ,  \label{260119-1659-1}
\\
  \frac{\delta I_2}{\delta U_n}
&=\,
  \theta_n \left(\Lmd -1 \right) \left( W\fai\right)
 -\sum_{\afa,\beta = 1}^{n-1}
   \theta_\afa
     \left(
       U_{\afa+\beta-n}\theta_\beta
     \right)^{[\beta - n]},   \label{260119-1659-2}
\\
  \frac{\delta I_2}{\delta\fai}
&=\,
  W \frac{1-\Lmd^{-1}}{1-\Lmd^n}
  \sum_{\afa=1}^{n}
    \left(1-\Lmd^\afa\right)
    \left( U_\afa\theta_\afa\right)
 +W\frac{(1-\Lmd)(\Lmd^n-\Lmd^{-1})}{1-\Lmd^n}
   \left(W\fai\right),  \label{260119-1659-3}
\\
  \frac{\delta I_2}{\delta W}
&=\,
  \fai \frac{1-\Lmd^{-1}}{1-\Lmd^n}
  \sum_{\afa=1}^{n}
    \left(1-\Lmd^\afa\right)
    \left( U_\afa\theta_\afa\right)
 +\fai\frac{(1-\Lmd)(\Lmd^n-\Lmd^{-1})}{1-\Lmd^n}
   \left(W\fai\right). \label{260119-1659-4}
\end{align}

Moreover, we also have the following lemma:
\begin{lem}
The following equations hold true for all $1\leq\gamma\leq n-1$:
\begin{align}
  \frac{\delta I_2}{\delta\theta_\gamma}
&=\,
    U_\gamma \xi_\gamma
  + \sum_{\afa=1}^{n-1}
    \left(
      \sum_{s=0}^{\gamma}-\sum_{s=\afa}^{\afa+\gamma}
    \right)
      U_s\left(U_{\gamma+\afa-s}\theta_\afa\right)^{[\afa-s]}, \label{del I2 del theta gamma}
\\
  \frac{\delta I_2}{\delta U_\gamma} &=\,
      \theta_\gamma \xi_\gamma
  +\left(
    \sum_{\beta,\beta'=\gamma}^{n-1}
   -\sum_{\beta,\beta'=1}^{\gamma-1}
  \right)
    \theta_\beta
    \left(
      U_{\beta+\beta'-\gamma}\theta_{\beta'}
    \right)^{[\beta'-\gamma]} , \label{del I2 del U gamma}
\end{align}
here for each $1\leq\gamma\leq n-1$, we denote
  \begin{equation} \label{def of xi gamma}
    \xi_\gamma:=
      \frac{(1-\Lmd)(\Lmd^n-\Lmd^{n-\gamma})}{1-\Lmd^n}(W\fai)
      +\sum_{\afa=1}^{n-1}
        \frac{(1-\Lmd^\afa)(1-\Lmd^{n-\gamma})}{1-\Lmd^n}(U_\afa\theta_\afa).
  \end{equation}
\end{lem}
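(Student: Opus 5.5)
The lemma is a direct computation from the decomposition $I_2=I_{2,1}+I_{2,2}$ together with the characterization \eqref{prop of VD} of variational derivatives. I plan to vary $\theta_\gamma$ and $U_\gamma$, for fixed $1\leq\gamma\leq n-1$, separately in each of the four pieces of $I_2$:
\begin{itemize}
\item the ``rational'' quadratic part of $I_{2,1}$;
\item the ``shift'' part $\sum_s\theta_\afa(U_{\beta-s}\Lmd^sU_{\afa+s}-U_{\afa+s}\Lmd^{\beta-\afa-s}U_{\beta-s})\theta_\beta$ of $I_{2,1}$;
\item the $\theta_\afa U_\afa(\cdots)(W\fai)$ term of $I_{2,2}$;
\item the $W\fai(\cdots)W\fai$ term of $I_{2,2}$, which contributes nothing.
\end{itemize}
Throughout, I will move operators with $\mathsf R(\Lmd)^\dag=\mathsf R(\Lmd^{-1})$ and use the identity
\[
\frac{(1-\Lmd^{-a})(1-\Lmd^{-b})}{1-\Lmd^{-n}}=\Lmd^{n-a-b}\frac{(1-\Lmd^{a})(1-\Lmd^{b})}{1-\Lmd^{n}}\cdot(-1),
\]
up to the obvious sign bookkeeping.

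\textbf{Rational and $W\fai$ terms.} Since the rational part of $I_{2,1}$ is bilinear in the combinations $U_\afa\theta_\afa$, it is symmetric in the pairs $(\afa,\beta)$ after taking adjoints and using super-antisymmetry. This symmetry should remove the factor $\tfrac12$. Varying $U_\gamma\theta_\gamma$ in that part gives the factor $U_\gamma$ (respectively $\theta_\gamma$) times
\[
\sum_{\afa}\frac{(1-\Lmd^\afa)(1-\Lmd^{n-\gamma})}{1-\Lmd^n}(U_\afa\theta_\afa),
\]
after converting $(1-\Lmd^{-(n-\gamma)})\cdots/(1-\Lmd^{-n})$ by the identity above. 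Similarly, the $\gamma$-term of the first summand of $I_{2,2}$ contributes $U_\gamma$ (respectively $\theta_\gamma$) times $\frac{(1-\Lmd)(\Lmd^n-\Lmd^{n-\gamma})}{1-\Lmd^n}(W\fai)$. Together these assemble exactly into $U_\gamma\xi_\gamma$ and $\theta_\gamma\xi_\gamma$ with $\xi_\gamma$ as in \eqref{def of xi gamma}. The sign for the $\theta$-derivative comes from moving $\theta_\gamma$ to the front, which is consistent because $U_\gamma\theta_\gamma$ enters linearly. The only care needed is checking that the two orderings (the $\theta_\afa U_\afa$ factor on the left versus on the right) produce the same sign.

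\textbf{Shift part, $\theta_\gamma$-derivative.} Write the shift part as
\[
\frac12\int\sum\Big(\theta_\afa U_{\beta-s}(U_{\afa+s}\theta_\beta)^{[s]}-\theta_\afa U_{\afa+s}(U_{\beta-s}\theta_\beta)^{[\beta-\afa-s]}\Big)\td x.
\]
Differentiating in $\theta_\gamma$, I take the contributions with $\afa=\gamma$ directly and shift those with $\beta=\gamma$ back by $\Lmd^{-s}$ or $\Lmd^{\afa+s-\beta}$. By the skew-symmetry of $\mcalP_2$ (which holds by construction), the two contributions coincide and the $\tfrac12$ cancels. Reindexing with $t=\afa+s$ or $t=s$, using the convention \eqref{short notation} to extend the sums to $s=0$ where terms cancel, should give the double sum $\big(\sum_{s=0}^{\gamma}-\sum_{s=\afa}^{\afa+\gamma}\big)U_s(U_{\gamma+\afa-s}\theta_\afa)^{[\afa-s]}$ in \eqref{del I2 del theta gamma}.

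\textbf{Shift part, $U_\gamma$-derivative.} Here I will collect every occurrence of $U_\gamma$ in the shift part, namely where $\beta-s=\gamma$ or $\afa+s=\gamma$ in each of the two monomials. For each occurrence I shift so that $U_\gamma$ sits unshifted, which produces expressions $\theta_\beta(U_{\beta+\beta'-\gamma}\theta_{\beta'})^{[\beta'-\gamma]}$, and then read off the ranges. The occurrences with $U_\gamma$ on the $\Lmd^s$-right side should give the $\sum_{\beta,\beta'=\gamma}^{n-1}$ block, and the others the $-\sum_{\beta,\beta'=1}^{\gamma-1}$ block. This step is the main obstacle. The four index regions must match precisely, including the boundary terms at $s=\beta$ (where $U_0=1$) and the cases where $\afa+s>n-1$. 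I expect to handle these by writing everything with $U_k=0$ outside $[0,n]$ and $\theta_\afa$ only for $\afa\le n-1$, checking that the terms involving $U_n$ are exactly those feeding \eqref{260119-1659-2} rather than $\gamma<n$, and then verifying the formula against the $n=3$ matrix of the examples as a sanity check.
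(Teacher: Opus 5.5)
There is a genuine gap. Your route (vary $\theta_\gamma$ and $U_\gamma$ piece by piece in $I_{2,1}+I_{2,2}$) is the paper's, and your treatment of $I_{2,2}$ is correct. The problem is the claim that neither needs a real antisymmetrization: in fact neither the rational piece nor the shift piece of $I_{2,1}$ is skew-adjoint on its own; only their sum $\mcalP_2^{\afa\beta}$ is. Your own identity gives
\[
  -\left(\frac{(1-\Lmd^{n-\afa})(1-\Lmd^{\gamma})}{1-\Lmd^{n}}\right)^{\dag}
  =\frac{(1-\Lmd^{\afa})(1-\Lmd^{n-\gamma})}{1-\Lmd^{n}}+\Lmd^{\afa-\gamma}-1 .
\]
For example, with $n=3$, $\gamma=1$, $\afa=2$ the two halves of the rational part differ by $\Lmd-1$. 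The defect is exactly the $s=0$ term $U_\afa(1-\Lmd^{\beta-\afa})U_\beta$ that was moved into the rational part in \eqref{250815-1524}. Consequently the rational part contributes
\[
  U_\gamma\sum_{\afa=1}^{n-1}\frac{(1-\Lmd^\afa)(1-\Lmd^{n-\gamma})}{1-\Lmd^n}(U_\afa\theta_\afa)
  +\frac12\,U_\gamma\sum_{\afa=1}^{n-1}\left(\Lmd^{\afa-\gamma}-1\right)(U_\afa\theta_\afa)
\]
to $\delta I_2/\delta\theta_\gamma$. The same expression, with prefactor $\theta_\gamma$ in place of $U_\gamma$, is its contribution to $\delta I_2/\delta U_\gamma$. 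Likewise, the two halves of the shift part do not coincide. Computed correctly, the shift part gives the double sum in \eqref{del I2 del theta gamma} minus the extra term above. This is why \eqref{250815-1647-2} keeps both adjoint halves explicitly before combining them.

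For $\delta I_2/\delta\theta_\gamma$ this is repairable, and in fact gives a shortcut. Skew-adjointness of the full operator yields $\delta I_2/\delta\theta_\gamma=\sum_{\beta=1}^{n-1}\mcalP_2^{\gamma\beta}\theta_\beta+\mcalP_2^{\gamma,n+1}\fai$. Expanding $\mcalP_2^{\gamma\beta}$ via \eqref{250815-1524} and reindexing once, using $\sum_{t=0}^{\beta-1}-\sum_{t=\gamma+1}^{\gamma+\beta}=\sum_{t=0}^{\gamma}-\sum_{t=\beta}^{\beta+\gamma}$, gives \eqref{del I2 del theta gamma}.

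For $\delta I_2/\delta U_\gamma$ there is no such shortcut, and the step you flagged cannot close as planned. Set $G(\beta,\beta'):=\theta_\beta(U_{\beta+\beta'-\gamma}\theta_{\beta'})^{[\beta'-\gamma]}$ and let $B_\gamma$ denote the $\theta\theta$-sum in \eqref{del I2 del U gamma}. The four occurrences of $U_\gamma$ in the shift part give coefficient $\frac12$ on $\{\beta'>\gamma\}$ and on $\{\beta>\gamma\}$, and coefficient $-\frac12$ on $\{\beta'<\gamma\}$ and on $\{\beta<\gamma\}$. These are half-weighted strips, not the quadrant blocks you predicted from ``right-side'' versus ``left-side'' occurrences. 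Their sum agrees with $B_\gamma$ off the lines $\beta=\gamma$ and $\beta'=\gamma$, but not on them. In total the shift part yields $B_\gamma-\frac12\theta_\gamma\sum_{\beta}(\Lmd^{\beta-\gamma}-1)(U_\beta\theta_\beta)$. The missing term is exactly the rational-part residual that your symmetry claim discarded. You must antisymmetrize the rational kernel honestly, as the paper does; only then do the index regions assemble into \eqref{del I2 del U gamma}.
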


\begin{proof}
For each $1\leq\gamma\leq n-1$, it is clear that
\begin{equation}\label{250815-1647-1}
  \frac{\delta I_{2,2}}{\delta\theta_\gamma}
=
  U_\gamma\frac{(1-\Lmd)(\Lmd^n-\Lmd^{n-\gamma})}{1-\Lmd^n}(W\fai),
\quad
  \frac{\delta I_{2,2}}{\delta U_\gamma}
=
  \theta_\gamma
  \frac{(1-\Lmd)(\Lmd^n-\Lmd^{n-\gamma})}{1-\Lmd^n}(W\fai).
\end{equation}
Moreover, by straightforward calculation, we obtain
\begin{align}
\frac{\delta I_{2,1}}{\delta\theta_\gamma}
&=\,
  \frac12\sum_{\beta=1}^{n-1}
    U_\gamma\frac{(1-\Lmd^{n-\gamma})(1-\Lmd^\beta)}{1-\Lmd^n}(U_\beta\theta_\beta)
  -\frac12\sum_{\afa=1}^{n-1}
    U_\gamma
    \left(
      \frac{(1-\Lmd^{n-\afa})(1-\Lmd^\gamma)}{1-\Lmd^n}
    \right)^\dag(U_\afa\theta_\afa) \notag \\
&\quad
  +\frac12\sum_{\beta=1}^{n-1}\sum_{s=1}^{\beta}
    \left(
      U_{\beta-s}\Lmd^s U_{\gamma+s}
     -U_{\gamma+s}\Lmd^{\beta-\gamma-s}U_{\beta-s}
    \right)\theta_\beta \notag\\
&\quad
  -\frac12\sum_{\afa=1}^{n-1}\sum_{s=1}^{\gamma}
   \left(
     U_{\gamma-s}\Lmd^s U_{\afa+s}
    -U_{\afa+s}\Lmd^{\gamma-\afa-s} U_{\gamma-s}
   \right)^\dag\theta_\afa \notag\\
=&\,
  \sum_{\afa=1}^{n-1}U_\gamma
    \frac{(1-\Lmd^\afa)(1-\Lmd^{n-\gamma})}{1-\Lmd^n}(U_\afa\theta_\afa)
+
  \sum_{\afa=1}^{n-1}
    \left(
      \sum_{s=0}^{\gamma}-\sum_{s=\afa}^{\gamma+\afa}
    \right)U_s
    \left(
      U_{\gamma+\afa-s}\theta_\afa
    \right)^{[\afa-s]}.  \label{250815-1647-2}
\end{align}
By a similar argument, we can also obtain
\begin{equation}
\frac{\delta I_{2,1}}{\delta U_\gamma}
=
\sum_{\afa=1}^{n-1}
  \theta_\gamma
  \frac{(1-\Lmd^\afa)(1-\Lmd^{n-\gamma})}{1-\Lmd^n}(U_\afa\theta_\afa)
+
  \left(
    \sum_{\beta,\beta'=\gamma}^{n-1}
   -\sum_{\beta,\beta'=1}^{\gamma-1}
  \right)
    \theta_\beta
    \left(
      U_{\beta+\beta'-\gamma}\theta_{\beta'}
    \right)^{[\beta'-\gamma]}. \label{250815-1647-3}
\end{equation}
This lemma follows from \eqref{250815-1647-1}--\eqref{250815-1647-3}.
\end{proof}

Now we are to compute the Schouten bracket $[I_2, I_2]$. If we introduce
\begin{align}
  A_\gamma &:=\,
    \sum_{\afa=1}^{n-1}
      \left(
        \sum_{s=0}^{\gamma}-\sum_{s=\afa}^{\afa+\gamma}
      \right)
        U_s\left(U_{\gamma+\afa-s}\theta_\afa\right)^{[\afa-s]}, \label{def of A gamma}\\
  B_\gamma &:=\,
    \left(
      \sum_{\beta,\beta'=\gamma}^{n-1}
     -\sum_{\beta,\beta'=1}^{\gamma-1}
    \right)
      \theta_\beta\left(U_{\beta+\beta'-\gamma}\theta_{\beta'}\right)^{[\beta'-\gamma]}, \label{def of B gamma}
\end{align}
\begin{equation}\label{Theta and Psi}
  \Theta:=\sum_{\afa=1}^{n-1}
    \frac{1-\Lmd^\afa}{1-\Lmd^n}
      \left(U_\afa\theta_\afa\right),
  \quad
  \Psi:=\frac{1-\Lmd}{\Lmd^{-n}-1}(W\fai)
\end{equation}
for each $1\leq\gamma\leq n-1$, then $A_\gamma, \Theta, \Psi\in\hat\mcalA^1$ are odd variables, $B_\gamma\in\hat\mcalA^2$ are even variables,
and equations \eqref{del I2 del theta gamma}--\eqref{def of xi gamma} can be rewritten as
\begin{equation}\label{260119-1730-1}
\frac{\delta I_2}{\delta\theta_\gamma} = U_\gamma\xi_\gamma+A_\gamma,\quad
\frac{\delta I_2}{\delta U_\gamma} = \theta_\gamma\xi_\gamma+B_\gamma,
\end{equation}
\begin{equation}\label{xi = Psi and Theta}
  \xi_\gamma = \left(1-\Lmd^{-\gamma}\right)\Psi + (1-\Lmd^{n-\gamma})\Theta
\end{equation}
for $1\leq\gamma\leq n-1$.
The following two lemmas are useful:

\begin{lem}
  Under the above notations, the following identities hold true:
  \begin{align}
    \sum_{\gamma=1}^{n-1}\left(U_\gamma B_\gamma-\theta_\gamma A_\gamma\right)
  &=\,
    \sum_{\gamma,\afa=1}^{n-1}
      \left(
        U_n\theta_\gamma\left(U_{\gamma+\afa-n}\theta_\afa\right)^{[\afa-n]}
       -\theta_\gamma\left(U_{\gamma+\afa}\theta_\afa\right)^{[\afa]}
      \right),\label{UBTA-1}
\\
    \sum_{\gamma=1}^{n-1}\left(U_\gamma B_\gamma-\theta_\gamma A_\gamma\right)^{[\gamma]}
  &=\,
    \sum_{\gamma,\afa=1}^{n-1}
      \left(
        \left(U_n\theta_\gamma\right)^{[n]}\left(U_{\gamma+\afa-n}\theta_\afa\right)^{[\afa]}
       -\theta_\gamma\left(U_{\gamma+\afa}\theta_\afa\right)^{[\afa]}
      \right).\label{UBTA-2}
  \end{align}
\end{lem}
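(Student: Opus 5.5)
Both identities are exact identities in $\hat\mcalA$, not identities modulo $\p_x\hat\mcalA$. So I would prove them by expanding the definitions \eqref{def of A gamma}--\eqref{def of B gamma}, reindexing every term into one common normal form, and comparing coefficients index by index. The normal form I would use is the odd–odd product
\[
T(s,\gamma,\afa):=U_s\theta_\gamma\left(U_{\gamma+\afa-s}\theta_\afa\right)^{[\afa-s]},\qquad 0\leq s\leq n,\ 1\leq\gamma,\afa\leq n-1,
\]
with the conventions \eqref{short notation}. In particular $T=0$ unless $0\leq\gamma+\afa-s\leq n$.

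For \eqref{UBTA-1}, the term $\theta_\gamma A_\gamma$ is already a sum of $T(s,\gamma,\afa)$. It enters with coefficient $-\big(\mathbf 1_{0\leq s\leq\gamma}-\mathbf 1_{\afa\leq s\leq \afa+\gamma}\big)$. In $U_\gamma B_\gamma$ I would rename $(\gamma,\beta,\beta')\mapsto(s,\gamma,\afa)$, which turns it into $\sum T(s,\gamma,\afa)$ with coefficient $\mathbf 1_{\gamma,\afa\geq s}-\mathbf 1_{\gamma,\afa<s}$ and $1\leq s\leq n-1$.

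The boundary values of $s$ then produce the right-hand side directly:
\begin{itemize}
\item $s=0$ occurs only in $A_\gamma$ and gives $-\theta_\gamma(U_{\gamma+\afa}\theta_\afa)^{[\afa]}$.
\item $s=n$ occurs only in the second sum of $A_\gamma$, since $\afa\leq n\leq\afa+\gamma$, and gives $U_n\theta_\gamma(U_{\gamma+\afa-n}\theta_\afa)^{[\afa-n]}$.
\end{itemize}

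For $1\leq s\leq n-1$ the total coefficient is
\[
\mathbf 1_{\gamma\geq s,\afa\geq s}-\mathbf 1_{\gamma<s,\afa<s}-\mathbf 1_{\gamma\geq s}+\mathbf 1_{\afa\leq s}
\]
(using that $s\leq\afa+\gamma$ is automatic when $T\neq0$). Checking the six cases ($\gamma\gtrless s$ against $\afa<s$, $\afa=s$, $\afa>s$) shows this equals $\mathbf 1_{\afa=s}$. The leftover is $\sum_{\gamma,s}(U_\gamma\theta_\gamma)(U_s\theta_s)$, a symmetric sum of products of odd elements, which vanishes. This proves \eqref{UBTA-1}.

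For \eqref{UBTA-2} the shifts $[\gamma]$ no longer commute with the renaming, so I would use the shifted normal form
\[
S(s,\gamma,\afa):=(U_s\theta_\gamma)^{[s]}(U_{\gamma+\afa-s}\theta_\afa)^{[\afa]}.
\]
The $U_\gamma B_\gamma$ part becomes $\sum S$ under the same renaming as before, because the outer index $\gamma$ becomes $s$. The $\theta_\gamma A_\gamma$ part gives $(U_s\theta_\gamma)^{[\gamma]}(U_{\gamma+\afa-s}\theta_\afa)^{[\gamma+\afa-s]}$. I would substitute $(s,\gamma,\afa)\mapsto(\gamma+\afa-s,\afa,\gamma)$, which is an involution of the index set, and exchange the two odd factors. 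This costs a sign and brings each such term to the form $-S(s',\gamma',\afa')$, where the new range conditions are the images of the old ones.

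After this, the bookkeeping should repeat that of the first identity. The $s'=0$ and $s'=n$ boundary terms should give the two sums on the right of \eqref{UBTA-2}; the $s'=n$ term is exactly $(U_n\theta_\gamma)^{[n]}(U_{\gamma+\afa-n}\theta_\afa)^{[\afa]}$. The interior coefficient should again collapse to a diagonal residue, $\sum(U_\gamma\theta_\gamma)^{[\gamma]}(U_s\theta_s)^{[s]}$, which vanishes by oddness.

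The main obstacle is this second identity. One has to verify carefully that the involution maps the indicator ranges of $A_\gamma$ onto the expected ones, namely $s\leq\gamma$ becoming $\afa'\leq s'$ and so on, and that the sign from swapping the odd factors is consistent. I would check these by the same case table as above, and sanity-check the result against $n=2,3$ using the explicit $\mcalP_2$ given in the examples.
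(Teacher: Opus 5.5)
Your proposal is correct and is essentially the paper's argument. For \eqref{UBTA-1} the paper also rewrites $\sum_\gamma\theta_\gamma A_\gamma$ and $\sum_\gamma U_\gamma B_\gamma$ in the common form $U_s\theta_\gamma\left(U_{\gamma+\alpha-s}\theta_\alpha\right)^{[\alpha-s]}$, reads off the $s=0$ and $s=n$ terms, and discards the diagonal remainder by oddness; for \eqref{UBTA-2} it only says ``in the same way'', and your involution $(s,\gamma,\alpha)\mapsto(\gamma+\alpha-s,\alpha,\gamma)$ carries that out. One small slip in your anticipated bookkeeping: under the involution, $s\leq\gamma$ becomes $\gamma'\leq s'$ (not $\alpha'\leq s'$) and $\alpha\leq s$ becomes $s'\leq\alpha'$, so the interior coefficient collapses to $\mathbf 1_{\gamma=s}$ rather than $\mathbf 1_{\alpha=s}$; the residue is nevertheless $\big(\sum_{s}(U_s\theta_s)^{[s]}\big)^2=0$, as you predicted.
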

\begin{proof}
    By straightforward computation, one obtains
    \begin{align*}
      \sum_{\gamma=1}^{n-1}\theta_\gamma A_\gamma
  &=\,
    \sum_{\afa,\gamma=1}^{n-1}
    \left(
        \sum_{s=0}^{\gamma}-\sum_{s=\afa}^{\afa+\gamma}
      \right)
        U_s\theta_\gamma\left(U_{\gamma+\afa-s}\theta_\afa\right)^{[\afa-s]}, \\
      \sum_{\gamma=1}^{n-1} U_\gamma B_\gamma
  &=\,
    \sum_{\afa,\gamma=1}^{n-1}
       \left(
         \sum_{s=1}^{\gamma}
        -\sum_{s=\afa+1}^{n-1}
       \right)
       U_s\theta_\gamma\left(U_{\gamma+\afa-s}\theta_{\afa}\right)^{[\afa-s]}.
    \end{align*}
Therefore, we arrive at
\begin{align*}
&\,
  \sum_{\gamma=1}^{n-1}
    \left(
      U_\gamma B_\gamma - \theta_\gamma A_\gamma
    \right)
\\
=&\,
  \sum_{\afa,\gamma=1}^{n-1}
  \left(
    -\theta_\gamma\left(U_{\gamma+\afa}\theta_\afa\right)^{[\afa]}
    +U_\afa\theta_\gamma\cdot U_\gamma\theta_\afa
  \right)
  +\sum_{\afa,\gamma=1}^{n-1}
    \left(
      \sum_{s=\afa+1}^{\afa+\gamma}
     -\sum_{s=\afa+1}^{n-1}
    \right)U_s\theta_\gamma\left(U_{\gamma+\afa-s}\theta_{\afa}\right)^{[\afa-s]}
\\
=&\,
  -\sum_{\afa,\gamma=1}^{n-1}\theta_\gamma\left(U_{\gamma+\afa}\theta_\afa\right)^{[\afa]}
  +\sum_{1\leq\afa,\gamma\leq n-1 \atop \afa+\gamma\geq n}
    U_n\theta_\gamma
      \left(
        U_{\gamma+\afa-n}\theta_\afa
      \right)^{[\afa-n]}
\\
&\quad
 -\sum_{1\leq\afa,\gamma\leq n-1 \atop \afa+\gamma < n}
  \sum_{s=\gamma+\afa+1}^{n-1}
    U_s\theta_\gamma\left(U_{\gamma+\afa-s}\theta_{\afa}\right)^{[\afa-s]}
\\
=&\,
\sum_{\gamma,\afa=1}^{n-1}
      \left(
        U_n\theta_\gamma\left(U_{\gamma+\afa-n}\theta_\afa\right)^{[\afa-n]}
       -\theta_\gamma\left(U_{\gamma+\afa}\theta_\afa\right)^{[\afa]}
      \right),
\end{align*}
hence \eqref{UBTA-1} holds true.
The equation \eqref{UBTA-2} can be verified in the same way, and the details are omitted.
\end{proof}

\begin{lem}\label{CZL main lemma}
  The following identity holds true:
  \begin{equation}\label{CZL main identity}
    \int
      \sum_{\gamma=1}^{n-1}
        A_\gamma B_\gamma \td x
  =
  -\int\sum_{\beta=1}^{n-1}
    \left(\Lmd^\beta-1\right)
    \left(U_\beta\theta_\beta\right)
    \cdot
    \sum_{\gamma,\afa=1}^{n-1}
      \theta_\gamma \left(U_{\gamma+\afa}\theta_\afa\right)^{[\afa]}
    \td x.
  \end{equation}
\end{lem}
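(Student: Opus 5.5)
I will prove the identity by expanding both sides as sums of cubic monomials of the form $\int U_aU_b\,\theta_p\theta_q^{[s]}\theta_r^{[t]}\td x$, normalizing each monomial by a shift (integration by parts in the form $\int f^{[s]}\td x=\int f\td x$) so that the first $\theta$-factor carries no shift. Then I will match the resulting coefficients, using the anticommutation of the odd variables. Concretely, I substitute the definitions \eqref{def of A gamma} and \eqref{def of B gamma}. This gives
\[
  \sum_{\gamma=1}^{n-1}A_\gamma B_\gamma
  =\sum_{\gamma,\afa=1}^{n-1}
    \left(\sum_{s=0}^{\gamma}-\sum_{s=\afa}^{\afa+\gamma}\right)
    \left(\sum_{\beta,\beta'=\gamma}^{n-1}-\sum_{\beta,\beta'=1}^{\gamma-1}\right)
    U_s\left(U_{\gamma+\afa-s}\theta_\afa\right)^{[\afa-s]}
    \theta_\beta\left(U_{\beta+\beta'-\gamma}\theta_{\beta'}\right)^{[\beta'-\gamma]},
\]
which splits into four blocks. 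Each term contains three $U$'s whose indices sum to $\afa+\beta+\beta'$.

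The right-hand side has only two $U$'s. So the first task is to show that the $U$'s cancel down: the terms in which all three $U$-factors carry genuinely nonzero indices should cancel in pairs, and the surviving terms should come from the boundary cases where one factor is $U_0=1$ or where indices leave the range and produce $U_k=0$. This follows the mechanism of \eqref{UBTA-1}. Concretely, within each block I will relabel by $s'=\gamma+\afa-s$ (exchanging the roles of $U_s$ and $U_{\gamma+\afa-s}$), and I will swap $\beta\leftrightarrow\beta'$ together with a shift by $\beta'-\beta$. The goal is to show that the interior contributions of the block $\sum_{s=0}^\gamma$ cancel against those of $\sum_{s=\afa}^{\afa+\gamma}$, exactly as in the first step of the proof of \eqref{UBTA-1}. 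The swap introduces a sign from reordering $\theta_\beta$ and $\theta_{\beta'}$, and this sign is what produces the cancellation.

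After the cancellations, the $s=0$ endpoint (where $U_0=1$) and the $s=\afa+\gamma$ endpoint survive. These should reassemble into terms of the form $U_\beta\theta_\beta^{[\beta]}\,\theta_\gamma(U_{\gamma+\afa}\theta_\afa)^{[\afa]}$ and $U_\beta\theta_\beta\,\theta_\gamma(U_{\gamma+\afa}\theta_\afa)^{[\afa]}$, with opposite signs. I then recognize $(\Lmd^\beta-1)(U_\beta\theta_\beta)$ in these terms. To do this I will also use the degenerate restrictions $\beta=\beta'$, since $\theta_\beta\theta_\beta=0$ kills diagonal terms only when no shift is present. A useful consistency check is $n=2$, where $A_1B_1$ can be computed by hand, and then $n=3$.

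I expect the main obstacle to be the bookkeeping of summation ranges in the second step. The four blocks have different ranges ($\beta\ge\gamma$ versus $\beta<\gamma$), so after the reindexing the cancellation is only partial. The leftover boundary strips, where $\beta$ or $\beta'$ equals $\gamma$ or where $\afa+\gamma-s$ reaches $n$, must themselves cancel modulo $\p_x$-exact terms. My first attempt will be to organize all terms by the total index $k=\afa+\beta+\beta'-\gamma$ and by the shift pattern, as was done with the index changes $\gamma'=k+\beta-\gamma$ in the proof of $[I_1,I_1]=0$, and then to compare generating sums. If the ranges remain unmanageable, I will instead introduce formal symbols $\Lmd^{a}$ for each shift and verify the identity as an equality of polynomials in the shift variables, antisymmetrized in the $\theta$-slots.
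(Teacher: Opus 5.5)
Your proposal has a genuine gap. The step that carries all the content of the lemma is only asserted ("should cancel", "should reassemble"): the cancellation of the terms cubic in $U$, and the reassembly of the remainder into the right-hand side. Moreover, the mechanism you propose for that step cannot work.

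Both relabelings you plan, $s\mapsto\gamma+\alpha-s$ in $A_\gamma$ and $\beta\leftrightarrow\beta'$ in $B_\gamma$, keep $\gamma$ fixed and never move a factor between $A_\gamma$ and $B_\gamma$. Inside $A_\gamma$ the only cancellation available is the trivial one on the overlap $\alpha\le s\le\gamma$ of the two $s$-ranges. That is how the Appendix passes to $A_\gamma=A^{(0)}_\gamma-A^{(1)}_\gamma$, and the cubic terms that remain do not, in general, cancel inside a single $\gamma$-block.

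Already for $n=4$ this fails. $A_1B_1$ contains $U_1(U_2\theta_2)^{[1]}\,\theta_2(U_4\theta_3)^{[2]}$, with $U_4$ supplied by $B_1$. Modulo $\p_x\hat\mcalA$, the only term cancelling it is $-U_2(U_4\theta_3)^{[1]}\,\theta_2(U_1\theta_2)^{[-1]}$ from $A_3B_3$. There $U_4$ comes from $A_3$ and $U_1$ from $B_3$, so no relabeling within one block matches the two.

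Your description of the surviving terms is also incomplete. Besides the $s=0$ and $s=\alpha+\gamma$ endpoints of $A_\gamma$, the $U$-free terms $-\theta_\beta\theta_{\gamma-\beta}^{[-\beta]}$ of $B_\gamma$ multiply the quadratic part of $A_\gamma$ and produce further quadratic terms. There are also terms linear in $U$ which must cancel among themselves.

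The missing organizing idea is the one in the paper. The right-hand side equals $-\int A_0B_0\td x$, where $A_0=\sum_{\alpha}(\Lmd^\alpha-1)(U_\alpha\theta_\alpha)$ and $B_0=\sum_{\beta,\beta'}\theta_\beta(U_{\beta+\beta'}\theta_{\beta'})^{[\beta']}$. These are exactly the $\gamma=0$ instances of the defining formulas for $A_\gamma,B_\gamma$ (with $\theta_0:=0$). So the lemma is the homogeneous statement $\int\sum_{\gamma=0}^{n-1}A_\gamma B_\gamma\td x=0$, and no matching of leftover terms against the right-hand side is needed.

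One then splits $A_\gamma=A^{(0)}_\gamma-A^{(1)}_\gamma$ and $B_\gamma=B^{(0)}_\gamma-B^{(1)}_\gamma$, and applies the substitutions \eqref{CZL 5-changes}, for example $(\tilde\beta,\tilde s,\tilde\gamma,\tilde\alpha,\tilde\beta')=(\beta',\,\beta+\beta'-\gamma,\,\beta'+s-\gamma,\,\beta,\,\alpha)$. These change $\gamma$ and rotate the three pairs $(U,\theta)$ cyclically between $A_\gamma$ and $B_\gamma$; the $U$ of $B_\gamma$ becomes the outer $U_s$. This brings all four products to the common summand $U_s\theta_\beta(U_{\gamma+\alpha-s}\theta_\alpha)^{[\alpha-s]}(U_{\beta+\beta'-\gamma}\theta_{\beta'})^{[\beta'-\gamma]}$, over summation ranges that telescope. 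The $n=4$ pair above is precisely $(\gamma,s,\alpha,\beta,\beta')=(3,2,3,2,2)\mapsto(1,1,2,2,3)$ under this map.

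Your fallback of a symbol calculus in formal shift variables could in principle replace this. As written, however, it is a plan rather than an argument, and for general $n$ the entire difficulty lies in exactly this range bookkeeping.
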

\begin{proof}
  This lemma follows from tedious computations; see Appendix for details.
\end{proof}


From \eqref{260119-1659-1}--\eqref{260119-1659-4}, it is clear that
\begin{align}\label{260119-1729-1}
  \frac{\delta I_2}{\delta\theta_n}\frac{\delta I_2}{\delta U_n}
&=\,
  -U_n(\Lmd -1) (W\fai) \cdot
  \sum_{\afa,\beta = 1}^{n-1}
    \theta_\afa
    \left(
      U_{\afa+\beta-n}\theta_\beta
    \right)^{[\beta - n]},
\quad
  \frac{\delta I_2}{\delta\fai}\frac{\delta I_2}{\delta W} = 0.
\end{align}
Next, from \eqref{Theta and Psi} and \eqref{UBTA-1}--\eqref{UBTA-2}, we obtain
\begin{align*}
&\,
  \int \sum_{\gamma=1}^{n-1}
    \Big(
      U_\gamma\left(1-\Lmd^{n-\gamma}\right)\Theta\cdot B_\gamma
     +A_\gamma\theta_\gamma\left(1-\Lmd^{n-\gamma}\right)\Theta
    \Big)\td x
\\
=&\,
  \int \Theta\cdot\sum_{\gamma=1}^{n-1}
    \left(
      1-\Lmd^{\gamma-n}
    \right)
    \left(
      U_\gamma B_\gamma - \theta_\gamma A_\gamma
    \right)\td x
=
  \int\Theta\cdot\left(\Lmd^{-n}-1\right)
    \sum_{\gamma,\afa=1}^{n-1}
      \theta_\gamma \left(U_{\gamma+\afa}\theta_\afa\right)^{[\afa]}
  \td x
\\
=&\,
  \int\left(\Lmd^n-1\right)\Theta\cdot
    \sum_{\gamma,\afa=1}^{n-1}
      \theta_\gamma \left(U_{\gamma+\afa}\theta_\afa\right)^{[\afa]}
  \td x
=
  \int\sum_{\beta=1}^{n-1}
    \left(\Lmd^\beta-1\right)
    \left(U_\beta\theta_\beta\right)
    \cdot
    \sum_{\gamma,\afa=1}^{n-1}
      \theta_\gamma \left(U_{\gamma+\afa}\theta_\afa\right)^{[\afa]}
    \td x.
\end{align*}
Similarly, it can also be verified that
\begin{align*}
&\,
  \int\sum_{\gamma=1}^{n-1}
    \Big(
      U_\gamma\left(1-\Lmd^{-\gamma}\right)\Psi\cdot B_\gamma
     +A_\gamma \theta_\gamma\left(1-\Lmd^{-\gamma}\right)\Psi
    \Big)\td x \\
=&\,
  \int \left(\Lmd-1\right)(W\fai)
  \cdot
  \sum_{\gamma,\afa=1}^{n-1}
    U_n\theta_\gamma
    \left(
      U_{\gamma+\afa-n}\theta_\afa
    \right)^{[\afa-n]}
  \td x,
\end{align*}
then from \eqref{xi = Psi and Theta} we arrive at
\begin{align}
  \int\sum_{\gamma=1}^{n-1}
    \left(
      U_\gamma\xi_\gamma B_\gamma + A_\gamma\theta_\gamma\xi_\gamma
    \right)\td x
&=\,
  \int\sum_{\beta=1}^{n-1}
    \left(\Lmd^\beta-1\right)
    \left(U_\beta\theta_\beta\right)
    \cdot
    \sum_{\gamma,\afa=1}^{n-1}
      \theta_\gamma \left(U_{\gamma+\afa}\theta_\afa\right)^{[\afa]}
    \td x
\notag\\
&\quad
+
 \int \left(\Lmd-1\right)(W\fai)
  \cdot
  \sum_{\gamma,\afa=1}^{n-1}
    U_n\theta_\gamma
    \left(
      U_{\gamma+\afa-n}\theta_\afa
    \right)^{[\afa-n]}
  \td x. \label{260119-1732}
\end{align}

Therefore, from \eqref{260119-1730-1} and \eqref{CZL main identity}--\eqref{260119-1732}, we obtain
\begin{align*}
&\,
\frac 12 [I_2, I_2]
=
  \int
    \left(
      \frac{\delta I_2}{\delta\theta_n}
      \frac{\delta I_2}{\delta U_n}
     +
      \frac{\delta I_2}{\delta \fai}
      \frac{\delta I_2}{\delta W}
     +
      \sum_{\gamma = 1}^{n-1}
      \frac{\delta I_2}{\delta\theta_\gamma}
      \frac{\delta I_2}{\delta U_\gamma}
    \right)\td x
=0,
\end{align*}
here we notice that $\xi_\gamma\xi_\gamma = 0$, because $\xi_\gamma\in\hat\mcalA^1$ are odd variables.

\subsection{The Hamiltonian structures $\mcalP_1$ and $\mcalP_2$ are compatible}
\label{subsection P1P2 compatible}

  Let us show that $\mcalP_1$ and $\mcalP_2$ are compatible,
  i.e. $[I_1, I_2]=0$.
  Although this can also be directly verified through straightforward computation,
  it is noteworthy that by making the following observation, one can significantly reduce the computational effort.
  If we consider
  \[
    J:= \int \theta_n \td x\, \in \hat\mcalF^1,
  \]
  then from \eqref{260119-1659-2} and
  in comparison with \eqref{I1}--\eqref{tilde theta fai}, we obtain
  \begin{align}
    &\,
    [J, I_2] = \int \frac{\delta I_2}{\delta U_n} \td x
=
    I_1 + \underbrace{
      \int \sum_{\afa,\beta = 1}^{n}
        U_{\afa+\beta-n-1}\theta_\afa\tilde\theta_\beta^{[n+1-\afa]}\td x
      }_{:=\tilde I_1}.
  \end{align}
  Then $[I_2, I_2]=0$ and the properties \eqref{graded commutative law}--\eqref{graded Jacobi identity} of the Schouten bracket imply
  \begin{align}\label{J-Schouten}
    0&=\,\frac12[J,[I_2, I_2]]
  = [[J,I_2],I_2] = [I_1+\tilde I_1, I_2].
  \end{align}
Therefore we only need to verify that $[\tilde I_1, I_2]=0$.

Now we proceed to compute $[\tilde I_1, I_2]$.
The variational derivatives of $\tilde I_1$ are given by
\begin{align}
\frac{\delta\tilde I_1}{\delta\fai} &= 0,\quad
\frac{\delta\tilde I_1}{\delta W} =
  -\sum_{\afa,\beta=1}^{n}
    \theta_\afa
    \left(U_{\afa+\beta-n-1}\theta_\beta\right)^{[\beta-n-1]},
\label{var der til I1 - 1}\\
  \frac{\delta\tilde I_1}{\delta\theta_n}
&=\,
  \sum_{\afa=1}^{n}
    \left(
      U_{\afa-1}\tilde\theta_\afa^{[1]}
     -W\left(U_{\afa-1}\theta_\afa\right)^{[\afa-n-1]}
    \right),\quad
\frac{\delta\tilde I_1}{\delta U_n} = 0,  \label{var der til I1 - 2}
\\
  \frac{\delta\tilde I_1}{\delta\theta_\gamma}
&=\,
  \sum_{\afa=1}^{n}
    \left(
      U_{\gamma+\afa-n-1}\tilde\theta_\afa^{[n+1-\gamma]}
     -W\left(U_{\gamma+\afa-n-1}\theta_\afa\right)^{[\afa-n-1]}
    \right),
\\
  \frac{\delta\tilde I_1}{\delta U_\gamma}
&=\,
  \sum_{\afa=\gamma+1}^{n}
    \theta_\afa\tilde\theta_{n+1+\gamma-\afa}^{[n+1-\afa]} \label{var der til I1 - 4}
\end{align}
for all $1\leq\gamma\leq n-1$, where the short notation $\tilde\theta_\afa:= W\theta_\afa$ is introduced in \eqref{tilde theta fai}.
Then by straightforward computation, we have
\begin{align*}
  \sum_{\gamma=1}^{n-1}
  \left(
    \frac{\delta\tilde I_1}{\delta\theta_\gamma}\theta_\gamma
   +\frac{\delta\tilde I_1}{\delta U_\gamma} U_\gamma
  \right)
&=\,
  -\sum_{\gamma=1}^{n-1}\sum_{\afa=1}^{n}
    \left( U_{\gamma+\afa-n-1}\theta_\afa\right)^{[\afa-n-1]}\tilde\theta_\gamma
   +\sum_{\afa=1}^{n}
   \left( U_{\afa-1}\theta_n\tilde\theta_\afa^{[1]} - \theta_{n+1-\afa}\tilde\theta_\afa^{[\afa]}\right),
\\
  \sum_{\gamma=1}^{n-1}
  \left(
    \frac{\delta\tilde I_1}{\delta\theta_\gamma}\theta_\gamma
   +\frac{\delta\tilde I_1}{\delta U_\gamma} U_\gamma
  \right)^{[\gamma]}
&=\,
  -\sum_{\gamma=1}^{n-1}\sum_{\afa=1}^{n}
    \left( U_{\gamma+\afa-n-1}\theta_\gamma\right)^{[\gamma]}\tilde\theta_\afa^{[n+1]} \\
&\qquad
   +\sum_{\afa=1}^{n}
   \left(
     \left(U_{\afa-1}\theta_\afa\right)^{[\afa-1]}\tilde\theta_n^{[n]} -
     \theta_{n+1-\afa}\tilde\theta_\afa^{[\afa]}\right),
\end{align*}
and therefore, it can be verified directly that
\begin{align}
&\,
  \left(
    1-\Lmd^{n+1}\right)
    \left(\frac{\delta\tilde I_1}{\delta W}W\right)
  +
  \sum_{\gamma=1}^{n-1}
    \left( 1-\Lmd^\gamma\right)
    \left(
    \frac{\delta\tilde I_1}{\delta\theta_\gamma}\theta_\gamma
   +\frac{\delta\tilde I_1}{\delta U_\gamma} U_\gamma
  \right) \notag \\
=&\,
  \left(
    1-\Lmd^n
  \right)
  \left(
    \sum_{\afa=1}^{n}
      \left(
        U_{\afa-1}\theta_n\tilde\theta_\afa^{[1]}
       +\left(U_{\afa-1}\theta_\afa\right)^{[\afa-n-1]}\tilde\theta_n
      \right)
  \right), \label{260203-1640-1}
\\
  &\,
  \left(
    1-\Lmd\right)
    \left(\frac{\delta\tilde I_1}{\delta W}W\right)
  +
  \sum_{\gamma=1}^{n-1}
    \left( 1-\Lmd^{\gamma-n}\right)
    \left(
    \frac{\delta\tilde I_1}{\delta\theta_\gamma}\theta_\gamma
   +\frac{\delta\tilde I_1}{\delta U_\gamma} U_\gamma
  \right) \notag \\
=&\,
  \left(
    \Lmd^{-n}-1
  \right)
  \left(
    \sum_{\afa=1}^{n}
      \theta_{n+1-\afa}\tilde\theta_\afa^{[\afa]}
  \right). \label{260203-1640-2}
\end{align}

Let us recall the notations $\Theta, \Psi$ and $\xi_\gamma$
introduced in \eqref{def of xi gamma} and \eqref{Theta and Psi}--\eqref{xi = Psi and Theta},
which appear in the variational derivatives of $I_2$. If we additionally introduce
\[
  \xi_{n+1}:= \left(1-\Lmd^{-n-1}\right)\Psi + \left(1-\Lmd^{-1}\right)\Theta,
\]
then \eqref{260119-1659-3} can be rewritten as
\begin{equation}
  \frac{\delta I_2}{\delta\fai}
=
  W\xi_{n+1} + W\left(1-\Lmd^{-1}\right)
  \left( U_n\theta_n\right).
\end{equation}
From \eqref{var der til I1 - 2},
\eqref{260203-1640-1}--\eqref{260203-1640-2}
and the integration by parts formula analogous to \eqref{by parts-260329}, we obtain
\begin{align}
&\,
  \int\left(
    \frac{\delta\tilde I_1}{\delta W}\cdot W\xi_{n+1}
   +\sum_{\gamma=1}^{n-1}
     \left(
       \frac{\delta\tilde I_1}{\delta\theta_\gamma}\cdot \theta_\gamma\xi_\gamma
      +\frac{\delta\tilde I_1}{\delta U_\gamma}\cdot U_\gamma\xi_\gamma
     \right)
  \right)\td x \notag
\\
=&\,
  \int\left(
    \left(1-\Lmd^{n+1}\right)
    \left(\frac{\delta\tilde I_1}{\delta W}W\right)
   +\sum_{\gamma=1}^{n-1}
      \left(1-\Lmd^{\gamma}\right)
      \left(
        \frac{\delta\tilde I_1}{\delta\theta_\gamma}\theta_\gamma
       +\frac{\delta\tilde I_1}{\delta U_\gamma}U_\gamma
      \right)
  \right)\cdot\Psi\td x  \notag
\\
&\quad
  +\int
    \left(
      \left(
    1-\Lmd\right)
    \left(\frac{\delta\tilde I_1}{\delta W}W\right)
  +
  \sum_{\gamma=1}^{n-1}
    \left( 1-\Lmd^{\gamma-n}\right)
    \left(
    \frac{\delta\tilde I_1}{\delta\theta_\gamma}\theta_\gamma
   +\frac{\delta\tilde I_1}{\delta U_\gamma} U_\gamma
  \right)
    \right)\cdot\Theta\td x \notag
\\
=&\,
  \int
    \left(
    \sum_{\afa=1}^{n}
      \left(
        U_{\afa-1}\theta_n\tilde\theta_\afa^{[1]}
       +\left(U_{\afa-1}\theta_\afa\right)^{[\afa-n-1]}\tilde\theta_n
      \right)
  \right) \cdot
    \left( 1-\Lmd^{-n}\right)\Psi \td x \notag
\\
&\quad
  +\int \left(
    \sum_{\afa=1}^{n}\theta_{n+1-\afa}\tilde\theta_\afa^{[\afa]}
  \right)\cdot \left(\Lmd^n-1\right)\Theta\td x \notag\\
=&\,
  -\int\frac{\delta\tilde I_1}{\delta\theta_n}\cdot\theta_n(\Lmd-1)(W\fai)\td x
  +\int \left(
    \sum_{\afa=1}^{n}\theta_{n+1-\afa}\tilde\theta_\afa^{[\afa]}
  \right)
  \left(
    \sum_{\beta=1}^{n-1}
      \left(\Lmd^\beta - 1\right)\left( U_\beta\theta_\beta\right)
  \right)
  \td x. \label{kill Lmd 2}
\end{align}

By using
\eqref{def of A gamma}--\eqref{260119-1730-1},
\eqref{var der til I1 - 1}--\eqref{var der til I1 - 4} and \eqref{kill Lmd 2}, we have
\begin{align}
  [\tilde I_1, I_2]
&=\,
  \int\left(
    \frac{\delta\tilde I_1}{\delta\theta_n}
    \frac{\delta       I_2}{\delta     U_n}
   +\frac{\delta\tilde I_1}{\delta     U_n}
    \frac{\delta       I_2}{\delta\theta_n}
   +\frac{\delta\tilde I_1}{\delta       W}
    \frac{\delta       I_2}{\delta    \fai}
   +\frac{\delta\tilde I_1}{\delta    \fai}
    \frac{\delta       I_2}{\delta       W}
  +\sum_{\gamma=1}^{n-1}
      \left(
        \frac{\delta\tilde I_1}{\delta\theta_\gamma}
        \frac{\delta       I_2}{\delta     U_\gamma}
       +\frac{\delta\tilde I_1}{\delta     U_\gamma}
        \frac{\delta       I_2}{\delta\theta_\gamma}
      \right)
  \right)\td x \notag
\\
=&\,
  \int \frac{\delta\tilde I_1}{\delta\theta_n}\cdot\theta_n(\Lmd-1)(W\fai)\td x
 -\int \frac{\delta\tilde I_1}{\delta\theta_n}
    \sum_{\afa,\beta=1}^{n-1}
      \theta_\afa\left(U_{\afa+\beta-n}\theta_\beta\right)^{[\beta-n]}\td x \notag
\\
&\quad
  +\int\frac{\delta\tilde I_1}{\delta W} W\left(1-\Lmd^{-1}\right)\left(U_n\theta_n\right)\td x
  +\int\sum_{\gamma=1}^{n-1}
    \left(
      \frac{\delta\tilde I_1}{\delta\theta_\gamma} B_\gamma
     +\frac{\delta\tilde I_1}{\delta U_\gamma} A_\gamma
    \right) \td x \notag
\\
&\quad +
  \int\left(
    \frac{\delta\tilde I_1}{\delta W}\cdot W\xi_{n+1}
   +\sum_{\gamma=1}^{n-1}
     \left(
       \frac{\delta\tilde I_1}{\delta\theta_\gamma}\cdot \theta_\gamma\xi_\gamma
      +\frac{\delta\tilde I_1}{\delta U_\gamma}\cdot U_\gamma\xi_\gamma
     \right)
  \right)\td x \notag \\
=&\,
  \int\sum_{\gamma=1}^{n-1}
    \left(
      \frac{\delta\tilde I_1}{\delta\theta_\gamma} B_\gamma
     +\frac{\delta\tilde I_1}{\delta U_\gamma} A_\gamma
    \right) \td x
 -\int \frac{\delta\tilde I_1}{\delta\theta_n}\sum_{\afa,\beta=1}^{n-1}\theta_\afa\left(U_{\afa+\beta-n}\theta_\beta\right)^{[\beta-n]}\td x \notag
\\
&\quad
  +\int\left(
    \sum_{\afa=1}^{n}\theta_{n+1-\afa}\tilde\theta_\afa^{[\afa]}
    \right)
    \left(
      \sum_{\beta=1}^{n-1}
        \left(\Lmd^{\beta}-1\right)\left(U_\beta\theta_\beta\right)
    \right)\td x \notag \\
&\quad
  -\int\sum_{\afa,\beta=1}^{n}
    \tilde\theta_\afa
    \left(
      U_{\afa+\beta-n-1}\theta_\beta
    \right)^{[\beta-n-1]}
    \left(1-\Lmd^{-1}\right)
    \left(U_n\theta_n\right) \td x \notag \\
=&\,
  \int\sum_{\gamma=0}^{n}
      \sum_{\afa=1}^{n}
      \left(
        \sum_{\beta,\beta'=\gamma}^{n-1}
       -\sum_{\beta,\beta'=1}^{\gamma-1}
      \right) \notag\\
&\quad \times
      \left(
        U_{\gamma+\afa-n-1}\tilde\theta_\afa^{[n+1-\gamma]}
       -W\left(U_{\gamma+\afa-n-1}\theta_\afa\right)^{[\afa-n-1]}
      \right)
      \theta_\beta\left( U_{\beta+\beta'-\gamma}\theta_{\beta'}\right)^{[\beta'-\gamma]}\td x\notag\\
&\quad +\int
  \sum_{\gamma=0}^{n}\sum_{\beta=\gamma+1}^{n}
  \sum_{\afa=1}^{n-1}
  \left(
    \sum_{s=0}^{\gamma} - \sum_{s=\afa}^{\afa+\gamma}
  \right)
    \theta_{n+1+\gamma-\beta}\tilde\theta_\beta^{[\beta-\gamma]}
    U_s\left(U_{\gamma+\afa-s}\theta_\afa\right)^{[\afa-s]}\td x  \notag
\\
&\quad
  -\int\sum_{\afa,\beta=1}^{n}
    \tilde\theta_\afa
    \left(
      U_{\afa+\beta-n-1}\theta_\beta
    \right)^{[\beta-n-1]}
    \left(1-\Lmd^{-1}\right)
    \left(U_n\theta_n\right) \td x \notag \\
=&\,
  \int \sum_{\gamma=0}^{n}\sum_{\afa=1}^{n}
    \left(
      \sum_{\beta=1}^{n-1}\sum_{\beta'=\gamma}^{n-1}
     -\sum_{\beta=1}^{\gamma-1}\sum_{\beta'=1}^{n-1}
    \right)
  \underbrace{
  \tilde\theta_\afa
  \left(
    U_{\gamma+\afa-n-1}\theta_\beta
  \right)^{[\gamma-n-1]}
  \left(
    U_{\beta+\beta'-\gamma}\theta_{\beta'}
  \right)^{[\beta'-n-1]}}_{:=\mathbb F(\afa,\gamma,\beta,\beta')} \td x \notag \\
&\quad
  -\int \underbrace{
  \sum_{\gamma=0}^{n}\sum_{\afa=1}^{n}
    \left(\sum_{\beta,\beta'=\gamma}^{n-1}
          -\sum_{\beta,\beta'=1}^{\gamma-1}\right)
    \tilde\theta_\beta
    \left(U_{\beta+\beta'-\gamma}\theta_{\beta'}\right)^{[\beta'-\gamma]}
    \left(
      U_{\gamma+\afa-n-1}\theta_\afa
    \right)^{[\afa-n-1]} }_{:= Z_1}\td x \notag \\
&\quad
  +\int \underbrace{
  \sum_{\gamma=0}^{n}\sum_{\beta=\gamma+1}^{n}\sum_{\afa=1}^{n-1}
    \left(\sum_{s=0}^{\gamma}-\sum_{s=\afa}^{\afa+\gamma}\right)
    \tilde\theta_\beta
    \left(
      U_{\gamma+\afa-s}\theta_\afa
    \right)^{[\gamma+\afa-s-\beta]}
    \left(
      U_s\theta_{n+1+\gamma-\beta}
    \right)^{[\gamma-\beta]}}_{:= Z_2} \td x \notag
\\
&\quad
  -\int\sum_{\afa,\beta=1}^{n}
    \tilde\theta_\afa
    \left(
      U_{\afa+\beta-n-1}\theta_\beta
    \right)^{[\beta-n-1]}
    \left(1-\Lmd^{-1}\right)
    \left(U_n\theta_n\right) \td x. \label{260215-1854}
\end{align}

Applying the following changes of summation indices
\[
  \begin{cases}
    \hat\afa = \beta, \\
    \hat\gamma = n+1+\beta'-\gamma, \\
    \hat\beta = \beta', \\
    \hat\beta' = \afa,
  \end{cases}
\qquad
  \begin{cases}
    \tilde \afa = \beta, \\
    \tilde\gamma = n+1+\gamma+\afa-s-\beta, \\
    \tilde\beta = \afa, \\
    \tilde\beta' = n+1+\gamma-\beta
  \end{cases}
\]
to $Z_1$ and $Z_2$, respectively, in the above equation, one can verify that
\begin{align*}
  Z_1 &=\,
      \sum_{\hat\afa = 1}^{n-1}
    \left(
      \sum_{\hat\gamma = n+1}^{2n}
      \sum_{\hat\beta=1}^{\min\{n-1, \hat\gamma+\hat\afa-n-1\}}
      \sum_{\hat\beta'=1}^{n}
     -
      \sum_{\hat\gamma =0}^{n}
      \sum_{\hat\beta=\hat\gamma + \hat\afa -n}^{\hat\gamma-1}
      \sum_{\hat\beta'=1}^{n}
    \right)
    \mathbb F(\hat\afa,\hat\gamma,\hat\beta,\hat\beta'),
\\
  Z_2 &=\,
      \sum_{\tilde\afa = 1}^{n}
   \left(
      \sum_{\tilde\gamma = 0}^{2n}
      \sum_{\tilde\beta=1}^{\min\{n-1, \tilde\gamma+\tilde\afa-n-1\}}
      \sum_{\tilde\beta'=1}^{n}
     -\sum_{\tilde\gamma = 0}^{n}
      \sum_{\tilde\beta=1}^{n-1}
      \sum_{\tilde\beta'=\tilde\gamma}^{n}
   \right)\mathbb F(\tilde\afa,\tilde\gamma,\tilde\beta,\tilde\beta').
\end{align*}
Substituting these two expressions into the right-hand side of \eqref{260215-1854},
we obtain $[\tilde I_1, I_2] = 0$.
Therefore, combining it with \eqref{J-Schouten}, we have $[I_1, I_2]=0$.

\section{The associated generalized Frobenius manifold}

We are to construct a generalized Frobenius manifold structure $\mcalM_{(n,1)}$ on the space of a particular class of rational functions,
and show that the dispersionless limits of the flows of the $(n,1)$-type RR2T belong to the Principal Hierarchy of this generalized Frobenius manifold.

\subsection{Generalized Frobenius manifolds}

Let us recall some basic notions of generalized Frobenius manifolds.
\begin{defn}(\cite{GFM}). Let $\mcalM$ be a smooth or analytic manifold.
A generalized Frobenius structure on $\mcalM$ is a quadruple $(\eta,c,e,E)$,
where $e$ and $E$ are vector fields on $\mcalM$,
$\eta$ is a non-degenerate symmetric $(0,2)$-tensor on $\mcalM$,
and $c$ is a symmetric $(0,3)$-tensor on $\mcalM$, such that:
\begin{enumerate}
  \item At each point $p\in\mcalM$, $(\,\cdot\,,\eta,e)$ forms a Frobenius algebra structure on $T_p\mcalM$
  with unity $e$, where the multiplication $\cdot\,\colon T\mcalM\times T\mcalM\to T\mcalM$ is uniquely determined by
  \[
    \eta(\p'\cdot\p'', \p''') = c(\p', \p'', \p''')
  \]
  for all vector fields $\p',\p'',\p'''$ on $\mcalM$;
  \item The metric $\eta$ is flat, i.e. the Riemann curvature of $\eta$ vanishes;
  \item The $(0,4)$-tensor $\nabla c$ is symmetric, where $\nabla$ is the Levi-Civita connection of $\eta$;
  \item $\nabla\nabla E=0$, $\nabla E$ is diagonalizable, and there exists $d\in\bbC$ such that
  \begin{align*}
    \mcalL_E\left(\p'\cdot\p''\right) &=\, \left(\mcalL_E\p'\right)\cdot\p''
        + \p'\cdot\left(\mcalL_E\p''\right) + \p'\cdot\p'', \\
    \mcalL_E\left(\eta(\p', \p'')\right) &=\,
      \eta\left(\mcalL_E\p', \p''\right)
     +\eta\left(\p', \mcalL_E\p''\right)
     +(2-d)\eta\left(\p', \p''\right)
  \end{align*}
  for all vector fields $\p',\p''$ on $\mcalM$, where $\mcalL_E$ is the Lie derivative along $E$.
\end{enumerate}
\end{defn}
\noindent
The manifold $\mcalM$ equipped with such $(\eta,c,e,E)$ satisfying the above axioms is called a \textit{generalized Frobenius manifold},
the vector fields $e$, $E$ are called the unit vector field and the Euler vector field, respectively,
and the constant $d$ is called the \textit{charge} of $\mcalM(\eta,c,e,E)$.
Compared with the definition of the ordinary Dubrovin--Frobenius manifolds \cite{Dubrovin1996,Normal forms},
the only modification here is that the flatness condition $\nabla e = 0$ for the unit vector field is removed.

Let $\mcalM(\eta,c,e,E)$ be an $m$-dimensional generalized Frobenius manifold,
one can fix a family of flat local coordinates $\bsz=(z^1,\dots,z^m)$ of the metric $\eta$,
and then the $4$-symmetry of $\nabla c$ implies that there locally exists a function $F$ on $\mcalM$ such that
\[
  c\left(\pp{z^\afa}, \pp{z^\beta}, \pp{z^\gamma}\right)
=
  \frac{\p^3 F}{\p z^\afa \p z^\beta \p z^\gamma},\quad 1\leq \afa,\beta,\gamma\leq m.
\]
Such a function $F$ is called the \textit{potential} of $\mcalM(\eta,c,e,E)$,
and satisfies the Witten--Dijkgraaf--Verlinde--Verlinde (WDVV) equation.


To study the generalized Frobenius manifold associated with the RR2T of $(n,1)$-type,
let us consider the space of Landau--Ginzburg superpotentials
consisting of rational functions of the form
\begin{equation}\label{superpotential lmd p}
  \lmd(p) := \frac{p}{p-W}\left(p^n + U_1 p^{n-1} + \cdots + U_n\right),
\end{equation}
which correspond to the Lax operator $L$ \eqref{Lax L}.
The variables $(v^1,\dots,v^{n+1}):=(U_1,\dots,U_n;W)$ form local coordinates on this space.
We still use the additional notation introduced in \eqref{short notation}.
Introduce the notation $(\cdot)':=\frac{\td}{\td p}$, and then notice that
\begin{equation} \label{lmd' p}
  \lmd'(p) = \frac{\tilde\lmd(p)}{(p-W)^2},
\end{equation}
here the polynomial $\tilde\lmd(p)\in C^\infty(\mcalM)[p]$ has the form
\begin{equation}\label{lmd tilde p}
\begin{split}
\tilde\lmd(p) &:=\,
  \sum_{k=0}^{n}(n-k)U_kp^{n+1-k}
  -\sum_{k=0}^{n}(n+1-k)WU_kp^{n-k}
=
  n\prod_{s=1}^{n+1} (p-p_s),
\end{split}
\end{equation}
where $p_1,\cdots,p_{n+1}$ are all the roots of $\tilde\lmd(p)$.
Let $\mcalU$ be the open subset of this function space such that $\{p_1,\dots,p_{n+1}; W, 0\}$ are all distinct.
Then there are a $(0,2)$-tensor $\eta$ and a $(0,3)$-tensor $c$ on $\mcalU$
given by the following Landau--Ginzburg formulae
\begin{align}
  \eta\left(
    \p', \p''
  \right)
&:=\,
  \sum_{s=1}^{n+1}\Res_{p=p_s}
  \left(
    \frac{\p'\lmd(p)\,\p''\lmd(p)}{\lmd'(p)}
    \frac{\td p}{p^2}
  \right),
\label{LG eta}\\
   c\left(
    \p', \p'', \p'''
  \right)
&:=\,
  \sum_{s=1}^{n+1}\Res_{p=p_s}
  \left(
    \frac{\p'\lmd(p)\,\p''\lmd(p)\,\p'''\lmd(p)}{\lmd'(p)}
    \frac{\td p}{p^2}
  \right),
\label{LG c}
\end{align}
respectively, here $\p',\p''$ and $\p'''$ are vector fields on $\mcalU$.
We shall construct a generalized Frobenius structure $\mcalM_{(n,1)}$ on $\mcalU$
such that the metric and the Frobenius multiplication coincide with the above $\eta$ and $c$ respectively.

\begin{rmk}
The above formulae for the Landau--Ginzburg superpotential are very similar to that given by (4.1)--(4.2) and (4.5) in \cite{Yemo Wu}.
However, it turns out that the former construction yields a generalized Frobenius manifold with non-flat unity,
while the latter gives a Dubrovin–Frobenius manifold with flat unity.
\end{rmk}

\subsection{The flat metric on the space of a particular class of rational functions}
\label{subsection:flat metric}

Before constructing the generalized Frobenius structure,
we first investigate the relation between the above metric $\eta$\,\eqref{LG eta} and the Hamiltonian structure $\mcalP_1$
of the $(n,1)$-type RR2T introduced in \eqref{250812-1309-1}--\eqref{250812-1309-2}.
Note that the entries of $\mcalP_1$ has the form
\[
  \veps^{-1}\mcalP_1^{ij} =
  \left(
  g_1^{ij}\p_x + \sum_{k=1}^{n+1}\Gamma_{1;k}^{ij}v_x^k
  \right) + O(\veps), \quad 1\leq i,j\leq n+1
\]
for certain coefficients $g_1^{ij}$ and $\Gamma_{1;k}^{ij}$.
The coefficients $g_1^{ij}$ yield a contravariant metric
\begin{equation} \label{contravariant g1 from P1}
  g_1 = \sum_{i,j=1}^{n+1}g_1^{ij}\p_{v^i}\otimes\p_{v^j}
\end{equation}
on $\mcalM$. From \eqref{250812-1309-1}--\eqref{250812-1309-2} and by straightforward computation,
one can verify that the coefficients $g_1^{ij}$ of this contravariant metric
with respect to the local coordinates $(v^1,\dots,v^n; v^{n+1})=(U_1,\dots,U_n; W)$ are
\begin{equation}\label{g1 afa beta}
\left\{
\begin{split}
  g_1^{\afa\beta} &=\, (2n-\afa-\beta)U_{\afa+\beta-n} - (2n+2-\afa-\beta)WU_{\afa+\beta-n-1}, \\
  g_1^{\afa,n} &=\, g_1^{n,\afa} = (\afa-n-2)WU_{\afa-1}, \\
  g_1^{n,n} &=\, -2 WU_{n-1},\quad g_1^{n,n+1} = g_1^{n+1,n} = W, \\
  g_1^{\afa,n+1} &=\, g_1^{n+1,\afa}=g_1^{n+1,n+1} = 0
\end{split}
\right.
\end{equation}
for all indices $1\leq\afa,\beta\leq n-1$.

\begin{prop}\label{prop: eta and P1}
  The inverse of the contravariant metric $g_1$ \eqref{contravariant g1 from P1} induced by the Hamiltonian structure $\mcalP_1$
  coincides with the metric $\eta$ introduced by \eqref{LG eta}.
  Moreover, $\eta$ is non-degenerate, and the Riemann curvature tensor of $\eta$ vanishes, i.e. $\eta$ is flat.
\end{prop}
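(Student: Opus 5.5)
Our plan is to avoid residue computations in the coordinates $(U_1,\dots,U_n;W)$ and to work directly with the Laurent expansions of $\lmd(p)$ at the distinguished points $p=\infty$, $p=0$ and $p=W$, which are the only poles of the differential forms involved away from the critical points $p_s$. Write $\p'\lmd(p) = \frac{p}{p-W}\sum_{\afa}\p'U_\afa p^{n-\afa} + \frac{p\,A(p)}{(p-W)^2}\p'W$, where $A(p)=p^n+U_1p^{n-1}+\cdots+U_n$. The integrand $\frac{\p'\lmd\,\p''\lmd}{\lmd'}\frac{\td p}{p^2}$ is a rational differential whose poles lie only at $p_s$, $\infty$, $0$ and $W$. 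By the residue theorem, the sum in \eqref{LG eta} is therefore minus the sum of residues at $\infty$, $0$ and $W$. At $p=W$ the factors $(p-W)^{-2}$ coming from $\p W$-variations cancel against $\lmd'=\tilde\lmd/(p-W)^2$, so that residue is computable; at $p=0$ the factor $p^{-2}$ meets $\p'\lmd\,\p''\lmd=O(p^2)$ and the term is regular, so this residue vanishes. Thus $\eta(\p',\p'')$ reduces to explicit residues at $\infty$ and $W$, which are polynomial expressions in $U,W$.

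Rather than inverting \eqref{g1 afa beta} by hand, we will find flat coordinates for $\eta$ first and then compare. The natural candidates, following the Ablowitz--Ladik case $n=1$, are the coefficients of the expansion of $\lmd^{1/n}$ at $p=\infty$: set $t^\afa := -\frac{n}{\afa}\Res_{p=\infty}\lmd^{\afa/n}\frac{\td p}{p}$ for $1\le\afa\le n-1$, together with $t^n:=\log W$-type and $t^{n+1}:=\log U_n$-type coordinates coming from the residues at $p=W$ and $p=0$, i.e. $\log(W)$ and $\log(-U_n/W)$ since $\lmd(p)\sim -\frac{U_n}{W}p$ near $0$. The standard Landau--Ginzburg argument (differentiating $\lmd$ along $\p_{t^\afa}$, observing that $\p_{t^\afa}\lmd \frac{\td p}{p}$ has a prescribed principal part at one pole, and evaluating the residue of a product of such terms) then shows $\eta(\p_{t^\afa},\p_{t^\beta})$ is constant. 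This gives non-degeneracy and flatness at once, provided the constant Gram matrix is checked to be invertible; we expect an antidiagonal block on $t^1,\dots,t^{n-1}$ (entries $\delta_{\afa+\beta,n}/n$) plus an invertible $2\times2$ block pairing $t^n,t^{n+1}$.

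It then remains to identify $\eta^{-1}$ with $g_1$. The cleanest route is to compute the dispersionless leading term of $\mcalP_1$ directly: $g_1^{ij}$ is the coefficient of $\veps\p_x$ in \eqref{250812-1309-1}--\eqref{250812-1309-2}, which is just \eqref{g1 afa beta}. We then verify $\sum_j g_1^{ij}\eta_{jk}=\delta^i_k$, using the residue-at-$\infty$ and residue-at-$W$ formula for $\eta_{jk}=\eta(\p_{v^j},\p_{v^k})$ found in the first step. Since the expansions at $\infty$ involve only finitely many coefficients of $A(p)$, the check reduces to polynomial identities in $U_k,W$, organized by the index shifts $\afa+\beta-n$ and $\afa+\beta-n-1$ that appear in \eqref{g1 afa beta}.

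We expect this verification to be the main obstacle. A more conceptual alternative is to write $g_1$ in the flat coordinates $t$ via the generating-function form $g_1(\td\lmd(p),\td\lmd(q))$ and compare it with the known formula for the inverse of an LG metric. Failing that, we would verify the identity in small $n$ and then proceed by induction on the structure of the index sums. Flatness of $\eta$ then also follows independently from the fact that $\mcalP_1$ is Hamiltonian (Section \ref{subsection P1 ham}), since the leading term of a nondegenerate Hamiltonian operator of hydrodynamic type is a flat contravariant metric.
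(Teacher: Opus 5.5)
Your reduction of \eqref{LG eta} to residues at $p=\infty$ and $p=W$ (the residue at $p=0$ vanishes) is correct. So is your final appeal to the Hamiltonian property of $\mcalP_1$ for flatness, which is what the paper does. There are, however, two problems.

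\textbf{The proposed flat coordinates are wrong.} Up to sign and an additive constant, $\log(-U_n/W)$ equals $\log\frac{W}{U_n}$, and the $\eta$-gradient of this function is the unity $e$, by \eqref{e v-coord}. If it were a flat coordinate, $e$ would be $\nabla$-parallel. But non-flatness of $e$ is exactly the defining feature of $\mcalM_{(n,1)}$.

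This already fails for $n=1$. The flat coordinates there are $z^1=U_1+W$, $z^2=\log W$, and $\log(-U_1/W)=\log(\rme^{z^2}-z^1)-z^2$ is not affine in them. In the coordinates $(a,b)=(\log W,\log(-U_1/W))$ one finds $\eta(\p_a,\p_a)=2(U_1+W)$, which is not constant. So the claim ``$\eta(\p_{t^\afa},\p_{t^\beta})$ is constant'' would fail.

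The coordinate you are missing is the $\afa=n$ member of your own family: $-\Res_{p=\infty}\lmd\,\frac{\td p}{p}=\Res_{p=W}\lmd\,\frac{\td p}{p}=f(W)$. It is the partner of $\log W$, and $\log(-U_n/W)$ must be dropped. With $(f(W),\log W)$ the Landau--Ginzburg argument does work; this is the paper's alternative proof of flatness, \eqref{eta local z}.

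\textbf{The central identity is left unproved.} Even after this correction, that step does not prove the real content of the proposition, which is $\eta^{-1}=g_1$. Flatness follows from the Hamiltonian property once $g_1$ is non-degenerate, and non-degeneracy of both $g_1$ and $\eta$ follows from $\bsg_1\bseta=\bsI_{n+1}$. Everything therefore rests on the verification you call the main obstacle, and ``small $n$ plus induction'' does not prove it. The missing idea is to pass to $q=1/p$ at infinity, as follows.

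\begin{enumerate}
\item For $1\le i,j\le n$ the integrand is $\frac{p^{2n-i-j}}{\tilde\lmd(p)}\td p$. This is regular at $p=W$, since $\tilde\lmd(W)=-Wf(W)\neq0$.
\item Hence $\eta_{ij}=\Res_{q=0}\frac{q^{i+j-n-1}}{\hat\lmd(q)}\td q$, where $\hat\lmd(q)=q^{n+1}\tilde\lmd(1/q)$.
\item The entry $g_1^{\afa\beta}$ in \eqref{g1 afa beta} is exactly the coefficient of $q^{\afa+\beta-n}$ in $\hat\lmd(q)$.
\item Also $\eta_{n\beta}=0$ for $1\le\beta\le n-1$, and $g_1^{\afa,n+1}=0$ for $1\le\afa\le n-1$.
\item Therefore the $(n-1)\times(n-1)$ block of $\bsg_1\bseta$ collapses to $\Res_{q=0}\hat\lmd(q)\cdot\frac{q^{\beta-\afa-1}}{\hat\lmd(q)}\td q=\delta^\afa_\beta$. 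This is a convolution identity between the coefficients of $\hat\lmd$ and of $1/\hat\lmd$.
\end{enumerate}

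The entries involving $W$ pick up an extra residue at $p=W$. It is evaluated using $\tilde\lmd(W)=-Wf(W)$ and $\tilde\lmd'(W)=0$, after which those entries reduce to the same kind of cancellation against $\hat\lmd$.
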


\begin{proof}
  Let $\bsg_1=(g_1^{ij})$ and $\bseta=(\eta_{ij})$ be the coefficient matrices of
  $g_1$ and $\eta$ with respect to the local coordinates $(v^1,\dots,v^n; v^{n+1})=(U_1,\dots,U_n; W)$.
  It suffices to verify that $\bsg_1\bseta = \bsI_{n+1}$.
  Now let us compute the entries $\eta_{ij}:=\eta\left(\p_{v^i}, \p_{v^j}\right)$.
  It is clear that
  \[
    \pfrac{\lmd(p)}{U_i} = \frac{p^{n+1-i}}{p-W},\quad
    \pfrac{\lmd(p)}{W} = \frac{\lmd(p)}{p-W},\quad 1\leq i\leq n.
  \]
  Introduce the following polynomials
  \begin{align}
    \hat\lmd(p)
    &:=\,
      \sum_{k=0}^{n}(n-k)U_kp^k - \sum_{k=0}^{n}(n+1-k)WU_kp^{k+1},
  \qquad 
    f(p) :=\, \sum_{k=0}^{n} U_k p^{n-k},  \label{polynomial fp}
  \end{align}
  and recall the polynomial $\tilde\lmd(p)$ introduced in \eqref{lmd tilde p}.
  Then it is clear that
  \begin{equation} \label{tilde lmd W}
    \tilde\lmd\left(\frac{1}{p}\right) = \frac{1}{p^{n+1}}\hat\lmd(p),\quad
    \tilde\lmd(W) = -Wf(W),\quad \tilde\lmd'(W) = 0.
  \end{equation}
By the residue theorem, the formula \eqref{LG eta} can be rewritten as
\begin{equation}\label{LG eta 2}
  \eta\left(
    \p', \p''
  \right)
=
  -\left(\Res_{p=p_s}+\Res_{p=\infty}\right)
  \left(
    \frac{\p'\lmd(p)\,\p''\lmd(p)}{\lmd'(p)}
    \frac{\td p}{p^2}
  \right).
\end{equation}
Therefore, for all $1\leq i,j\leq n$, it can be verified that
\begin{align}
\eta_{ij} &=\,
  \Res_{q=0}
  \left(
    \frac{q^{i+j-n-1}}{\hat\lmd(q)}\td q
  \right),
\\
\eta_{i,n+1} &=\,\eta_{n+1,i} =
  W^{n-1-i} + \Res_{q=0}
  \left(
    \frac{q^{i-n}\td q}{(1-Wq)\hat\lmd(q)}
    \sum_{k=0}^{n}U_kq^k
  \right),
\\
\eta_{n+1,n+1}
&=\,
  \frac{2 f'(W)}{W}
 +\Res_{q=0}\left(
     \frac{q^{1-n}\td q}{(1-Wq)^2\hat\lmd(q)}
     \left(
       \sum_{k=0}^{n}U_k q^k
     \right)^2
  \right),
\end{align}
here we use the substitution $p\mapsto\frac{1}{q}$ to compute the residue at $p=\infty$.
Then, for each indices $1\leq \afa,\beta\leq n-1$, we obtain
\begin{align*}
  &\, (\bsg_1\bseta)^\afa_\beta =
  \sum_{\gamma=1}^{n-1}g_1^{\afa\gamma}\eta_{\gamma\beta} + g_1^{\afa,n}\eta_{n,\beta} + g_1^{\afa,n+1}\eta_{n+1,\beta}\\
=&\,
  \sum_{k=0}^{\afa-1}(n-k)U_k
  \Res_{q=0}
  \left(
    \frac{q^{k+\beta-\afa-1}\td q}{\hat\lmd(q)}
  \right)
-
  \sum_{k=0}^{\afa-2}(n+1-k)WU_k
  \Res_{q=0}
  \left(
    \frac{q^{k+\beta-\afa}\td q}{\hat\lmd(q)}
  \right)\\
=&\,
  \Res_{q=0}
  \left(
    \hat\lmd(q)\cdot
    \frac{q^{\beta-\afa-1}\td q}{\hat\lmd(q)}
  \right)
=
  \delta^\afa_\beta.
\end{align*}
Moreover, by straightforward computation, we also have
\begin{align*}
&\,(\bsg_1\bseta)^n_{n+1} = \sum_{k=1}^{n-1}(k-n-2)WU_{k-1}\\
&\quad
  \times\left(
    W^{n-1-k} + \Res_{q=0}
    \left(
      \frac{q^{k-n}\td q}{(1-Wq)\hat\lmd(q)}
      \sum_{\ell=0}^{n}U_\ell q^\ell
    \right)
  \right) -2WU_{n-1}\times\frac 1W\\
&\quad
  +W\left(
    \frac{2f'(W)}{W}
   +\Res_{q=0}\left(
     \frac{q^{1-n}\td q}{(1-Wq)^2\hat\lmd(q)}
     \left(
       \sum_{\ell=0}^{n}U_\ell q^\ell
     \right)^2
   \right)
  \right)
\\
=&\,
  \sum_{k=1}^{n-1}
    (k-n-2)U_{k-1}W^{n-k} - 2U_{n-1} + 2f'(W) \\
&\quad
  +W\Res_{q=0}\left[
    \frac{q^{1-n}\td q}{(1-Wq)^2\hat\lmd(q)}
    \left(\sum_{\ell=0}^{n}U_\ell q^\ell\right)\right.\\
&\quad\quad \times\left.
    \left(
      (1-Wq)\sum_{k=1}^{n-1}
        (k-n-2)U_{k-1}q^{k-1}
       +\sum_{k=0}^{n}U_kq^k
    \right)
   \right]
\\
=&\,
  \sum_{k=1}^{n-1}(n-k)U_{k-1}W^{n-k}
 -W\Res_{q=0}\left(
   \frac{q^{1-n}\td q}{(1-Wq)^2\hat\lmd(q)}
   \left(\sum_{\ell=0}^{n}U_\ell q^\ell\right)
   \cdot\hat\lmd(q)
 \right) = 0,
\end{align*}
and using similar methods, it can also be verified that
\[
  (\bsg_1\bseta)^\afa_{n+1}
 =(\bsg_1\bseta)^n_{\afa} = 0,\quad 1\leq\afa\leq n-1.
\]
The details are left to the reader.
The validity of relations
\[
  (\bsg_1\bseta)^n_{n}=(\bsg_1\bseta)^{n+1}_{n+1}=1,\quad
  (\bsg_1\bseta)^{\afa}_{n} =
 (\bsg_1\bseta)^{n+1}_{\afa} = (\bsg_1\bseta)^{n+1}_{n} = 0,\quad 1\leq\afa\leq n-1
\]
are clear. Hence $\bsg_1\bseta=\bsI_{n+1}$, and therefore, $\eta$ is non-degenerate.

Finally, note that $\mcalP_1$ is known to be a Hamiltonian structure (see Sect.\,\ref{subsection P1 ham}).
By a standard result on Hamiltonian structures of hydrodynamic type,
the inverse of the corresponding contravariant metric $g_1$, i.e. $\eta$,  is flat.
\end{proof}

An alternative proof of the flatness of the metric $\eta$ can be obtained by explicitly constructing
a family of flat coordinates $\bsz=(z^1,\dots, z^n; z^{n+1})$.
If we regard the superpotential \eqref{superpotential lmd p}
as a formal Laurent series at $p=\infty$, i.e.
$\lmd(p) = p^n + O(p^{n-1})\in C^\infty(\mcalU)\fls{\frac 1p}$,
then there exists a unique formal Laurent series
$\fai(q) = q + O(1) \in C^\infty(\mcalU)\fls{\frac 1q}$ such that
\begin{equation} \label{fai q 1}
  \lmd(\fai(q)) = q^n,
\end{equation}
where $q$ is another formal variable independent of $p$.
In other words, $\fai(q)$ is the inverse series of $\lmd^{\frac 1n}(p) = p + O(1)\in C^\infty(\mcalU)\fls{\frac 1p}$.
Let us denote
\begin{equation} \label{fai q 2}
  \log\frac{q}{\fai(q)}
=
  \frac{1}{\sqrt{n}}
  \left(
    \frac{z^n}{q} + \frac{z^{n-1}}{q^2} + \cdots + \frac{z^2}{q^{n-1}}
  \right)
 +\frac{z^1}{n q^n} + O(\frac{1}{q^{n+1}}) \in C^\infty(\mcalU)\fls{\frac 1q},
\end{equation}
where $z^1,\dots,z^n\in C^\infty(\mcalU)$ are certain functions.

\begin{lem}
Let $z^1,\dots,z^n$ be the functions defined above,
$f(p)$ be the polynomial introduced in \eqref{polynomial fp},
and additionally introduce
\begin{equation} \label{z n+1 log W}
  z^{n+1} := \log W.
\end{equation}
Then the following hold true:
\begin{enumerate}
  \item $\bsz:=(z^1,\dots,z^n; z^{n+1})$ forms a family of local coordinates of $\mcalU$.
  Moreover, $\bsv=(U_1,\dots, U_n; W)$ can be represented by polynomials of $z^1,\dots, z^n$ and $\rme^{z^{n+1}}$,
  and conversely, $z^1,\dots,z^n$ can be represented by polynomials of $\bsv$.
  \item The following formulae for $\bsz$ hold true:
  \begin{align}
    z^1 &=\, -\Res_{p=\infty}\left(
      \lmd(p)\frac{\td p}{p}
    \right) = f(W), \label{flat res formula 1}\\
    z^\afa &=\, -\frac{\sqrt{n}}{n+1-\afa}
    \Res_{p=\infty}
    \left(
      \lmd^{\frac{n+1-\afa}{n}}(p)\frac{\td p}{p}
    \right),\quad
    2\leq \afa\leq n.  \label{flat res formula 2}
  \end{align}
\item The following identities hold true:
\begin{equation} \label{p f p zi}
  \pfrac{f}{z^i}(W) = \begin{cases}
    1 & i=1, \\
    0 & 2\leq i\leq n, \\
    -Wf'(W) & i=n+1.
  \end{cases}
\end{equation}
\end{enumerate}
\end{lem}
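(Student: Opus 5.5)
I would start with the residue formulas in item~2, because items~1 and~3 follow from them. Write $\lmd(p)=p^n g(p)$ with $g(p)=\frac{p}{p-W}\cdot\frac{f(p)}{p^n}=\bigl(1-W/p\bigr)^{-1}\bigl(1+U_1/p+\cdots+U_n/p^n\bigr)$. This is a power series in $1/p$ with constant term $1$. For $z^1$, observe that $-\Res_{p=\infty}(\lmd(p)\,\td p/p)$ is the coefficient of $p^0$ in $\lmd(p)$, i.e.\ the coefficient of $p^{-n}$ in $g(p)$. Expanding gives $\sum_{k=0}^n U_k W^{n-k}=f(W)$.

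For the other coordinates, differentiate $\lmd(\fai(q))=q^n$ and use the standard Lagrange-type inversion. Since $\lmd^{1/n}(\fai(q))=q$, we have
\[
\Res_{p=\infty}\Bigl(\lmd^{s/n}(p)\frac{\td p}{p}\Bigr)=\Res_{q=\infty}\Bigl(q^s\,\frac{\fai'(q)}{\fai(q)}\,\td q\Bigr),
\]
with $\fai'/\fai=\frac1q-\frac{\td}{\td q}\log\frac{q}{\fai(q)}$. Substituting the expansion \eqref{fai q 2} and taking the coefficient of $q^{-s-1}$ in the derivative term, for $s=n+1-\afa$ with $2\le\afa\le n$, picks out $z^\afa/\sqrt n$ times $(n+1-\afa)$ up to sign. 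This gives \eqref{flat res formula 2}. The same computation with $s=n$ yields $z^1$. I will check that it is consistent with the normalization $\frac{z^1}{nq^n}$, which gives the first equality in \eqref{flat res formula 1}.

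For item~1, the residue formulas show that each $z^\afa$ with $\afa\le n$ is a polynomial in $(U_1,\dots,U_n,W)$, because the expansion of $g^{s/n}$ has polynomial coefficients. Moreover, $z^\afa$ for $2\le\afa\le n$ has the triangular form $\frac{\sqrt n}{n+1-\afa}\cdot\frac{n+1-\afa}{n}\,U_{n+1-\afa}+(\text{polynomial in }U_1,\dots,U_{n-\afa},W)$. For $z^1$ the leading term is $U_n$ plus lower terms. Together with $z^{n+1}=\log W$, this triangularity lets us solve recursively for $U_1,U_2,\dots,U_n$ as polynomials in $z^n,\dots,z^1$ and $W=\rme^{z^{n+1}}$. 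The Jacobian is then triangular with nonzero diagonal, so $\bsz$ is a coordinate system.

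For item~3, hold $W$ fixed; this is legitimate for $i\le n$ because $z^{n+1}$ alone depends on $W$. Then $z^1=f(W)$ gives $\p f(W)/\p z^1=1$ and $\p f(W)/\p z^i=0$ for $2\le i\le n$, since $f(W)$ is a function of $z^1$ only. For $i=n+1$, differentiate the identity $z^1=f(W;U(\bsz))$ with respect to $z^{n+1}$ at fixed $z^1$:
\[
0=Wf'(W)+\pfrac{f}{z^{n+1}}(W).
\]
Here $\pfrac{f}{z^{n+1}}$ means differentiating the coefficients $U_k$ only, which is exactly the required formula.

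I expect the main obstacle to be the bookkeeping in the triangularity claim. Specifically, I need to show that $z^\afa$ depends on $U_{n+1-\afa}$ linearly with nonzero constant coefficient and otherwise only on $U_k$ with $k<n+1-\afa$ and on $W$. This follows from degree counting, assigning $\deg U_k=k$ and $\deg W=1$, which makes $z^\afa$ homogeneous of degree $n+1-\afa$.
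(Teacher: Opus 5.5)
Your proposal is correct. For item 2 it follows the paper: both use the Lagrange inversion identity $\ell_s=-\frac1s\Res_{p=\infty}\bigl(\lmd^{s/n}(p)\frac{\td p}{p}\bigr)$ for the coefficients $\ell_s$ of $\log\frac{q}{\fai(q)}$. Items 1 and 3, however, take a different route from the paper's.

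\emph{Item 1.} The paper substitutes the exponential form of $\fai(q)$ into the algebraic equation $\fai^{n+1}+\sum_k U_k\fai^{n+1-k}-q^n(\fai-W)=0$. Using the grading $\overline{\deg}$, it extracts the recursion $U_k=c_kz^{n+1-k}+f_k(\dots)$, but it only asserts that the $c_k$ are nonzero. You instead read the triangular structure off the residue formulas, going in the direction $\bsv\to\bsz$. This computes the diagonal explicitly: $z^{\afa}=\frac{1}{\sqrt n}U_{n+1-\afa}+\cdots$ for $2\le\afa\le n$, and $z^1=U_n+\cdots$. The invertibility of the change of coordinates is therefore visible rather than asserted. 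You should add that $W\neq0$ on $\mcalU$, which gives the entry $1/W$ coming from $z^{n+1}=\log W$.

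\emph{Item 3.} The paper writes $\pfrac{f}{z^i}(W)$ as a residue at $p=W$, moves it to $p=\infty$, and evaluates it using the expansions \eqref{dif lmd 1}--\eqref{dif lmd 3}. Your argument is shorter and entirely elementary: differentiate the identity $z^1=f(W)$, using $\partial W/\partial z^i=0$ for $i\le n$ and $\partial W/\partial z^{n+1}=W$.

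What the paper's longer route buys is that the identities \eqref{dif lmd 1}--\eqref{dif f 2} set up along the way are exactly what it reuses in the next theorem to prove $\eta=\sum_i\td z^i\otimes\td z^{n+2-i}$. With your route those identities would have to be derived separately at that point.
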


\begin{proof}
Let us rewrite \eqref{fai q 1}--\eqref{fai q 2} as follows
\begin{equation} \label{260408-1947-1}
  \fai^{n+1}(q) + \sum_{k=1}^{n}U_k\fai^{n+1-k}(q) - q^n(\fai(q)-W) = 0,
\end{equation}
\begin{equation} \label{260408-1947-2}
\fai(q) = q\exp\left(
  -\frac{1}{\sqrt{n}}
  \sum_{k=1}^{n-1}
  \frac{z^{n+1-k}}{q^k}
  -\frac{z^1}{nq^n}
\right)
\left(
  1 + O(\frac{1}{q^{n+1}})
\right),
\end{equation}
and then substitute \eqref{260408-1947-2} into \eqref{260408-1947-1}.
Notice that the $O(\frac{1}{q^{n+1}})$ in the right-hand side of \eqref{260408-1947-2} does not affect
the coefficients of $q^k$ in the left-hand side of \eqref{260408-1947-1} for $1\leq k\leq n$.
If we impose the following grading $\overline{\deg}$ generated by
\begin{equation}\label{overline deg}
\overline{\deg}\, q = \overline{\deg}\, W = 1,\quad
\overline{\deg}\, U_k = \overline{\deg}\, z^{n+1-k} = k,\quad 1\leq k\leq n,
\end{equation}
then the right-hand side of \eqref{260408-1947-2}
and the left-hand side of \eqref{260408-1947-1} are homogeneous of degree $1$ and $n+1$, respectively.
Comparing the coefficients of $q^k$ of both sides of \eqref{260408-1947-1} for $1\leq k\leq n$,  we obtain
\begin{equation}\label{260408-2015}
\begin{split}
  U_1 &=\, c_1z^n + f_1(W), \\
  U_2 &=\, c_2z^{n-1} + f_2(U_1; z^n; W), \\
  U_3 &=\, c_3z^{n-2} + f_3(U_1, U_2; z^n, z^{n-1}; W), \\
  &\cdots\\
  U_n &=\, c_nz^1 + f_n(U_1,\dots, U_{n-1}; z^n,\dots, z^2; W),
\end{split}
\end{equation}
where $\{c_k\}_{k=1}^n$ are certain non-zero constants, and $\{f_k\}_{k=1}^n$ are certain polynomials such that
\begin{equation}\label{deg fk}
  \overline{\deg}\, f_k = k,\quad 1\leq k\leq n.
\end{equation}
Therefore the relations between $\bsz$ and $\bsv$ are recursively determined by \eqref{260408-2015} and \eqref{z n+1 log W},
and hence assertion (1) of this lemma follows.
The validity of \eqref{flat res formula 1}--\eqref{flat res formula 2} is a consequence of
the \textit{Lagrange inversion formula} of the following form
\begin{align*}
&\,
  \Res_{q=\infty}\left(
    q^{k-1}\log\frac{q}{\fai(q)}\td q
  \right)
 =
  \frac 1k\Res_{p=\infty}
  \left(
    \lmd^{\frac kn}(p)\frac{\td p}{p}
  \right)
\end{align*}
for each $k\in\bbZ_+$. Finally, differentiating both sides of \eqref{fai q 1} and \eqref{fai q 2}, we obtain
\begin{equation}
\lmd'(\fai(q))\fai'(q) = nq^{n-1},  \label{dif lmd 1}
\end{equation}
\begin{equation}
\pfrac{\lmd}{z^i}(\fai(q)) + \lmd'(\fai(q))\pfrac{\fai}{z^i}(q) = 0,\quad 1\leq i\leq n+1,  \label{dif lmd 2}
\end{equation}
\begin{equation}
  \frac{1}{\fai(q)}\pfrac{\fai}{z^i}(q) = \frac{d_i}{\sqrt{n}}\frac{1}{q^{n+1-i}} + O(\frac{1}{q^{n+1}}),\quad \label{dif lmd 3}
  \text{where}\, \,
  d_i:=\begin{cases}
    -1& 2\leq i\leq n, \\
    -\frac{1}{\sqrt{n}} & i=1, \\
    0 & i=n+1.
  \end{cases}
\end{equation}
From $\lmd(p)=\frac{p}{p-W}f(p)$ and $W=\rme^{z^{n+1}}$, we have
\begin{align}
  \pfrac{\lmd}{z^i}(p) &=\, \frac{p}{p-W}\pfrac{f}{z^i}(p),\quad 1\leq i\leq n, \label{dif f 1}\\
  \pfrac{\lmd}{z^{n+1}}(p) &=\, \frac{p}{p-W}\pfrac{f}{z^{n+1}}(p) + \frac{pW}{(p-W)^2}f(p).  \label{dif f 2}
\end{align}
Then for each $1\leq i\leq n$, from \eqref{dif lmd 1}--\eqref{dif f 2} we obtain
\begin{align*}
&\,\pfrac{f}{z^i}(W) = \Res_{p=W}
\left(\pfrac{f}{z^i}(p)\, \frac{\td p}{p-W}\right)
=
  \Res_{p=W}
  \left(
    \pfrac{\lmd}{z^i}(p)\, \frac{\td p}{p}
  \right)\\
=&\,
  -\Res_{p=\infty}
  \left(
    \pfrac{\lmd}{z^i}(p)\, \frac{\td p}{p}
  \right)
=
  -\Res_{q=\infty}\left(
    \pfrac{\lmd}{z^i}(\fai(q))\frac{\fai'(q)}{\fai(q)}\td q
  \right)\\
=&\,
  \sqrt{n}\, d_i\Res_{q=\infty}
  \left[
    \left(
      \frac{1}{q^{n+1-i}} + O(\frac{1}{q^{n+1}})
    \right) q^{n-1}\td q
  \right] = \delta_{i,1}.
\end{align*}
Similarly, we also have
\begin{align*}
&\,
  \pfrac{f}{z^{n+1}}(W)
=
  \Res_{p=W}
  \left(
    \pfrac{\lmd}{z^{n+1}}(p)\, \frac{\td p}{p}
  \right)
 -\Res_{p=W}\left(
   \frac{Wf(p)}{(p-W)^2}\td p
 \right)
=
  -Wf'(W).
\end{align*}
Hence the lemma is proved.
\end{proof}

\begin{thm}
  The metric $\eta$ introduced by \eqref{LG eta} has the local expression
  \begin{equation}\label{eta local z}
     \eta = \sum_{i=1}^{n+1}\td z^i\otimes\td z^{n+2-i}
  \end{equation}
  with respect to the coordinates $\bsz$ in the above lemma.
  In particular, $\eta$ is flat, and $\bsz$ is a family of its flat coordinates.
\end{thm}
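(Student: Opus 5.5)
We compute $\eta(\p_{z^i},\p_{z^j})$ directly from the Landau--Ginzburg formula \eqref{LG eta}, using the rewritten form \eqref{LG eta 2}. Since $\lmd'$ has simple zeros exactly at $p_1,\dots,p_{n+1}$, the only other singularities of the integrand $\p_{z^i}\lmd\,\p_{z^j}\lmd\,\td p/(\lmd' p^2)$ are at $p=0$, $p=W$ and $p=\infty$.

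Near $p=0$ both $\p_{z^i}\lmd$ and $\p_{z^j}\lmd$ carry a factor $p$, by \eqref{dif f 1}--\eqref{dif f 2}, while $\lmd'(0)=f(0)\cdot(-1/W)\neq 0$ generically. Hence $\p_{z^i}\lmd\,\p_{z^j}\lmd/p^2$ is regular at $0$, and there is no residue there.

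Near $p=W$, \eqref{lmd' p} gives $1/\lmd'=(p-W)^2/\tilde\lmd$ with $\tilde\lmd(W)=-Wf(W)\neq 0$, by \eqref{tilde lmd W}.
\begin{itemize}
\item For $1\le i,j\le n$, each $\p_{z^i}\lmd$ has a simple pole at $W$, so the integrand is regular there.
\item If exactly one index equals $n+1$, the integrand has at most a simple pole.
\item If $i=j=n+1$, it has a double pole.
\end{itemize}
These residues are computed from the expansions of $\p\lmd$ at $W$. By \eqref{dif f 1}--\eqref{dif f 2}, $\p_{z^i}\lmd\sim \frac{W}{p-W}\pfrac{f}{z^i}(W)$ for $i\le n$, and $\p_{z^{n+1}}\lmd\sim \frac{W^2f(W)}{(p-W)^2}+\cdots$. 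Identity \eqref{p f p zi} then kills everything except the pairing of $z^1$ with $z^{n+1}$, and it controls $\eta_{n+1,n+1}$.

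The residue at $\infty$ is computed in the variable $q$ via the substitution $p=\fai(q)$. Using \eqref{dif lmd 1}--\eqref{dif lmd 2}, $\p_{z^i}\lmd(\fai)/\lmd'(\fai)=-\p_{z^i}\fai$ and $\lmd'(\fai)\fai'=nq^{n-1}$, so
\[
\frac{\p_{z^i}\lmd\,\p_{z^j}\lmd}{\lmd'}\frac{\td p}{p^2}
=\frac{\p_{z^i}\fai}{\fai}\,\frac{\p_{z^j}\fai}{\fai}\,nq^{n-1}\td q .
\]
By \eqref{dif lmd 3}, for $1\le i,j\le n$ this equals $d_id_j q^{i+j-n-3}\td q$ up to $O(q^{-n-3})$. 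Its residue at $q=\infty$ is therefore nonzero only when $i+j=n+2$, and the normalisation of $d_i$ (with the $\sqrt n$, $1/n$ choices in \eqref{fai q 2}) makes the contribution exactly $1$. In particular $\eta_{1,j}=0$ for $j\le n$, since $i+j=n+2$ forces $j=n+1$. This gives $\eta_{ij}=\delta_{i+j,n+2}$ on the block $1\le i,j\le n$.

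For the pairings involving $z^{n+1}$, $d_{n+1}=0$ shows $\p_{z^{n+1}}\fai/\fai=O(q^{-n-1})$. Because $\fai$ has coefficients in $W$, this term is not identically zero, but it gives no residue at $\infty$ against $\p_{z^j}\fai/\fai=O(q^{-1})$. So $\eta_{i,n+1}$ reduces to the residue at $p=W$.

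The main obstacle is computing these $p=W$ residues exactly. Using $z^1=f(W)$ and \eqref{p f p zi}, for $i\le n$ the simple-pole residue is
\[
\frac{W\pfrac{f}{z^i}(W)\cdot W^2f(W)}{\tilde\lmd(W)\,W^2}\cdot(\text{sign}),
\]
which should give $\delta_{i,1}$ with the correct sign coming from the minus sign in \eqref{LG eta 2}. For $\eta_{n+1,n+1}$, the residue of a double pole involves derivatives. Here $\tilde\lmd'(W)=0$ and $\pfrac{f}{z^{n+1}}(W)=-Wf'(W)$ should make the derivative terms cancel, leaving zero. I would expand $\p_{z^{n+1}}\lmd=\frac{pW(f(p)-(p-W)f'(W))+\cdots}{(p-W)^2}$ and carefully track the first-order terms.

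If that bookkeeping proves unwieldy, the fallback is to use Proposition \ref{prop: eta and P1}, that $\eta=g_1^{-1}$. One would then check $\eta_{n+1,n+1}=0$ and $\eta_{1,n+1}=1$ via the inverse relation $g_1^{ij}=\sum_{k,l}\pfrac{v^i}{z^k}\pfrac{v^j}{z^l}\,\delta_{k+l,n+2}$, restricted to the $W$-components, which are simple since $W=\rme^{z^{n+1}}$. Flatness is then immediate, because the coefficients of $\eta$ in \eqref{eta local z} are constant.
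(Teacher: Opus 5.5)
Your proposal is correct and follows the paper's proof essentially step by step. You write $\eta(\p_{z^i},\p_{z^j})$ as minus the residues at $p=W$ and $p=\infty$, and you also check that $p=0$ contributes nothing, which the paper leaves implicit. The residue at $\infty$ is then evaluated through $p=\fai(q)$ with \eqref{dif lmd 1}--\eqref{dif lmd 3}, and the residues at $W$ through \eqref{dif f 1}--\eqref{dif f 2}, \eqref{tilde lmd W} and \eqref{p f p zi}.

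The double-pole case you left open vanishes exactly as you anticipate: $\frac{2Wf(W)}{\tilde\lmd(W)}\pfrac{f}{z^{n+1}}(W)=2Wf'(W)$ cancels $W^2\frac{\td}{\td p}\big|_{p=W}\big(f^2/\tilde\lmd\big)=-2Wf'(W)$. So no fallback is needed. The one you sketch would not have worked anyway, because the $W$-row of $\eta^{-1}$ in $\bsz$-coordinates does not depend on $\eta_{n+1,n+1}$.
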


\begin{proof}
    Using the formulae \eqref{dif lmd 1}--\eqref{dif lmd 3}, we have
    \begin{align}
      &\,\Res_{p=\infty}
      \left(
        \frac{\pfrac{\lmd}{z^i}(p)\, \pfrac{\lmd}{z^j}(p)}{\lmd'(p)}
        \frac{\td p}{p^2}
      \right)
    =
     \Res_{q=\infty}
     \left(
        \frac{\pfrac{\lmd}{z^i}(\fai(q))\, \pfrac{\lmd}{z^j}(\fai(q))}{\lmd'(\fai(q))}
        \frac{\fai'(q)\td q}{(\fai(q))^2}
      \right)\notag\\
  =&\,
    d_id_j\Res_{q=\infty}
    \left(
      \left(\frac{1}{q^{n+1-i}}+O(\frac{1}{q^{n+1}})\right)
      \left(\frac{1}{q^{n+1-j}}+O(\frac{1}{q^{n+1}})\right)
      q^{n-1}\td q
    \right) \notag \\
  =&\,
    \begin{cases}
     -1 & \text{if $i+j=n+2$ and $i,j\leq n$,}\\
     0 & \text{other cases},
    \end{cases}
    \quad \text{for all}\,\,1\leq i,j\leq n+1.   \label{260409-1146-1}
    \end{align}
  By using \eqref{lmd' p}, \eqref{tilde lmd W},
  \eqref{dif f 1}--\eqref{dif f 2} and \eqref{p f p zi}, we have
  \begin{align}
    \Res_{p=W}
    \left(
      \frac{\pfrac{\lmd}{z^i}(p)\, \pfrac{\lmd}{z^j}(p)}{\lmd'(p)}\frac{\td p}{p^2}
    \right)
  &=\,
    \Res_{p=W}\left(
    \pfrac{f}{z^i}(p)\, \pfrac{f}{z^j}(p)\frac{\td p}{\tilde\lmd(p)}
    \right) = 0,\quad 1\leq i,j\leq n,
  \\
    \Res_{p=W}
    \left(
      \frac{\pfrac{\lmd}{z^i}(p)\, \pfrac{\lmd}{z^{n+1}}(p)}{\lmd'(p)}\frac{\td p}{p^2}
    \right)
  &=\,\frac{Wf(W)\pfrac{f}{z^i}(W)}{\tilde\lmd(W)} = -\delta_{i,1},\quad 1\leq i\leq n,
  \end{align}
  \begin{align}
  &\,\Res_{p=W}
    \left(
      \frac{\pfrac{\lmd}{z^{n+1}}(p)\, \pfrac{\lmd}{z^{n+1}}(p)}{\lmd'(p)}\frac{\td p}{p^2}
    \right)
\notag\\
=&\,
  2\Res_{p=W}
  \left(
    \frac{Wf(p)\,\pfrac{f}{z^{n+1}}(p)}{\tilde\lmd(p)}
    \frac{\td p}{p-W}
  \right)
 +\Res_{p=W}
  \left(
    \frac{W^2(f(p))^2}{\tilde\lmd(p)}
    \frac{\td p}{(p-W)^2}
  \right)   \notag
\\
=&\,
  \frac{2Wf(W)}{\tilde\lmd(W)}\pfrac{f}{z^{n+1}}(W)
 +W^2\left.\frac{\td}{\td p}\right|_{p=W}
 \left(\frac{(f(p))^2}{\tilde\lmd(p)}\right) = 0.\label{260409-1146-4}
  \end{align}
Therefore, the theorem follows from \eqref{LG eta 2}
and \eqref{260409-1146-1}--\eqref{260409-1146-4}.
\end{proof}

\subsection{The unit vector field and the Euler vector field}
\label{subsection:unit and Euler}

Let us construct the generalized Frobenius manifold structure.
We define the multiplication on the tangent space of $\mcalU$ by the relation
\begin{equation}\label{Frob mult}
  \eta(\p'\cdot\p'', \p''') = c(\p', \p'',\p''')
\end{equation}
for all vector fields $\p',\p'',\p'''$ on $\mcalU$, where $\eta$ and $c$ are introduced in \eqref{LG eta} and \eqref{LG c} respectively.
It is clear that $\cdot$ is commutative and compatible with $\eta$, i.e.
\begin{equation}
  \p'\cdot\p'' = \p''\cdot\p',\quad
  \eta(\p'\cdot\p'', \p''') = \eta(\p',\p''\cdot\p''').
\end{equation}

Recall the $\{p_i\}_{1\leq i\leq n+1}$ in \eqref{lmd tilde p},
which are all the distinct critical points of $\lmd(p)$. 
Let $u^i := \lmd(p_i)$ be the corresponding critical values. 
Then $\bsu := (u^1,\dots,u^{n+1})$ forms a family of local coordinates on $\mcalU$,
which are referred to as \textit{canonical coordinates}. Note that
\begin{align}
  \delta_{ij} &=\, \pfrac{u^i}{u^j} = \pp{u^j}\left(\lmd(p_i)\right)
  =
    \pfrac{\lmd}{u^j}(p_i),\quad 1\leq i,j\leq n+1,
\end{align}
because $\lmd'(p_i) = 0$. From \eqref{superpotential lmd p} and \eqref{polynomial fp}, we obtain
\begin{equation}\label{260414-2006}
  \pfrac{\lmd}{u^i}(p) = \frac{p}{(p-W)^2}
  \left(
    (p-W)\pfrac{f}{u^i}(p) + \pfrac{W}{u^i}f(p)
  \right).
\end{equation}
By using the \textit{Lagrange interpolation formula} and \eqref{lmd' p}--\eqref{lmd tilde p} we obtain
\begin{align}
  \pfrac{\lmd}{u^i}(p) &=\, \frac{p}{p_i}
  \left(
    \frac{p_i-W}{p-W}
  \right)^2
  \prod_{j\neq i}
  \frac{p-p_j}{p_i-p_j}
=
    \frac{p}{p_i(p-p_i)}
    \frac{\lmd'(p)}{\lmd''(p_i)}.  \label{260414-1733-2}
\end{align}
Then from \eqref{LG eta}--\eqref{LG c} and \eqref{260414-1733-2}, it can be verified directly that
\begin{align}
  \eta\left(
    \p_{u^i}, \p_{u^j}
  \right) &=\, \frac{\delta_{ij}}{p_i^2\lmd''(p_i)}, \quad
  c\left(
    \p_{u^i}, \p_{u^j}, \p_{u^k}
  \right)
=
  \frac{\delta_{ij}\delta_{ik}}{p_i^2\lmd''(p_i)}.  \label{canonical eta and c}
\end{align}
Therefore, the multiplication $\cdot$ satisfies
$
  \p_{u^i}\cdot \p_{u^j} = \delta_{ij}\p_{u^i}
$.
In particular, $\cdot$ is associative, i.e. $(\p'\cdot\p'')\cdot\p'''=\p'\cdot(\p''\cdot\p''')$. 
Introduce the following vector fields
\begin{equation}\label{e and E}
  e:=\sum_{i=1}^{n+1}\pp{u^i},\qquad
  E:=\sum_{i=1}^{n+1} u^i\pp{u^i}.
\end{equation}
It is clear that $e$ is the unity of the multiplication $\cdot$,
i.e. $e\cdot\p' = \p'$ for all vector field $\p'$.

\begin{prop} Let $e$, $E$ be the vector fields introduced above.
\begin{enumerate}
  \item The vector field $e$ has the following local expression
  \begin{equation}\label{e v-coord}
    e = \grad_\eta\log\frac{W}{U_n} = \sum_{\afa=1}^{n}(n+2-\afa)\frac{WU_{\afa-1}}{U_n}\pp{U_\afa} + \pp{U_n} - \frac{W}{U_n}\pp{W},
  \end{equation}
  here $\grad_\eta$ is the gradient with respect to the metric $\eta$.
  \item The vector field $E$ has the following local expression
  \begin{align}
    E&=\,\sum_{k=1}^{n}\frac{k}{n}U_k\pp{U_k} + \frac 1n W\pp{W} 
     =\,\sum_{k=1}^{n}\frac{n+1-k}{n}z^k\pp{z^k} + \frac 1n\pp{z^{n+1}},  \label{E z-coord}
  \end{align}
  where $\bsz=(z^1,\dots,z^{n+1})$ is the basis of flat coordinates introduced in \eqref{fai q 2}--\eqref{z n+1 log W}.
\end{enumerate}
\end{prop}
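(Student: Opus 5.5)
We prove both parts by identifying a candidate vector field in the $\bsv$- or $\bsz$-coordinates and checking that it acts on the superpotential in the required way. The canonical coordinates $u^i=\lmd(p_i)$ are implicit, so we never invert them. The basic principle is this: a vector field $X$ on $\mcalU$ is determined by the function $X\lmd(p)$. Moreover, $X=\sum_i X(u^i)\p_{u^i}$ with
\[
  X(u^i)=(X\lmd)(p_i)+\lmd'(p_i)X(p_i)=(X\lmd)(p_i),
\]
since $\lmd'(p_i)=0$. Hence $X=e$ if and only if $(X\lmd)(p_i)=1$ for all $i$, and $X=E$ if and only if $(X\lmd)(p_i)=\lmd(p_i)$ for all $i$.

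For $E$, take the candidate $X=\sum_k\frac kn U_k\p_{U_k}+\frac1nW\p_W$. With the homogeneity $\deg p=\deg W=1$, $\deg U_k=k$, the function $\lmd(p)=\frac{p}{p-W}f(p)$ is homogeneous of degree $n$. Euler's relation therefore gives
\[
  \frac1n p\lmd'(p)+X\lmd(p)=\lmd(p).
\]
Evaluating at $p=p_i$ kills the first term, so $X(u^i)=u^i$ and $X=E$. For the $\bsz$-expression we use the grading $\overline{\deg}$ from \eqref{overline deg}. In the proof of the preceding lemma, $z^{n+1-k}$ has degree $k$, i.e. $z^k$ has degree $n+1-k$, and $\bsv$ is a polynomial in $z^1,\dots,z^n$ and $W=\rme^{z^{n+1}}$. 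So the same Euler field equals $\frac1n\sum_k \overline{\deg}(z^k)\,z^k\p_{z^k}+\frac1n W\p_W$, and $W\p_W=\p_{z^{n+1}}$. This gives \eqref{E z-coord}. Homogeneity of the relations \eqref{260408-2015} (cf. \eqref{deg fk}) justifies the change of variables.

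For $e$, set
\[
  X=\sum_{\afa=1}^n(n+2-\afa)\frac{WU_{\afa-1}}{U_n}\p_{U_\afa}+\p_{U_n}-\frac W{U_n}\p_W .
\]
Using $\p_{U_\afa}\lmd=\frac{p^{n+1-\afa}}{p-W}$ and $\p_W\lmd=\frac{\lmd}{p-W}$, we compute
\[
  X\lmd(p)=\frac{1}{p-W}\Big(\frac{W}{U_n}\sum_\afa(n+2-\afa)U_{\afa-1}p^{n+1-\afa}+1-\frac{W}{U_n}\lmd(p)\Big).
\]
We then aim to show
\[
  X\lmd(p)-1=\frac{\text{const}\cdot\tilde\lmd(p)}{U_n\,(p-W)^2}.
\]
Equivalently, $(p-W)\big(U_n(X\lmd)(p-W)-U_n(p-W)\big)$ should be proportional to $\tilde\lmd(p)$. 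Both sides are polynomials of degree $n+1$. We match coefficients using \eqref{lmd tilde p}, and the term $\frac{W}{U_n}\lmd(p)(p-W)=\frac{W}{U_n}pf(p)$ is polynomial. This coefficient comparison is the main computational step, and the anticipated obstacle is the bookkeeping of the shifted index $\afa-1$. If the direct match is messy, I would instead check that the polynomial vanishes at each of the $n+1$ roots $p_i$ of $\tilde\lmd$ and has the right degree. I would also use $\tilde\lmd(W)=-Wf(W)$ and $\tilde\lmd'(W)=0$ to control the behavior at $p=W$. Once this holds, $X(u^i)=1$ and so $X=e$.

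Finally, the identification $e=\grad_\eta\log\frac W{U_n}$ follows by raising the index with $\eta^{-1}=g_1$, which is valid by Proposition \ref{prop: eta and P1}. Since $d\log\frac W{U_n}=\frac{dW}{W}-\frac{dU_n}{U_n}$, we get
\[
  e^i=\frac1W g_1^{i,n+1}-\frac1{U_n}g_1^{i,n}.
\]
Reading off \eqref{g1 afa beta}:
\begin{itemize}
  \item for $\afa\le n-1$, $e^\afa=-\frac{(\afa-n-2)WU_{\afa-1}}{U_n}$;
  \item $e^n=1+\frac{2WU_{n-1}}{U_n}$, which matches the $\afa=n$ term $(n+2-n)\frac{WU_{n-1}}{U_n}$ plus $\p_{U_n}$;
  \item $e^{n+1}=-\frac{W}{U_n}$.
\end{itemize}
These agree with \eqref{e v-coord}.
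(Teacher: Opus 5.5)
Your route is sound and genuinely different from the paper's. However, the decisive computation in part 1 is only announced, and the formula you wrote for it contains a slip. Since $\p_{U_\afa}\lmd = p^{n+1-\afa}/(p-W)$, the $\p_{U_n}$-term of $X$ contributes $p/(p-W)$, not $1/(p-W)$. So the ``$+1$'' inside your parentheses for $X\lmd(p)$ must be ``$+p$''; with ``$+1$'' the proportionality you aim for is false. After the correction, the coefficient comparison closes quickly. Reindex $k=\afa-1$, and note that $U_np(p-W)-U_n(p-W)^2=U_nW(p-W)$ supplies the missing $k=n$ term. This gives
\[
U_n(p-W)^2\bigl(X\lmd(p)-1\bigr)=W\Bigl[(p-W)\sum_{k=0}^{n}(n+1-k)U_kp^{n-k}-p f(p)\Bigr]=W\,\tilde\lmd(p),
\]
hence $X(u^i)=(X\lmd)(p_i)=1$ for all $i$ and $X=e$. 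Your fallback of ``checking that the polynomial vanishes at each root $p_i$'' would be circular, since that vanishing is exactly $X(u^i)=1$. The coefficient comparison above is the step to carry out. Your final $g_1$-computation of $\grad_\eta\log\frac{W}{U_n}$ is correct.

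Compared with the paper, the two parts differ as follows.
\begin{itemize}
\item For $E$, the paper multiplies the interpolation identity \eqref{260414-2007} by $u^i$ and sums over $i$. It then identifies the resulting polynomial with $-\prod_i(p-p_i)$ and solves a linear system for $\mcalL_E U_k$ and $\mcalL_E W$. Your weighted Euler relation $\frac1n p\lmd'(p)+X\lmd(p)=\lmd(p)$, evaluated at the critical points, reaches the same conclusion in one line. Your $\bsz$-part is the same degree argument the paper uses.
\item For $e$, the paper evaluates \eqref{260414-2007} at $p=0$ to get $\p_{u^i}\log\frac{W}{U_n}=1/(p_i^2\lmd''(p_i))$. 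By \eqref{canonical eta and c} this is exactly $\eta(\p_{u^i},\p_{u^i})$, so $e=\grad_\eta\log\frac{W}{U_n}$ follows conceptually, and only afterwards is it converted to $\bsv$-coordinates via $g_1$. You instead verify directly that the given explicit field equals $e$, and separately check that it equals the $g_1$-gradient. This avoids the canonical-coordinate form of $\eta$, but it presupposes the candidate formula. It also shows less directly why $e$ is a gradient field, which the paper uses again in Theorem \ref{thm: n1 GFM}.
\end{itemize}
Both routes rely on Proposition \ref{prop: eta and P1} for the $\bsv$-expression.
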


\begin{proof}
Note that \eqref{260414-2006}--\eqref{260414-1733-2} imply
\begin{equation}\label{260414-2007}
  \pfrac{W}{u^i}f(p) + (p-W)\pfrac{f}{u^i}(p) =
  \frac{n}{p_i\lmd''(p_i)}\prod_{j\neq i}(p-p_j).
\end{equation}
  Applying $p\mapsto 0$ to both sides of \eqref{lmd tilde p} and \eqref{260414-2007}, we obtain
  \[
    U_n\pfrac{W}{u^i}-W\pfrac{U_n}{u^i} = \frac{WU_n}{p_i^2\lmd''(p_i)},\quad 1\leq i\leq n+1.
  \]
Therefore $\pp{u^i}\log\frac{W}{U_n} = \frac{1}{p_i^2\lmd''(p_i)}$, and \eqref{canonical eta and c} implies
$e=\grad_\eta\log\frac{W}{U_n}$.
Moreover, the gradient of $\log\frac{W}{U_n}$ can also be computed in $\bsv$-coordinates
by using Proposition \ref{prop: eta and P1} and \eqref{g1 afa beta}.
Hence \eqref{e v-coord} holds true.
Multiplying $u^i=\lmd(p_i)=\frac{p_if(p_i)}{p_i-W}$ to both sides of \eqref{260414-2007},
and then summing it over $i$ from $1$ to $n+1$, we obtain
\[
 (\mcalL_EW)f(p) + (p-W)\mcalL_Ef(p)
=
  \sum_{i=1}^{n+1}
    (p_i-W)f(p_i)
    \prod_{j\neq i}\frac{p-p_j}{p_i-p_j},
\]
where $\mcalL_E$ is the Lie derivative along $E$.
Applying $p\mapsto p_i$ for $1\leq i\leq n+1$ to both sides of the above equation,
it is clear that the following polynomial
\begin{align}\label{gp-1}
  g(p)&:=\, (\mcalL_EW)f(p) + (p-W)\mcalL_Ef(p) - (p-W)f(p) \notag\\
  &\phantom:=\,
    -p^{n+1} + \sum_{k=0}^{n}
    \Big(
      (\mcalL_E-1)U_{n+1-k}-W(\mcalL_E-1)U_{n-k} + U_{n-k}\mcalL_EW
    \Big)p^k
\end{align}
has $n+1$ distinct roots $p_1,\dots, p_{n+1}$. Together with \eqref{lmd tilde p}, we obtain
\begin{align}\label{gp-2}
  &\, g(p) = - \prod_{i=1}^{n+1}(p-p_i)
  =
    -p^{n+1} + \frac1n\sum_{k=0}^{n}
    \Big(
      (k+1)WU_{n-k} - (k-1)U_{n+1-k}
    \Big)p^k.
\end{align}
The coefficients of $p^k$, $0\leq k\leq n$, in right-hand sides of \eqref{gp-1}--\eqref{gp-2}
yield a linear system with respect to unknown functions $\{(\mcalL_E-1)U_k\}_{1\leq k\leq n}$ and $\mcalL_EW$.
And then we obtain
\[
  \mcalL_EW = \frac 1n W,\quad
  \mcalL_EU_k = \frac{k}{n}U_k,\quad 1\leq k\leq n
\]
by solving this system. 
Thus we obtain the expression of the vector field $E$ in the local coordinates $\bsv$.
Finally, \eqref{z n+1 log W} and the degree argument \eqref{overline deg}--\eqref{deg fk} imply
\[
\mcalL_E z^{n+1} = \frac 1n,\quad
\mcalL_E z^{\afa} = \frac{n+1-\afa}{n}z^{\afa},\quad 1\leq \afa\leq n.
\]
Therefore \eqref{E z-coord} holds true, and the proposition is proved.
\end{proof}

\begin{thm}\label{thm: n1 GFM}
 The quadruple $\mcalM_{(n,1)}:=(\eta,c,e,E)$
 induced by \eqref{LG eta}--\eqref{LG c} and \eqref{e and E}
 is a generalized Frobenius manifold structure of charge $d=1$.
\end{thm}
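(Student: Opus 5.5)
We check the four axioms of the definition in turn, using the results already established.

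Axiom (1), the Frobenius algebra structure with unity $e$, follows from \eqref{canonical eta and c}. In canonical coordinates $\bsu$ we have $\p_{u^i}\cdot\p_{u^j}=\delta_{ij}\p_{u^i}$. Hence the multiplication is commutative and associative, $e=\sum_i\p_{u^i}$ is its unity, and $\eta(\p'\cdot\p'',\p''')=\eta(\p',\p''\cdot\p''')$ holds. Non-degeneracy of $\eta$ is Proposition \ref{prop: eta and P1}, or alternatively the expression $\eta_{ii}=1/(p_i^2\lmd''(p_i))\neq0$ on $\mcalU$. Axiom (2), flatness, is the preceding theorem: $\eta=\sum_i\td z^i\otimes\td z^{n+2-i}$ in the coordinates $\bsz$.

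For axiom (4), the flat coordinates $\bsz$ make $\nabla$ the coordinate derivative. By \eqref{E z-coord}, $E=\sum_k\frac{n+1-k}{n}z^k\p_{z^k}+\frac1n\p_{z^{n+1}}$ is affine in $\bsz$. Therefore $\nabla\nabla E=0$, and $\nabla E=\diag\bigl(\frac{n}{n},\frac{n-1}{n},\dots,\frac1n,0\bigr)$ is diagonal. Next, $\mcalL_E\eta$ is obtained from the constant antidiagonal form. The entry pairing $\td z^i$ with $\td z^{n+2-i}$ for $2\le i\le n$ gets the weight
\[
\tfrac{n+1-i}{n}+\tfrac{i-1}{n}=1,
\]
and the pair $(z^1,z^{n+1})$ gets weight $1+0=1$. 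Hence $\mcalL_E\eta=\eta$, which is the required identity with $2-d=1$, i.e.\ $d=1$. The multiplication condition $\mcalL_E(\cdot)=(\cdot)$ follows in canonical coordinates. There $E=\sum u^i\p_{u^i}$ gives $[E,\p_{u^i}]=-\p_{u^i}$. Then
\[
\mcalL_E(\p_{u^i}\cdot\p_{u^j})-(\mcalL_E\p_{u^i})\cdot\p_{u^j}-\p_{u^i}\cdot\mcalL_E\p_{u^j}=-\delta_{ij}\p_{u^i}+2\delta_{ij}\p_{u^i}=\p_{u^i}\cdot\p_{u^j},
\]
and the identity extends by tensoriality. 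As a consistency check, $\mcalL_E\eta=\eta$ in canonical coordinates requires $\eta_{ii}=1/(p_i^2\lmd''(p_i))$ to be homogeneous of degree $-1$ in $\bsu$. This holds because $\lmd\mapsto t\lmd$ corresponds to the scaling \eqref{E z-coord}: $p_i$ has degree $1/n$ and $\lmd''(p_i)$ has degree $1-2/n$.

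Axiom (3), the symmetry of $\nabla c$, is the main obstacle. My first plan is to exhibit a potential, by proving that $c(\p_{z^\afa},\p_{z^\beta},\p_{z^\gamma})$ is symmetric under differentiation in a fourth flat direction. I would use the Landau--Ginzburg formula \eqref{LG c} rewritten as $-(\Res_{p=\infty}+\Res_{p=W}+\Res_{p=0})$. The main alternative is the standard Dubrovin argument. One shows that $\p_{z^\delta}c_{\afa\beta\gamma}$ equals a residue expression symmetric in all four indices. This uses $\p_{z^\delta}(\p\lmd/\lmd')$ and integration by parts in $p$, exploiting that the flat coordinates are defined by \eqref{fai q 2} so that $\p_{z^i}\lmd\,\td p/p$ has controlled behaviour at $p=\infty$ and at $W$, as in \eqref{dif lmd 1}--\eqref{dif lmd 3} and \eqref{p f p zi}. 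The extra factor $\td p/p^2$ and the pole at $p=W$ make the boundary terms at $p=0$, $p=W$ and $p=\infty$ delicate; I expect this bookkeeping to be the hardest part.

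A cleaner route would be the following. In canonical coordinates the multiplication is semisimple and $\eta$ is diagonal and flat. Moreover $\eta_{ii}=\p_{u^i}\log(W/U_n)$, from the proof of \eqref{e v-coord}, so the metric is potential, of Egoroff type. For a diagonal flat metric $\eta=\sum\eta_{ii}(\td u^i)^2$ with $\eta_{ii}=\p_{u^i}t$ and $c_{ijk}=\delta_{ij}\delta_{ik}\eta_{ii}$, symmetry of $\nabla c$ reduces to the Egoroff symmetry of the rotation coefficients. That symmetry is automatic from $\eta_{ii}=\p_i t$. I would verify this reduction carefully, since it does not require $\nabla e=0$, and it yields axiom (3) without residue computations.
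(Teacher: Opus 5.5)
Your proposal is correct and follows the paper's proof: the decisive argument for axiom (3) is your Egoroff route. It is exactly the content of the paper's appeal to Lemma 2.12 of the cited reference, since $e=\grad_\eta\log\frac{W}{U_n}$ being a gradient field is the same as $\eta_{ii}=\p_{u^i}\log\frac{W}{U_n}$. The reduction does go through, and needs neither flatness nor $\nabla e=0$: for $l\neq i$, the only non-automatic condition is $(\nabla_{\p_{u^l}}c)(\p_{u^l},\p_{u^i},\p_{u^i})=\tfrac12\p_{u^i}\eta_{ll}$ versus $(\nabla_{\p_{u^i}}c)(\p_{u^i},\p_{u^l},\p_{u^l})=\tfrac12\p_{u^l}\eta_{ii}$, so your residue-based first plan is unnecessary.

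The only cosmetic difference is that you verify $\mcalL_E\eta=\eta$ directly from the antidiagonal form in the flat coordinates $\bsz$, whereas the paper uses the $\overline{\deg}$ grading in the $\bsv$-coordinates.
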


\begin{proof}
  We have already known that $\eta$ is flat,
  and $(\eta,\cdot,e)$ forms a Frobenius algebra structure on tangent spaces,
  where $\cdot$ is the multiplication induced by \eqref{Frob mult}.
  Since the unity $e=\grad_\eta\log\frac{W}{U_n}$ is a gradient field,
  by Lemma 2.12 in \cite{JiangLR GFM 2}, the $(0,4)$-tensor $\nabla c$ is $4$-symmetric,
  where $\nabla$ is the Levi-Civita connection with respect to $\eta$.

  From the local expression \eqref{E z-coord} of $E$ with respect to the flat coordinates $\bsz$,
  it is clear that $\nabla\nabla E = 0$ and $\nabla E$ is diagonalizable.
  Under the canonical coordinates $\bsu$,
  it is easy to verify that
  \[
    \mcalL_E(\p'\cdot\p'') = (\mcalL_E\p')\cdot\p'' + \p'\cdot(\mcalL_E\p'') + \p'\cdot\p''
  \]
  for all vector fields $\p',\p''$ by using \eqref{e and E}.
  Finally, by the degree argument \eqref{overline deg}--\eqref{deg fk} and \eqref{LG eta}, \eqref{E z-coord},
  it is clear that
  \[
    \mcalL_E\Big(
      \eta(\p_{v^i}, \p_{v^j})
    \Big)
   =
    \left(
      1-\frac{\overline{\deg}\,v^i + \overline{\deg}\,v^j}{n}
    \right) \eta(\p_{v^i},\p_{v^j}),\quad 1\leq i,j\leq n+1,
  \]
  and therefore $\mcalL_E\eta = \eta$. Hence the charge $d=1$ and the theorem is proved.
\end{proof}

In particular, the coefficient matrix $\mu$ of the operator $\left(\frac{2-d}{2}\boldsymbol{1}-\nabla E\right)$
with respect to the flat coordinates $\bsz$ has the form
\begin{equation}
  \mu=\diag(\mu_1,\cdots,\mu_{n+1}), \quad
  \mu_k:=\frac{k-1}{n}-\frac12,\quad 1\leq k\leq n+1,
\end{equation}
which belongs to the monodromy data of the generalized Frobenius manifold $\mcalM_{(n,1)}$.
We remark that the entries $\mu_2,\dots,\mu_{n+1}$ defined above coincide with that defined in \eqref{P2 mu}.

\subsection{Relation to the dispersionless limit of the $(n,1)$-type RR2T}
\label{subsection:PH}

In this subsection,
we are to provide the \textit{Principal Hierarchy} of the generalized Frobenius manifold $\mcalM_{(n,1)}$
introduced in Theorem \ref{thm: n1 GFM},
and show that the dispersionless limits of the flows of the $(n,1)$-type RR2T \eqref{positive-Lax}--\eqref{negative-Lax}
belong to this Principal Hierarchy.

Let us briefly recall the notion of
the Principal Hierarchy of a generalized Frobenius manifold in the sense of \cite{GFM}.
In general, let $\mcalM(\eta,c,e,E)$ be an $m$-dimensional generalized Frobenius manifold,
fix a family of flat coordinates $(z^1,\ldots,z^m)$.
As it was introduced in \cite{GFM},
the \textit{Principal Hierarchy} of $\mcalM$ is a family of evolutionary PDEs of the form
\begin{equation}\label{PH}
  \frac{\partial z^\alpha}{\partial t^{i,k}}
=
  \sum_{\beta=1}^{m}
  \eta^{\alpha\beta} \partial_x \frac{\partial \theta_{i,k+1}}{\partial z^\beta},
  \quad (i,k) \in \mathcal{I},\quad 1\leq \afa\leq m
\end{equation}
with respective to the time variables $\mathbf{t} = \{t^{i,k}\}_{(i,k) \in \mathcal{I}}$,
where the index set
\begin{align*}
  \mathcal{I} = \left( \{1,2,\ldots,m\} \times \mathbb{Z}_{\geq 0} \right) \cup \left( \{0\} \times \mathbb{Z} \right),
\end{align*}
$(\eta^{\afa\beta}):=(\eta_{\afa\beta})^{-1}$, $\eta_{\afa\beta}:=\eta(\p_{z^\afa}, \p_{z^\beta})$,
and $\theta_{i,k}$ are certain functions on $\mcalM$.
We remark that the definition of the flows $\pp{t^{\afa,k}}$ in \eqref{PH} for $1\leq \afa\leq m$, $k\geq 0$
is analogous to that for Dubrovin--Frobenius manifolds with flat unity \cite{Dubrovin1996, Normal forms},
and the flows $\{\pp{t^{0,k}}\}_{k\in\bbZ}$,
referred to as the additional flows, arise from the non-flatness of the unit vector field.
In particular, the $\pp{t^{0,0}}$-flow is given by the translation along the spatial variable $x$,
i.e. $\pp{t^{0,0}}=\p_x$, and therefore we identify the time variable $t^{0,0}$ with $x$ for convenience.
The flows of the Principal Hierarchy \eqref{PH} can be rewritten as the following Hamiltonian system of hydrodynamic type
\begin{equation}
\frac{\partial z^\alpha}{\partial t^{i,k}}
=
  \sum_{\beta=1}^{m}
(\mathcal{P}_1^{[0]})^{\alpha\beta} \frac{\delta H_{i,k}^{[0]}}{\delta z^\beta}, \quad (i,k) \in \mathcal{I},
\end{equation}
where the Hamiltonian operator $\mathcal{P}_1^{[0]}$ has the form
$
(\mathcal{P}_1^{[0]})^{\alpha\beta} = \eta^{\alpha\beta} \partial_x
$
with respect to the flat coordinates,
and the Hamiltonians
\begin{align*}
 H_{i,k}^{[0]} = \int \theta_{i,k+1} \td x, \quad (i,k) \in \mathcal{I}.
\end{align*}
In other words, the Hamiltonian operator $\mathcal{P}_1^{[0]}$ arises from the flat metric $\eta$,
and the functions $\theta_{i,k}$ serve as Hamiltonian densities.
In fact, the Principal Hierarchy \eqref{PH} admits a bihamiltonian structure
$(\mathcal{P}_1^{[0]},\mathcal{P}_2^{[0]})$,
and satisfies certain bihamiltonian recursion relations,
where the second operator $\mathcal{P}_2^{[0]}$ arises from the \textit{intersection form}
of $M$.
More details about the Principal Hierarchy of generalized Frobenius manifolds can be found in \cite{GFM}.

Now let $m:=n+1$, consider the $(n+1)$-dimensional generalized Frobenius manifold $\mcalM_{(n,1)}$
introduced in Theorem \ref{thm: n1 GFM},
and fix its flat coordinates $\bsz=(z^1,\dots,z^{n+1})$ defined by \eqref{fai q 2}--\eqref{flat res formula 2}.
Then the entries of the coefficient matrix of the flat metric $\eta$ with respect to the coordinates $\bsz$
has the form $\eta_{\afa\beta}=\delta_{\afa+\beta,n+2}$ because of \eqref{eta local z}.

To construct the Principal hierarchy of $\mcalM_{(n,1)}$,
it suffices to give the explicit expression of the densities $\{\theta_{i,k}\}_{(i,k)\in\mcalI}$.
By standard methods (for instance, those analogous to \cite{Kodama LG, MaSL LG} or Corollary 2.24 of \cite{rational-reduction}),
it can be verified that a subset of the Hamiltonian densities
$\theta_{i,k}$ corresponding to $\mcalM_{(n,1)}$ can be chosen as
\begin{align}
  \theta_{i,k} &=\,
-c_{i,k}\Res_{p=\infty}
  \left(
    \lmd^{k+\frac{i-1}{n}}(p)
    \frac{\td p}{p}
  \right),\quad \text{for}\,\, 2\leq i\leq n+1,\, k\geq 0,
\\
  \theta_{0,-k} &=\, (-1)^k(k-1)!
  \Res_{p=0}
  \left(
    \lmd^{-k}(p)\frac{\td p}{p}
  \right),\quad\text{for}\,\,k\geq 1, \label{theta 0-k}
\end{align}
while the explicit expressions for the remaining $\{\theta_{1,k}, \theta_{0,k}\}_{k\geq 0}$ will be given later.
Here the constants $c_{i,k}$ are introduced in \eqref{const coef cik},
and $\lmd^{\frac 1n}(p)\in C^\infty(\mcalU)\fls{\frac 1p}$ is regarded as a formal Laurent series.
It is clear that the above $\theta_{i,k}$, $\theta_{0,-k}$
coincide with the dispersionless limits
of the densities of the Hamiltonians $H_{i,k-1}$, $H_{0,-k-1}$ introduced in \eqref{Ham Hip-1}--\eqref{Ham Hip-2}.
Proposition \ref{prop: eta and P1} implies that the first Hamiltonian structure $\mcalP^{[0]}_1$
of the Principal Hierarchy of $\mcalM_{(n,1)}$
coincides with the dispersionless limit of $\mcalP_1$ \eqref{HamP1-eqn}--\eqref{250812-1309-2}
of the $(n,1)$-type RR2T.
Hence we have the following result:

\begin{prop}
  The dispersionless limits of the flows \eqref{positive-Lax}--\eqref{negative-Lax}
  of the $(n,1)$-type RR2T coincide with the $\pp{t^{i,k}}$- and $\pp{t^{0,-k-1}}$-flows
  of the Principal Hierarchy of $\mcalM_{(n,1)}$, respectively, for $2\leq i\leq n+1$ and $k\geq 0$.
\end{prop}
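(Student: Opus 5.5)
Both sides are Hamiltonian with respect to the same first Hamiltonian structure, so the plan is to compare the dispersionless limits of \eqref{HamP1-eqn} with the Hamiltonian form of the Principal Hierarchy \eqref{PH}. By Theorem \ref{thm: HamP1}, each flow $\pp{t^{i,k}}$ ($2\leq i\leq n+1$, $k\geq 0$) and $\pp{t^{0,-k-1}}$ ($k\geq 0$) of the $(n,1)$-type RR2T has the form $\veps\p_t\bsv=\mcalP_1\,\delta H_{i,k}/\delta\bsv$. Dividing by $\veps$ and letting $\veps\to0$, the right-hand side becomes $\mcalP_1^{[0]}\,\delta H^{[0]}_{i,k}/\delta\bsv$. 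Here $\mcalP_1^{[0]}$ is the hydrodynamic operator $g_1^{ij}\p_x+\Gamma^{ij}_{1;k}v^k_x$, and $H^{[0]}_{i,k}=\int h^{[0]}_{i,k}\td x$, with $h^{[0]}_{i,k}$ the $\veps^0$ part of the density. The proof then splits into three checks: (a) $\mcalP_1^{[0]}$ equals the operator $\eta^{\afa\beta}\p_x$ written in flat coordinates; (b) the limiting densities agree with $\theta_{i,k+1}$ and $\theta_{0,-k-1}$; (c) the normalisations, including signs and constants, match.

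For (a), Proposition \ref{prop: eta and P1} gives $g_1=\eta^{-1}$. A Poisson bracket of hydrodynamic type is determined by its metric: its Christoffel coefficients are forced to be those of the Levi-Civita connection, and $\mcalP_1$ is Hamiltonian by Section \ref{subsection P1 ham}. Hence, after the change of coordinates $\bsv\mapsto\bsz$ from the lemma on flat coordinates, $\mcalP_1^{[0]}$ becomes $\eta^{\afa\beta}\p_x$.

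For (b), I would use the standard dispersionless correspondence between pseudo-difference symbols and Laurent series. Replace $\Lmd$ by the symbol $p$; then $L$ goes to $\lmd(p)$ in \eqref{superpotential lmd p}, viewed as a series at $p=\infty$, while $M$ goes to $\lmd(p)^{-1}$ as a series at $p=0$. Since $\Res$ picks out the coefficient of $p^0$, we get:
\begin{itemize}
\item at $\veps=0$, $\Res L^{r}$ equals $-\Res_{p=\infty}\lmd^{r}\frac{\td p}{p}$;
\item at $\veps=0$, $\Res M^k$ equals $\Res_{p=0}\lmd^{-k}\frac{\td p}{p}$.
\end{itemize}
Comparing \eqref{Ham Hip-1}--\eqref{Ham Hip-2} with the definition of $\theta_{i,k}$ and \eqref{theta 0-k} then gives:
\begin{itemize}
\item $h^{[0]}_{i,k}=\theta_{i,k+1}$, using the constant $c_{i,k+1}$;
\item $h^{[0]}_{0,-k-1}=-c_{0,-k}\Res_{p=0}\lmd^{-k}\frac{\td p}{p}$, which matches $\theta_{0,-k}$ because $-c_{0,-k}=-(-1)^{k-1}(k-1)!=(-1)^k(k-1)!$.
\end{itemize}
These hold for $k\geq1$.

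The main obstacle is the case $k=0$ of the negative flows, whose Hamiltonian is $H_{0,-1}=\int\log\frac{W}{U_n}\td x$ rather than a residue. In the Principal Hierarchy of a generalized Frobenius manifold, the $t^{0,-1}$ flow should be generated by the density $\theta_{0,0}$, for which $e=\grad_\eta\theta_{0,0}$. Part (1) of the proposition on $e$ and $E$ gives exactly $e=\grad_\eta\log\frac{W}{U_n}$. So I would take $\theta_{0,0}=\log\frac{W}{U_n}$, determined up to an additive constant that does not affect the flow. I would then check that this choice is consistent with the recursion defining the additional flows in \cite{GFM}, namely $\p_{z^\afa}\theta_{0,-k}$ related to $c$ and $\p\theta_{0,-k-1}$. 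This consistency can be verified on $\theta_{0,-1}$, computed as a residue at $p=0$, for instance by computing in canonical coordinates with \eqref{canonical eta and c}.

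Once (a), (b) and (c) are in place, the flows coincide term by term, which proves the proposition.
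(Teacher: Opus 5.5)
Your proposal is correct and follows the same argument the paper gives in the paragraph preceding the proposition. Proposition \ref{prop: eta and P1}, together with the Hamiltonian property of $\mcalP_1$, identifies the dispersionless limit of $\mcalP_1$ with the operator $\eta^{\afa\beta}\p_x$ of the Principal Hierarchy, and the dispersionless densities of $H_{i,k}$, $H_{0,-k-1}$ and $H_{0,-1}$ are $\theta_{i,k+1}$, $\theta_{0,-k}$ and $\theta_{0,0}=\log\frac{W}{U_n}$. Your added justification of $\theta_{0,0}$ through $e=\grad_\eta\log\frac{W}{U_n}$ is consistent with how the paper uses that fact, for example in checking $\theta_{0,-1}=\frac12\eta(e,e)$.
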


There are two sequences of remaining flows
$\pp{t^{1,k}}$ and $\pp{t^{0,k}}$ for $k \geq 0$
in the Principal Hierarchy of $\mcalM_{(n,1)}$,
which shall be regarded as the \textit{extended logarithmic flows} of the dispersionless $(n,1)$-type RR2T.
To provide the explicit expression of the corresponding densities $\theta_{1,k}$ and $\theta_{0,k}$,
we need to expand the logarithm of the superpotential $\lmd(p)$ at $p=W$ in two different ways.
Now we regard $\lmd(p)$ as a polynomial of $(p-W)$ and $\frac{1}{p-W}$ as follows
\begin{equation}
  \lmd(p) = \frac{p f(p)}{p-W}
=
  (p-W)^n + \sum_{\ell=-1}^{n-1} \lmd_\ell (p-W)^\ell,
\end{equation}
where $f(p)$ is introduced in \eqref{polynomial fp}, and $\lmd_\ell$ are certain functions.
From \eqref{z n+1 log W}--\eqref{flat res formula 1} we obtain
\begin{equation}
  \lmd_{-1} = Wf(W) = z^1\rme^{z^{n+1}}.
\end{equation}
\begin{itemize}
  \item One way to expand the logarithm of $\lmd(p)$
  is analogous to the construction of the logarithmic flows of the extended bi-graded Toda hierarchy
  \cite{extend bi Toda, extended Toda},
  or the extended constrained KP hierarchy, see Sect.\,5 of \cite{central inv of cKP}.
  Precisely speaking, let us expand the logarithm of $(p-W)\lmd(p)$ and
  $\frac{\lmd(p)}{(p-W)^{n}}$
  as formal power series of $(p-W)$ and $\frac{1}{p-W}$ respectively, i.e.
  introduce the following formal series
  \begin{align*}
    \log_{+}\big((p-W)\lmd(p)\big)
  &:=\,
    \left(z^{n+1} + \log z^1\right) +
    \log\left(\frac{(p-W)\lmd(p)}{\lmd_{-1}}\right) \in C^\infty(\mcalU)\fps{p-W},
  \\
    \log_-\left(\frac{\lmd(p)}{(p-W)^{n}}\right)
  &\in\,  C^\infty(\mcalU)\fps{\frac{1}{p-W}}.
  \end{align*}
  Then we introduce the following bi-infinite formal series
  \begin{align}
    \log(\lmd(p)) &:=\, \frac{n}{n+1} \log_{+}\big((p-W)\lmd(p)\big) +
    \frac{1}{n+1} \log_-\left(\frac{\lmd(p)}{(p-W)^{n}}\right) .
  \end{align}
  \item Another way to expand the logarithm of $\lmd(p)$ arises from the factorization
  \[
    \lmd(p) = p^n \cdot \frac{\lmd(p)}{p^n}.
  \]
  Introduce the following formal series
  \begin{align*}
    \log_+(p) &:= \log W + \log\left(1+\frac{p-W}{W}\right)
    =
      z^{n+1} + \sum_{k=1}^{\infty}(-1)^{k-1}\frac{(p-W)^k}{kW^k},
  \end{align*}
  and $\log_-\left(\frac{\lmd(p)}{p^{n}}\right)\in C^\infty(\mcalU)\fps{\frac{1}{p-W}}$.
  Then introduce the bi-infinite formal series
  \begin{equation}\label{tilde log}
    \widetilde{\log}(\lmd(p)):=
     n \log_+(p) + \log_-\left(\frac{\lmd(p)}{p^n}\right).
  \end{equation}
\end{itemize}

It can be verified that the densities $\theta_{1,k}$, $\theta_{0,k}$
of 
$\mcalM_{(n,1)}$  can be chosen as
\begin{align}
  \theta_{1,0} &=\, z^{n+1},\qquad  \theta_{0,0} = \log\frac{W}{U_n},
\\
  \theta_{1,k} &=\,
\frac{n+1}{n k!}
\Res_{p=W}
\left(
  \lmd^k(p)\left(\log(\lmd(p)) - H_k\right)\frac{\td p}{p}
\right),\quad k\geq 1,
\\
  \theta_{0,k} &=\,
\frac{1}{n k!}
\Res_{p=W}
\left(
  \lmd^k(p)\left(\widetilde\log(\lmd(p)) - H_k\right)\frac{\td p}{p}
\right), \quad k\geq 1,  \label{theta 0k}
\end{align}
where $H_k:=\sum_{i=1}^{k}\frac 1i$ are harmonic numbers,
which generalizes the results of the special cases for $n=1$
(Proposition 2.1 of \cite{extend AL})  and $n=2$ (Proposition 3.2 of \cite{2-1 RR2T}).

\begin{ex}
  The densities $\theta_{0,\pm 1}$ defined by \eqref{theta 0-k} and \eqref{theta 0k} have the forms
  \begin{equation} \label{theta 01}
    \theta_{0,-1} = \frac12\eta(e,e) = -\frac{U_n + W U_{n-1}}{U_n^2},\qquad
    \theta_{0,1} = \frac12\sum_{i=1}^{n+1}z^i z^{n+2-i},
  \end{equation}
where $\eta$ and $e$ are the Frobenius metric and unit vector field, respectively.
\end{ex}
Note that equation \eqref{theta 01} agrees with the general result for generalized Frobenius manifolds;
see (3.15) of \cite{GFM}.

\begin{proof}
  From \eqref{e v-coord} we arrive at
  \begin{align*}
    \frac12\eta(e,e) &=\, \frac12\eta\left(e \, , \, \grad_\eta\log\frac{W}{U_n}\right)
    =
      \frac{1}{2} \mcalL_e \left(\log\frac{W}{U_n}\right) = - \frac{U_n + W U_{n-1}}{U_n^2}.
  \end{align*}
  Then, by a direct computation from \eqref{theta 0-k} with $k=1$, one can verify the validity of
  the first equation of \eqref{theta 01}. 
  To deal with $\theta_{0,1}$, let us expand the logarithm of $\frac{\lmd(p)}{p^n}$
  as formal power series in $\frac{1}{p}$ and in $\frac{1}{p-W}$, respectively, as follows:
  \[
    \widehat{\log}\left(\frac{\lmd(p)}{p^n}\right)
    :=\sum_{k =1}^{\infty}\frac{a_k^{(0)}}{p^k},
  \quad
    \log_-\left(\frac{\lmd(p)}{p^n}\right)
    := \sum_{k=1}^{\infty}\frac{a_k^{(1)}}{(p-W)^k},
  \]
  where $a_k^{(0)}, a_k^{(1)}$ are certain functions on $\mcalU$.
  It is clear that
  \begin{equation}\label{250425-1915}
    \left(\sum_{k=1}^{n}\frac{a_k^{(0)}}{p^k}\right)
   - \left(\sum_{k=1}^{n}\frac{a_k^{(1)}}{(p-W)^k}\right) = O(\frac{1}{p^{n+1}}),\quad p\to\infty.
  \end{equation}
  By using \eqref{fai q 1}--\eqref{fai q 2} and \eqref{polynomial fp} we obtain
  \begin{align}
  &\,
    \frac12\sum_{i=2}^{n} z^i z^{n+2-i}
  =
    -\frac n2 \Res_{q=\infty}
    \left(
      \left(\log\frac{q}{\fai(q)}\right)^2 q^{n-1}\td q
    \right) \notag \\
  =&\,
    \Res_{q=\infty}\left(
      q^n\left(\log\frac{q}{\fai (q)}\right)
      \left(\frac 1q - \frac{\fai'(q)}{\fai(q)}\right)\td q
    \right)
  =
  -\frac{z^1}{n}
  -\Res_{q=\infty}
  \left(
    q^n\left(\log\frac{q}{\fai(q)}\right)
    \frac{\td\fai(q)}{\fai(q)}
  \right)  \notag\\
  =&\,
    -\frac{z^1}{n}
    -\frac 1n\Res_{p=\infty}
    \left(
      \lmd(p) \, \widehat{\log}\left(\frac{\lmd(p)}{p^n}\right)\frac{\td p}{p}
    \right)
  =
    -\frac{z^1}{n}
    -\frac 1n\Res_{p=\infty}
    \left(
      \frac{f(p)}{p-W}
      \sum_{k=1}^{n}
        \frac{a_k^{(0)}}{p^k}
        \td p \notag
    \right).
  \end{align}
Then from \eqref{tilde log}, \eqref{theta 0k}, \eqref{250425-1915} and the above equation,
we arrive at
\begin{align*}
  \theta_{0,1}
&=\,
  \frac 1n\Res_{p=W}\left(
    \lmd(p)
    \left(
      \widetilde{\log}(\lmd(p)) - 1
    \right)
    \frac{\td p}{p}
  \right)
\\
&=\,
  \frac 1n \Res_{p=W}
  \left(
    \frac{f(p)}{p-W}
    \left(
      nz^{n+1} + \log_-\left(\frac{\lmd(p)}{p^n}\right) - 1
    \right) \td p
  \right)
\\
&=\,
  \left(z^{n+1}-\frac 1n\right)z^1
 -\frac 1n\Res_{p=\infty}
  \left(
    \frac{f(p)}{p-W}
    \sum_{k=1}^{n}
      \frac{a_k^{(0)}}{p^k}
     \td p
  \right)
=
\frac 12
  \sum_{i=1}^{n+1}
    z^i z^{n+2-i}.
\end{align*}
Hence \eqref{theta 01} holds true.
\end{proof}

\begin{rmk}\label{rmk: extended n1 RR2T}
  We conjecture that the $(n,1)$-type RR2T \eqref{positive-Lax}--\eqref{negative-Lax} admits
  two sequences of extended flows
  $\{\pp{t^{1,k}}, \pp{t^{0,k}}\}_{k\geq 0}$
  such that the dispersionless limits of them coincide with that in the Principal Hierarchy of $\mcalM_{(n,1)}$.
  The derivation of explicit expressions for these extended flows
  is likely to depend on appropriate definitions of the logarithm of the Lax operator $L$ \eqref{Lax L}.
  However, we do not yet know how to define such $\log L$.
\end{rmk}

\subsection{Examples}

  Recall that all the generalized Frobenius manifolds $\mcalM_{(n,1)}$
  constructed in Sect.\,\ref{subsection:flat metric}--\ref{subsection:unit and Euler}
  are of charge $d=1$.
  The flat metrics $\eta$ and the Euler vector fields $E$
  with respect to the flat coordinates $\bsz$ \eqref{z n+1 log W}--\eqref{flat res formula 2}
  are given by \eqref{eta local z} and \eqref{E z-coord}, respectively.

  Let us provide explicit expressions of the potential $F$ and the unit vector field $e$
  of $\mcalM_{(n,1)}$ with respect to the flat coordinates $\bsz$ for $1\leq n\leq 4$.

\begin{ex}(\cite{Brini 2012,LiuQuZhang,extend AL}).
The flat coordinates $\bsz=(z^1,z^2)$ of $\mcalM_{(1,1)}$ read
\[
  z^1 = U_1 + W,\quad z^2 = \log W.
\]
In this coordinates, the potential $F_{(1,1)}$ of $\mcalM_{(1,1)}$ has the form
\begin{equation}
  F_{(1,1)} = \frac12(z^1)^2 \left(z^2 + \log z^1 \right) + z^1\rme^{z^2},
\end{equation}
and the unit vector field $
  e = \frac{z^1\p_{z^1} - \p_{z^2}}{z^1 - \rme^{z^2}}$.
\end{ex}

\begin{ex}(\cite{2-1 RR2T}). The flat coordinates $\bsz=(z^1,z^2,z^3)$ of $\mcalM_{(2,1)}$ are given by
\begin{align*}
\left\{
	\begin{aligned}
		z^1 &=  U_2 + U_1 W + W^2, \\
		z^2 &=  \frac{1}{\sqrt{2}}\left(U_1 + W\right),  \quad
		z^3 = \log W.
	\end{aligned}
\right.
\end{align*}
In this coordinates, the potential $F_{(2,1)}$ of $\mcalM_{(2,1)}$ has the form
\begin{equation}\label{F(21)}
  F_{(2,1)}=\frac12(z^1)^2 \left(z^3 + \log z^1\right)
   +z^1\left(\frac12 (z^2)^2
   +\sqrt{2}\,z^2 \rme^{z^3}
   +\frac12 \rme^{2z^3}\right)
   -\frac1{24} (z^2)^4,
\end{equation}
and the unit vector field $e$ has the form
\[
  e
=
  \frac{
    z^1\p_{z^1}
   +\sqrt{2}\,\rme^{z^3}\p_{z^2}
   -\p_{z^3}
  }{z^1-\sqrt{2}\,z^2\rme^{z^3}}.
\]
\end{ex}

\begin{ex} The flat coordinates $\bsz=(z^1,z^2,z^3,z^4)$ of $\mcalM_{(3,1)}$ are given by
\begin{align*}
\left\{
	\begin{aligned}
		z^1 &=  U_3 + U_2 W + U_1 W^2 + W^3, \\
		z^2 &=  \frac{1}{\sqrt{3}}\left(U_2 - \frac16 U_1^2 + \frac 23 U_1 W + \frac 56 W^2\right),  \\
        z^3 &=  \frac{1}{\sqrt{3}}\left(U_1 + W\right),\quad
		z^4 =  \log W.
	\end{aligned}
\right.
\end{align*}
In this coordinates, the potential $F_{(3,1)}$ of $\mcalM_{(3,1)}$ has the form
\begin{align}
  F_{(3,1)}&=\,\frac12(z^1)^2 \left(z^4 + \log z^1\right) \notag\\
   &\quad
    +z^1\left(
       z^2 z^3 + \sqrt{3}\, z^2 \rme^{z^4}
      +\frac12(z^3)^2 \rme^{z^4}
      +\frac{\sqrt{3}}{2}\, z^3 \rme^{2z^4}
      +\frac 13\, \rme^{3z^4}    \right) \notag\\
   &\quad
     +\frac{(z^2)^3}{2\sqrt{3}}
     -\frac {(z^2)^2 (z^3)^2}4
     +\frac{z^2 (z^3)^4 }{24\sqrt{3}}
     -\frac{(z^3)^6}{720}, \label{F(31)}
\end{align}
and the unit vector field $e$ has the form
\[
  e
=
  \frac{
    z^1\p_{z^1} + z^3\rme^{z^4}\p_{z^2} + \sqrt{3}\,\rme^{z^4}\p_{z^3} - \p_{z^4}
  }{z^1-\left(\sqrt{3}\,z^2 + \frac12 (z^3)^2\right)\rme^{z^4}}.
\]
\end{ex}

\begin{ex}
The flat coordinates $\bsz=(z^1,\dots,z^5)$ of $\mcalM_{(4,1)}$ are given by
\begin{align*}
\left\{
	\begin{aligned}
		z^1 &=  U_4 + U_3 W + U_2 W^2 + U_1 W^3 + W^4, \\
		z^2 &=  \frac 12 U_3 - \frac18 U_2 U_1 + \frac 38 U_2 W + \frac 5{192} U_1^3 - \frac{3}{64}U_1^2 W
               +\frac{21}{64}U_1 W^2 + \frac{77}{192}W^3,  \\
        z^3 &=  \frac 12 U_2 -\frac 18 U_1^2 + \frac 14 U_1W +\frac 38 W^2, \\
        z^4 &=  \frac{1}{2}\left(U_1 + W\right),\quad
		z^5 =  \log W.
	\end{aligned}
\right.
\end{align*}
In this coordinates, the potential $F_{(4,1)}$ of $\mcalM_{(4,1)}$ has the form
\begin{align}
  F_{(4,1)}&=\, \frac12(z^1)^2 \left(z^5 + \log z^1\right) \notag\\
   &\quad
     + z^1 \left(
         z^2 z^4 + 2z^2\rme^{z^5} + \frac{(z^3)^2}{2} + z^3 z^4 \rme^{z^5} + z^3\rme^{2z^5} \right.\notag\\
   &\qquad\left.
        +\frac{(z^4)^3\rme^{z^5}}{12} + \frac{(z^4)^2\rme^{2z^5}}{2}
        +\frac{2z^4\rme^{3z^5}}{3} + \frac{\rme^{4z^5}}{4}
       \right) \notag\\
   &\quad
     + (z^2)^2z^3 - \frac{(z^2)^2(z^4)^2}{4}
     - \frac{z^2 (z^3)^2 z^4}{2} + \frac{z^2 z^3 (z^4)^3}{12} - \frac{z^2 (z^4)^5}{480} \notag\\
   &\qquad
     - \frac{(z^3)^4}{12} + \frac{(z^3)^3(z^4)^2}{12}
     - \frac{(z^3)^2 (z^4)^4}{48}
     + \frac{z^3 (z^4)^6}{576}
     - \frac{(z^4)^8}{16128},  \label{F(41)}
\end{align}
and the unit vector field $e$ has the form
\[
  e =
    \frac{
      z^1\p_{z^1}
     +\left(z^3 + \frac14(z^4)^2\right)\rme^{z^5}\p_{z^2}
     +z^4\rme^{z^5}\p_{z^3}
     +2\rme^{z^5}\p_{z^4} - \p_{z^5}
    }
    {z^1 - \left(2z^2 + z^3 z^4 + \frac{1}{12}(z^4)^3\right)\rme^{z^5}}.
\]
\end{ex}

At the end of this section,
we provide some preliminary evidence for a relation between the above $\mcalM_{(n,1)}$
and the generalized Frobenius manifold $\mcalM_{\mathsf{A}_{n-1}}$
associated with the $q$-deformed Gelfand--Dickey hierarchy \cite{Zhonglun Cao}.
Here, $\mcalM_{\mathsf{A}_{n-1}}$ is the generalized Frobenius manifold
defined on the orbit space of the affine Weyl group of type $\mathsf{A}_{n-1}$ in the framework of \cite{JiangLR GFM 2},
which can be constructed from the Landau--Ginzburg superpotential
\[
  \lmd_{\mathrm{qGD}}(p) = p^n + U_1 p^{n-1} + \cdots + U_{n-1} p
\]
corresponding to the Lax operator \eqref{Lax qGD};
see Sect.\,2.6 of \cite{JiangLR GFM 2} for more details.

For $n=2,3,4$,
after the change of variables
\[
  z^\afa \mapsto \frac{1}{\sqrt{n}} t^{n+1-\afa},\quad 2\leq \afa\leq n
\]
in the potentials \eqref{F(21)}--\eqref{F(41)},
one finds that the limits
$\lim\limits_{z^1\to 0} F_{(n,1)}$ coincide with the potentials of $\mcalM_{\mathsf{A}_{n-1}}$ appearing in
Examples 2.1--2.3 of \cite{JiangLR GFM 2}.
Based on this observation,
we expect that there are some relations between
$\mcalM_{(n,1)}$ and $\mcalM_{\mathsf{A}_{n-1}}$ for general $n\ge 2$,
which will be investigated in future work.

\section{Conclusion}

In this paper,
we derived a pair of bihamiltonian operators $(\mcalP_1, \mcalP_2)$ for the $(n,1)$-type RR2T,
and proved, by a straightforward computation using the super-variable formalism,
that the pair forms a bihamiltonian structure.
We believe that the method developed in this work based on the super-variable formalism
can be extended to a wide range of differential-difference integrable systems
and may provide a systematic way to investigate their Hamiltonian structures.

On the other hand, from a geometric perspective,
we constructed an $(n+1)$-dimensional generalized Frobenius manifold $\mcalM_{(n,1)}$
with non-flat unity \cite{GFM}
associated with the $(n,1)$-type RR2T,
and proved that the dispersionless flows of the $(n,1)$-type RR2T belong to the Principal Hierarchy of this generalized Frobenius manifold.
Based on the study of the special case $n=2$ in \cite{2-1 RR2T},
we expect that the $(n,1)$-type RR2T is related to the $(n,1)$-type bi-graded Toda hierarchy \cite{extend bi Toda}
and the constrained KP hierarchy \cite{central inv of cKP}
via linear reciprocal transformations,
while the corresponding (generalized) Frobenius manifolds are related via generalized Legendre transformations \cite{LiuQuZhang}.
Moreover, based on several computed examples,
we conjecture that the central invariants \cite{Central inv}
of the bihamiltonian structure of the $(n,1)$-type RR2T are all equal to $\frac 1{24}$,
which provides supporting evidence for the conjecture that
the flows of $(n,1)$-type RR2T
coincide, up to a certain Miura-type transformation,
with those of the topological deformation \cite{GFM}
of the Principal Hierarchy of $\mcalM_{(n,1)}$.
These conjectures will be investigated in future work.

Several related problems also deserve further study.
For example, it would be interesting to investigate whether the generalized Frobenius manifold $\mcalM_{(n,1)}$
arises from an orbit-space construction similar to that in \cite{JiangLR GFM 2},
and to extend the methods developed in this paper to study the Hamiltonian structures of
Takasaki's generalized Ablowitz--Ladik hierarchy \cite{Takasaki}.

\vskip 0.3cm
\noindent \textbf{Acknowledgements.}
This work was supported by National Natural Science Foundation of China (Grant No. 12501323).
The authors would like to thank Zhonglun Cao, Si-Qi Liu and Youjin Zhang for very useful discussions on this work.
The second author was supported by the  “Jingying” Project of Shandong University of Science and Technology.

\section*{Appendix: Proof of Lemma \ref{CZL main lemma}}
\addcontentsline{toc}{section}{Appendix: Proof of Lemma \ref{CZL main lemma}}
\renewcommand{\thesection}{A}

In this Appendix, we shall verify \eqref{CZL main identity} by straightforward calculation.
In comparison with equation \eqref{def of A gamma}--\eqref{def of B gamma}, we additionally introduce
\begin{align*}
  A_0 &:=\, \sum_{\afa=1}^{n-1}
    \left(
      \sum_{s=0}^{0} - \sum_{s=\afa}^{\afa}
    \right) U_s\left(U_{\afa-s}\theta_\afa\right)^{[\afa-s]}
  =
    \sum_{\afa=1}^{n-1}\left(\Lmd^\afa-1\right)\left(U_\afa\theta_\afa\right),
\\
  B_0 &:=\,
    \left(
      \sum_{\beta,\beta'=0}^{n-1} - \sum_{\beta,\beta'=1}^{-1}
    \right)\theta_\beta
    \left(
      U_{\beta+\beta'}\theta_{\beta'}
    \right)^{[\beta']}
=
  \sum_{\beta,\beta'=1}^{n-1}
    \theta_\beta
    \left(
      U_{\beta+\beta'}\theta_{\beta'}
    \right)^{[\beta']},
\end{align*}
then \eqref{CZL main identity} is equivalent to
\begin{equation}\label{CZL main identity 2}
  \int
    \sum_{\gamma=0}^{n-1} A_\gamma B_\gamma \td x = 0.
\end{equation}
It should be emphasized that we always use the notational conventions
\[
  U_0:=1,\quad \theta_0:= 0, \quad
  U_\gamma:=0\ \ \text{if $\gamma<0$ or $\gamma>n$. }
\]

Notice that for all $0\leq\gamma\leq n-1$, $A_\gamma$ and $B_\gamma$ can be rewritten as
\begin{align}
  A_\gamma
&=\,
  \underbrace{
  \sum_{s=0}^{\gamma}U_s
  \sum_{\afa=s+1}^{n-1}
    \left(
      U_{\gamma+\afa-s}\theta_\afa
    \right)^{[\afa-s]}}_{:=\,A_\gamma^{(0)}}
 \,-\,
   \underbrace{
   \sum_{s=\gamma+1}^{n}U_s
   \sum_{\afa=s-\gamma}^{\min\{s,n-1\}}
    \left(
      U_{\gamma+\afa-s}\theta_\afa
    \right)^{[\afa-s]}}_{:=\,A_\gamma^{(1)}},
\\
  B_\gamma
&=\,
  \underbrace{
  \sum_{\beta=\gamma}^{n-1}\theta_\beta
    \sum_{\beta'=\gamma}^{n-1}
      \left(U_{\beta+\beta'-\gamma}\theta_{\beta'}\right)^{[\beta'-\gamma]}
  }_{:=\,B_\gamma^{(0)}}
\,-\,
  \underbrace{
  \sum_{\beta=1}^{\gamma-1}\theta_\beta
    \sum_{\beta'=\gamma-\beta}^{\gamma-1}
      \left(U_{\beta+\beta'-\gamma}\theta_{\beta'}\right)^{[\beta'-\gamma]}
  }_{:=\,B_\gamma^{(1)}},
\end{align}
and then we have
\begin{align}
&\,
  \int\sum_{\gamma=0}^{n-1}
    A_\gamma^{(0)}
    B_\gamma^{(0)} \td x \notag\\
=&\,
  -\int
    \sum_{\beta=1}^{n-1}
    \sum_{s=0}^{\beta}
      U_s\theta_\beta
    \sum_{\gamma=s}^{\beta}
    \sum_{\afa=s+1}^{n-1}
    \sum_{\beta'=\gamma}^{n-1}
      \left(U_{\gamma+\afa-s}\theta_\afa\right)^{[\afa-s]}
      \left(U_{\beta+\beta'-\gamma}\theta_{\beta'}\right)^{[\beta'-\gamma]}
    \td x \notag\\
=&\,
  -\int
    \sum_{\beta=1}^{n-1}
    \sum_{s=0}^{\beta-1}
      U_s\theta_\beta
    \sum_{\gamma=s}^{\beta}
    \sum_{\afa=s+1}^{n-1}
    \sum_{\beta'=\gamma}^{n-1}
      \left(U_{\gamma+\afa-s}\theta_\afa\right)^{[\afa-s]}
      \left(U_{\beta+\beta'-\gamma}\theta_{\beta'}\right)^{[\beta'-\gamma]}
    \td x \notag\\
&\quad
  -\int
    \underbrace{
    \sum_{\beta=1}^{n-1}
    \sum_{\afa=\beta+1}^{n-1}
    \sum_{\beta'=\beta}^{n-1}
      \left(U_\beta\theta_\beta\right)^{[\beta]}
      \left(U_\afa\theta_\afa\right)^{[\afa]}
      \left(U_{\beta'}\theta_{\beta'}\right)^{[\beta']}
    }_{=0}
    \td x,
\label{A0 B0}\\
&\,
  \int \sum_{\gamma=0}^{n-1}
    A_\gamma^{(1)} B_\gamma^{(1)} \td x \notag\\
=&\,
  -\int
    \sum_{\beta=1}^{n-1}
    \sum_{s=\beta+2}^{n}
      U_s\theta_\beta
    \sum_{\gamma=\beta+1}^{s-1}
    \sum_{\afa=s-\gamma}^{\min\{s,n-1\}}
    \sum_{\beta'=\gamma-\beta}^{\gamma-1}
      \left(U_{\gamma+\afa-s}\theta_\afa\right)^{[\afa-s]}
      \left(U_{\beta+\beta'-\gamma}\theta_{\beta'}\right)^{[\beta'-\gamma]}
    \td x \notag\\
=&\,
  -\int
    \sum_{\beta=1}^{n-1}
    \sum_{s=\beta+2}^{n}
      U_s\theta_\beta
    \sum_{\gamma=\beta+1}^{s-1}
    \sum_{\afa=s-\gamma}^{s-1}
    \sum_{\beta'=\gamma-\beta}^{\gamma-1}
      \left(U_{\gamma+\afa-s}\theta_\afa\right)^{[\afa-s]}
      \left(U_{\beta+\beta'-\gamma}\theta_{\beta'}\right)^{[\beta'-\gamma]}
    \td x \notag\\
&\quad
  -\int
    \sum_{\beta=1}^{n-1}
    \sum_{s=\beta+2}^{n-1}
      U_s\theta_\beta
    \sum_{\gamma=\beta+1}^{s-1}
    \sum_{\beta'=\gamma-\beta}^{\gamma-1}
      U_\gamma\theta_s \cdot
      \left(U_{\beta+\beta'-\gamma}\theta_{\beta'}\right)^{[\beta'-\gamma]}
    \td x.
\label{A1 B1}
\end{align}

By using the following changes of summation indices
\begin{equation}\label{CZL 5-changes}
\begin{cases}
  \tilde\beta := \beta', \\
  \tilde s := \beta+\beta'-\gamma, \\
  \tilde\gamma := \beta'+s-\gamma, \\
  \tilde\afa := \beta, \\
  \tilde\beta' := \afa,
\end{cases}
\qquad
\begin{cases}
  \hat\beta := \afa, \\
  \hat s := \gamma+\afa-s, \\
  \hat\gamma := \afa+\beta-s, \\
  \hat\afa := \beta', \\
  \hat\beta' := \beta,
\end{cases}
\end{equation}
one can verify by careful calculations that
\begin{align}
&\,
  \int \sum_{\gamma=0}^{n-1}
    A_\gamma^{(0)} B_\gamma^{(1)} \td x \notag\\
=&\,
  -\int
    \sum_{\gamma=0}^{n-1}
    \sum_{\beta=1}^{\gamma-1}
    \sum_{s=0}^{\gamma}
    \sum_{\afa=s+1}^{n-1}
    \sum_{\beta'=\gamma-\beta}^{\gamma-1}
      U_s\theta_\beta \cdot
      \left(U_{\gamma+\afa-s}\theta_\afa\right)^{[\afa-s]}
      \left(U_{\beta+\beta'-\gamma}\theta_{\beta'}\right)^{[\beta'-\gamma]} \td x \notag\\
=&\,
  -\int
    \sum_{\tilde\beta=1}^{n-1}
    \sum_{\tilde s=0}^{\tilde\beta-1}
      U_{\tilde s}\theta_{\tilde\beta}
    \sum_{\tilde\gamma=-\infty}^{\tilde\beta}
    \sum_{\tilde\afa=\tilde s+1}^{n-1}
    \sum_{\tilde\beta' = \tilde\gamma+\tilde\afa-\tilde s+1}^{n-1}
      \left(U_{\tilde\gamma+\tilde\afa-\tilde s}\theta_{\tilde \afa}\right)^{[\tilde\afa- \tilde s]}
      \left(U_{\tilde\beta+\tilde\beta'-\tilde\gamma}\theta_{\tilde\beta'}\right)^{[\tilde\beta'-\tilde\gamma]} \td x,
\label{A0 B1}\\
&\,
  \int \sum_{\gamma=0}^{n-1}
    A_\gamma^{(1)} B_\gamma^{(0)} \td x \notag\\
=&\,
  -\int
    \sum_{\gamma=0}^{n-1}
    \sum_{\beta=\gamma}^{n-1}
    \sum_{s=\gamma+1}^{n}
    \sum_{\afa=s-\gamma}^{\min\{s,n-1\}}
    \sum_{\beta'=\gamma}^{n-1}
      U_s\theta_\beta \cdot
      \left(U_{\gamma+\afa-s}\theta_\afa\right)^{[\afa-s]}
      \left(U_{\beta+\beta'-\gamma}\theta_{\beta'}\right)^{[\beta'-\gamma]} \td x \notag\\
=&\,
  -\int
    \sum_{\hat\beta=1}^{n-1}
    \sum_{\hat s=0}^{\hat\beta-1}
      U_{\hat s}\theta_{\hat\beta}
    \sum_{\hat\gamma=\hat s}^{n-1}
    \sum_{\hat\afa=\hat s + 1}^{n-1}
    \sum_{\hat\beta' = \hat\gamma}^{\min\{\hat\gamma+\hat\afa-\hat s, n-1\}}
      \left(U_{\hat\gamma+\hat\afa-\hat s}\theta_{\hat \afa}\right)^{[\hat\afa- \hat s]}
      \left(U_{\hat\beta+\hat\beta'-\hat\gamma}\theta_{\hat\beta'}\right)^{[\hat\beta'-\hat\gamma]} \td x. 
\label{A1 B0}
\end{align}
Then following from \eqref{A0 B0}, \eqref{A0 B1}--\eqref{A1 B0}, and using \eqref{CZL 5-changes} again, we obtain
\begin{align}
&\,
  \int\sum_{\gamma=0}^{n-1}
    \left(
      A_\gamma^{(0)} B_\gamma^{(0)}
     -A_\gamma^{(0)} B_\gamma^{(1)}
     -A_\gamma^{(1)} B_\gamma^{(0)}
    \right)\td x \notag\\
=&\,
  \int
    \sum_{\beta=1}^{n-1}
    \sum_{s=0}^{\beta-1}
      U_{s}\theta_{\beta}
    \sum_{\gamma=-\infty}^{s-1}
    \sum_{\afa=s+1}^{n-1}
    \sum_{\beta' = \gamma+\afa-s+1}^{n-1}
      \left(U_{\gamma+\afa- s}\theta_{\afa}\right)^{[\afa-s]}
      \left(U_{\beta+\beta'-\gamma}\theta_{\beta'}\right)^{[\beta'-\gamma]} \td x \notag \\
&\quad
  +\int
    \sum_{\beta=1}^{n-1}
    \sum_{s=0}^{\beta-1}
      U_{s}\theta_{\beta}
    \sum_{\gamma=\beta+1}^{n-1}
    \sum_{\afa=s+1}^{n-1}
    \sum_{\beta' = \gamma}^{\min\{\gamma+\afa-s,n-1\}}
      \left(U_{\gamma+\afa- s}\theta_{\afa}\right)^{[\afa-s]}
      \left(U_{\beta+\beta'-\gamma}\theta_{\beta'}\right)^{[\beta'-\gamma]} \td x \notag \\
=&\,
    \int
    \sum_{\beta=1}^{n-1}
    \sum_{s=0}^{\beta-1}
    \sum_{\gamma=-\infty}^{s-1}
    \sum_{\afa=s+1}^{n-1}
    \sum_{\beta' = \gamma+\afa-s+1}^{n-1}
      \left(U_{\beta+\beta'-\gamma}\theta_{\beta'}\right)
      \left(U_{s}\theta_{\beta}\right)^{[\gamma-\beta']}
      \left(U_{\gamma+\afa- s}\theta_{\afa}\right)^{[\gamma+\afa-s-\beta']}
    \td x \notag \\
&\quad
  +\int
    \sum_{\beta=1}^{n-1}
    \sum_{s=0}^{\beta-1}
    \sum_{\gamma=\beta+1}^{n-1}
    \sum_{\afa=s+1}^{n-1}
    \sum_{\beta' = \gamma}^{\min\{\gamma+\afa-s,n-1\}}
      \left(U_{\gamma+\afa- s}\theta_{\afa}\right)
      \left(U_{\beta+\beta'-\gamma}\theta_{\beta'}\right)^{[\beta'-\gamma-\afa+s]}
      \left(U_{s}\theta_{\beta}\right)^{[s-\afa]}
    \td x \notag \\
=&\,
  \int
    \sum_{\tilde\beta=1}^{n-1}
    \sum_{\tilde s = \tilde\beta + 2}^{n}
      U_{\tilde s}\theta_{\tilde\beta}
    \sum_{\tilde\gamma=\tilde\beta+1}^{\tilde s - 1}
    \sum_{\tilde\afa=\tilde s - \tilde\gamma}^{\tilde s - 1}
    \sum_{\tilde\beta' = \tilde\gamma+\tilde\afa-\tilde s + 1}^{\tilde\gamma - 1}
      \left(U_{\tilde\gamma+\tilde\afa-\tilde s}\theta_{\tilde \afa}\right)^{[\tilde\afa- \tilde s]}
      \left(U_{\tilde\beta+\tilde\beta'-\tilde\gamma}\theta_{\tilde\beta'}\right)^{[\tilde\beta'-\tilde\gamma]} \td x \notag \\
&\quad
   +\int
    \sum_{\hat\beta=1}^{n-1}
    \sum_{\hat s= \hat\beta + 2}^{n}
      U_{\hat s}\theta_{\hat\beta}
    \sum_{\hat\gamma=\hat \beta + 1}^{\hat s -1}
    \sum_{\hat\afa=\hat s - \hat\gamma}^{\hat s}
    \sum_{\hat\beta' = \hat\gamma - \hat\beta}^{\min\{\hat\gamma+\hat\afa-\hat s, \hat\gamma-1\}}
      \left(U_{\hat\gamma+\hat\afa-\hat s}\theta_{\hat \afa}\right)^{[\hat\afa- \hat s]}
      \left(U_{\hat\beta+\hat\beta'-\hat\gamma}\theta_{\hat\beta'}\right)^{[\hat\beta'-\hat\gamma]} \td x \notag \\
=&\,
   \int
    \sum_{\beta=1}^{n-1}
    \sum_{s=\beta+2}^{n}
      U_s\theta_\beta
    \sum_{\gamma=\beta+1}^{s-1}
    \sum_{\afa=s-\gamma}^{s-1}
    \sum_{\beta'=\gamma-\beta}^{\gamma-1}
      \left(U_{\gamma+\afa-s}\theta_\afa\right)^{[\afa-s]}
      \left(U_{\beta+\beta'-\gamma}\theta_{\beta'}\right)^{[\beta'-\gamma]}
    \td x \notag\\
&\quad
  +\int
    \sum_{\beta=1}^{n-1}
    \sum_{s=\beta+2}^{n-1}
      U_s\theta_\beta
    \sum_{\gamma=\beta+1}^{s-1}
    \sum_{\beta'=\gamma-\beta}^{\gamma-1}
      U_\gamma\theta_s \cdot
      \left(U_{\beta+\beta'-\gamma}\theta_{\beta'}\right)^{[\beta'-\gamma]}
    \td x. \label{260118-1615}
\end{align}
Therefore, \eqref{A1 B1} and \eqref{260118-1615} imply
\begin{align*}
&\,
  \int\sum_{\gamma=0}^{n-1}
    A_\gamma B_\gamma \td x
=
  \int
    \sum_{\gamma=0}^{n-1}
      \left(
        A_\gamma^{(0)}B_\gamma^{(0)}
       -A_\gamma^{(0)}B_\gamma^{(1)}
       -A_\gamma^{(1)}B_\gamma^{(0)}
       +A_\gamma^{(1)}B_\gamma^{(1)}
      \right)
    \td x = 0,
\end{align*}
hence \eqref{CZL main identity 2} holds true,
and Lemma \ref{CZL main lemma} has been proved.

\end{document}